\newcommand{\prob}{{\bf \mbox{\bf Pr}}}
\begin{document}

\markboth{D. T. Lee, A. Goel}{Towards large-scale deliberative decision-making: small groups and the importance of triads}

\title{Towards large-scale deliberative decision-making: small groups and the importance of triads}
\author{ASHISH GOEL and DAVID T. LEE\affil{Stanford University}}

\begin{abstract}
Though deliberation is a critical component of democratic decision-making, existing deliberative processes do not scale to large groups of people. Motivated by this, we propose a model in which large-scale decision-making takes place through a sequence of small group interactions. Our model considers a group of participants, each having an opinion which together form a graph. We show that for median graphs, a class of graphs including grids and trees, it is possible to use a small number of three-person interactions to tightly approximate the wisdom of the crowd, defined here to be the generalized median of participant opinions, even when agents are strategic. Interestingly, we also show that this sharply contrasts with small groups of size two, for which we prove an impossibility result. Specifically, we show that it is impossible to use sequences of two-person interactions satisfying natural axioms to find a tight approximation of the generalized median, even when agents are non-strategic. Our results demonstrate the potential of small group interactions for reaching global decision-making properties.
\end{abstract}


%


\keywords{participatory and deliberative democracy, decision-making at scale, wisdom of the crowd, triadic consensus, small groups}

\begin{bottomstuff}
  This work was supported by the Army Research office Grant No. 116388, the Office of Naval Research Grant No. 11904718, and by the Stanford Cyber Initiative. David T. Lee was also supported by an NSF graduate research fellowship and a Brown Institute for Media Innovation Magic Grant.

  Author emails: \{ashishg,davidtlee\}@stanford.edu.
\end{bottomstuff}

\maketitle

\section{Introduction}

Recent years have seen an increase in the number of democratic innovations~\cite{Smith2009} aimed at increasing the participation of the public in policy-making. While the core argument of participatory democracy states that democracy is made more meaningful when citizens are directly involved~\cite{Pateman1979,Macpherson1977,FW03}, it has also been argued that participatory practices foster more engaged citizens~\cite{FW03} and can better harness the wisdom of the crowds~\cite{Landemore2012}. One class of these processes emphasizes deliberation as an essential component to democratic decision-making. Proponents argue that deliberation grants legitimacy to the democratic process~\cite{Gutmann2004}, can produce outcomes superior to those resulting from mere voting~\cite{F11}, and may be a means through which one can mediate conversations among polarized participants~\cite{Menkel-Meadow2011}. The latter is particularly desirable in light of the increasing polarization that has been observed in politics and society at large~\cite{PR84,MPR06}.

A major challenge for increasing participation is the problem of effectively scaling up to thousands or millions of people. Yet, while there have been many efforts at scaling up voting to large numbers of proposals~\cite{Conitzer2005,Service2012,Lu2011,Lee2014}, the problem of scaling up deliberation has been largely untouched. One part of the problem is that humans face fundamental cognitive and psychological barriers that prevent effective communication in large groups. For instance, it is known that humans can only hold a small number of concepts in their working memory~\cite{Miller56}, that there are fundamental limits to effective group sizes~\cite{Dunbar1992}, and that humans can become more disengaged when in a large group~\cite{LWH79}. 

Existing initiatives such as The Deliberative Poll~\cite{FLJ00} and the 21st Century Town Hall Meeting~\cite{LB02}, while successful on many counts, have typically operated at the scale of hundreds, or at most, thousands of participants. Even for these events, a major component of these processes involve further breaking down the gathered participants into small group discussions.

Motivated by this observation, this paper considers whether large-scale decision-making can take place through a sequence of small group interactions. Essentially, we take the view that the deliberative decision-making process can be, at its core, the interaction of many separate but intertwined deliberations. The use of small groups circumvents communication barriers, changing the problem of scalability from that of increasing group size to that of decreasing the number of small group interactions necessary. Thus, our use of small groups for scaling up deliberation can be thought of as analogous to the use of pairwise comparisons for scaling up voting when the number of proposals is very large~\cite{Lee2014}.
From a practical perspective, scaling deliberation via small groups would not have been feasible even 10 years ago. Today, however, advances in communication and Internet-based technologies such as video conferencing have made it trivial for arbitrary small groups of individuals to deliberate with one another, and moreover, to sequence these deliberations based on outcomes of prior small groups.  

In our model, participants each have an opinion which together form a graph. Participants prefer opinions that are closer to their own. Our goal is to find the generalized median using a sequence of small group interactions, where the generalized median is defined as the point minimizing sum of distances to the initial opinions.
Each small group is required to come to a small group decision, which is modeled by us in two distinct ways. For our impossibility results, we assume that agents do not behave strategically, and that the small group decision is therefore a (possibly stochastic) function of the opinions of the small group members. We assume that this function satisfies natural axioms. For our positive results, we allow strategic agents. This means that the small group decision is the result of the negotiation strategies chosen by the individuals in that group. Under this framework, our main findings uncover a sharp dichotomy among small group processes:
\begin{itemize}
  \item First, we obtain a surprising impossibility result for \emph{dyadic decision-making} even for non-strategic agents. We show that under natural axioms, no sequence of two-person interactions can accurately estimate the generalized median. This holds even for the simple case when opinions lie on a line.
  \item In contrast, for \emph{triadic decision-making}, we obtain a strong positive result even for strategic agents. We prove that when the small group decision is decided through majority rule dynamics, a small number of three-person interactions (an average of only $O(log^2 n)$ triads per person) finds a tight approximation of the generalized median. This holds not only when agents have opinions from a line, but for {\it every} graph for which every three individuals in the graph have a majority rule equilibrium (we prove this is identical to a class of graphs known as median graphs). The mechanism we give is shown to achieve our results under a subgame perfect Nash equilibrium.
\end{itemize}

Figure \ref{fig:triad} illustrates our mechanism for the simple case where the opinions of the voters lie on a line. Given a triad, our mechanism for this case reduces to each member voting among the other two. This results in the median participant winning two votes, and getting increased weight in later rounds. Details are described in later sections. To show our positive results for triadic decision-making, we first show that our mechanism induces a behavior in which every triad reports its true generalized median; hence our game-theoretic result is also valid if we postulate an small group decision dynamic where triads agree on their generalized median.

\begin{figure}[t]
  \centerline{\includegraphics[width=.5\textwidth]{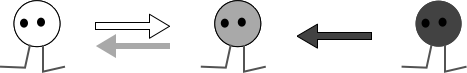}} 
  \caption{\small For the special case of triads on a line, our mechanism can be reduced to having each individual vote between the other two participants. This results in the median participant receiving two votes.}
   \protect\label{fig:triad}
\end{figure}

These findings, put together, illustrate the importance of triads for small group decision-making. The fact that only an extremely small number of triads is required is a positive indicator of the feasibility of achieving deliberative decision-making at scale. The fact that it is not achievable with dyadic interactions is surprisingly initially, but makes intuitive sense after the fact: when participating in a two-person interaction, it is impossible to protect against an extreme outlier. If two individuals are to be treated fairly, an arbitrarily unreasonable individual could greatly affect the decision made. On the other hand, three-person interactions can essentially ignore single outliers through simple processes such as majority rule.

Before moving on, we also want to put our results in the broader context of deliberative democracy. Our model is focused on large-scale decision-making through small group interactions, which we believe will play an important part of deliberative systems due to the benefits of back-and-forth dialogue. However, our results do {\it not} account for an information-based evolution of beliefs. Our main theorem (Theorem \ref{thm:triadic}) essentially states that, if groups of three can find their generalized median, then we can find the global generalized median {\it of initial opinions} by composing a small number of three-person decisions. If beliefs change, then one would really want to find the generalized median of {\it new opinions} after sufficient information exchange has occured. From a political theorist's perspective, deliberative democracy must involve a reasoned exchange of information, and not just negotiation~\cite{Gutmann2004}. Therefore, a critical step for the future is to consider models of decision-making where information is taken into account. It is also important to note that, while deliberation was traditionally viewed as an exchange of arguments among individuals, theorists have recently expanded that definition to encompass richer dynamics. In this view, political debates, news media, or dialogue over a social network (that has the potential to influence decision-making outcomes) can also be viewed as a part of a deliberative system. We believe that studying models for decision-making that incorporate richer, deliberative dynamics is an important question for the future.

\section{Model}

\subsection{Participant opinions and median graphs}

In our model, $n$ participants have opinions $V = \{x_1, x_2, \ldots, x_n\}$ which are nodes of a graph $G = (V, E)$. This graph is a representation of the opinion space. Each participant prefers opinions that are closer to his own, where the distance $d(x, y)$ is defined as the shortest path distance between opinions $x$ and $y$ in the graph. The participants participate in an algorithmically defined sequence of small group interactions, each of which results in a {\it small group decision}. When a stopping criterion is reached, a function maps the sequence of small group decisions to a {\it final consensus decision}.

A common example of an opinion space is to consider opinions on a line representing the liberal-conservative axis. This line could, for example, lie in a high-dimensional space (e.g. specific budget or policy proposals), so long as participants preferences are determined by distances on the line (how conservative or liberal the proposals are).

For our positive results on scaling decision-making using a sequence of triads (Theorem \ref{thm:triadic}), our results are only true for graphs where every three individuals in the graph have a majority rule equilibrium (also known as Condorcet winner), i.e. there exists an opinion that would receive two out of the three votes against any other opinion. This turns out to coincide to a class of graphs called median graphs (we prove this in Theorem \ref{thm:median-graph-characterization}). Median graphs include several common classes of graphs such as trees, grids, and squaregraphs. As a toy example, if one were deciding on ice-cream flavors, one  might represent the opinions as a two-dimensional grid graph where the two axes correspond to discrete levels of sweetness and fattiness. If people's preferences for ice-cream flavors are roughly determined independently by sweetness and fattiness (or any finite number of dimensions), then this could be represented by a median graph. However, there are many graphs which are not median graphs. For instance, any graph with a triangle cannot be a median graph.


\subsection{The generalized median as the wisdom of the crowd}

\begin{figure}
  \centerline{\includegraphics[width=.25\textwidth]{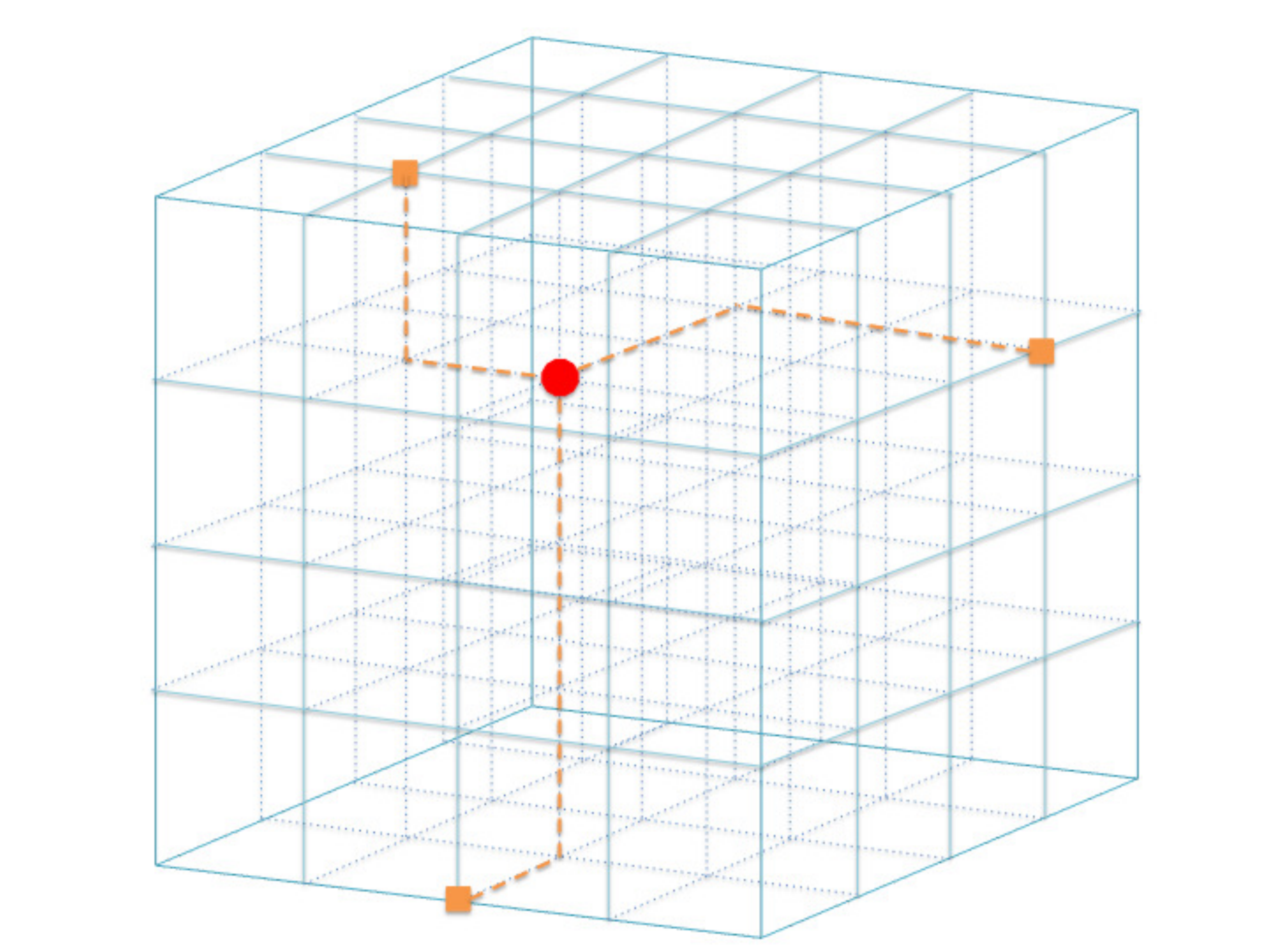}}
  \caption{\small The red circle is the generalized median of the three orange squares.}
  \label{fig:median}
\end{figure}

Our objective is to find the wisdom of the crowd, here defined to be the generalized median $x^*$, the opinion which minimizes the sum of distances to all of the participant opinions (see Figure \ref{fig:median}). Formally, $x^* = \arg\min_{x \in V} D(x)$, where $D(x) = \sum_{i=1}^n d(x, x_i)$. The decision-making process achieves a $(1+\epsilon)$-approximation of the generalized median if the final consensus decision is a point $\hat{x}$ for which $D(\hat{x}) \leq (1+\epsilon)D(x^*)$, i.e. $\hat{x}$ \emph{almost} minimizes the sum of distances. We seek sequences of small group interactions for which $\epsilon \to 0$ as the number of participants increases ($n \to \infty$).

The concept of the wisdom of the crowd~\cite{Surowiecki2005} originated in 1907 from Sir Francis Galton's paper~\cite{Galton1907} describing a competition at a carnival in which participants could guess the weight of an ox. As people made their estimates, Galton recorded them and observed that the median, which he called the {\it Vox Populi} or voice of the people, was remarkably close to the correct answer. Based on this, he hypothesized that an appropriate aggregation of a crowd's preferences can produce an extremely accurate estimate.

Much of the following theoretical work on the wisdom of the crowd has chosen to study the mean rather than the median~\cite{Golub2010}. We follow Galton, who stated that he preferred the median over the mean due to its robustness to outliers and the fact that the median is the Condorcet winner, the opinion which is preferred by at least half of the participants to any other opinion~\cite{Hooker1907,Galton1907b}. Since we consider opinions that lie in high-dimensional metric spaces, we take the wisdom of the crowd to be the \emph{generalized} median, which is also robust to outliers~\cite{Lopuhaa1991} and closely related to the Condorcet winner~\cite{Bandelt1984,Saban2012}. We also note that recent experimental studies also support the position that the median is more accurate than the mean~\cite{Lorenz2011}.

\subsection{Large-scale decision-making via small groups}

We will use an {\it LDSG process} to refer to a large-scale decision-making process conducted through small group interactions. In our process, each participant starts with $k$ tokens, which can be thought of as $k$ votes. Unless otherwise specified, we will assume $k=1$. A small group of participants is chosen by a (possibly stochastic) {\it small group selection function} $f$ to deliberate with one another, with the requirement that the members must come to a small group decision on a single participant to be their representative. This participant, who does not need to be one of the group members, is then given one token from each of the group members. We use $S_t$ to represent the $t$-th small group and $g_t$ to represent the small group decision that they come to.
Another small group is chosen (again, by the function $f$), which can be dependent on the past history of small group decisions. This process is repeated until one participant has all the tokens. This participant's opinion is returned as the final consensus decision $\hat{x}$. Formally,

\begin{definition}\label{def:ldsg-process}
  An LDSG process $L$ is a triple $L = (V, f, g)$, where $V$ is the set of participants $V = \{x_1, x_2, \ldots, x_n\}$, $f$ is the {\it small group selection function} which maps the sequence of past small groups and small group decisions to the next small group, and $\{g_t, t \geq 1\}$ are the small group decisions chosen by the $t$-th small group. Let $S_t$ denote the small group chosen to interact in the $t$-th round and let $y_i^t$ denote the participant holding the $i$-th token after the $t$-th round. Let $T$ be the round at which the process ends, and $\hat{x}$ denote the final consensus decision made. Then,
 \begin{align*}
   y_i^0 &= x_i,\\
   y_i^{t+1} &= \begin{cases}
     g_{t+1} &\text{ if $y_i^t \in S_{t+1}$}\\
     y_i^{t}   &\text{ otherwise}
   \end{cases},\\
   S_{t+1} &= f(\langle S_1, g_1, S_2, g_2, \ldots, S_t, g_t \rangle),\\
   T &= \min\{ t \mid y_1^t = y_2^t = \cdots = y_n^t \},\\
   \hat{x} &= y_1^T
 \end{align*}
\end{definition}

\subsection{Small group decision-making}

In our model's full generality, the small group decision $g_t$ can be modeled in two distinct ways. For non-strategic agents, we represent it by a {\it small group decision function} $g : S \to \mathcal{P}_V$, which maps any small group $S \subseteq V$ to a probability distribution $\mathcal{P}_V$ over the set of opinions $V$, i.e. $g_t = g(S_t)$.
For strategic agents, the decision of a small group is taken to be the output of a mechanism that the small group participates in. This output is a function of the strategic decisions that the small group participants choose to make. Since participants can base their strategies on the entire history of the LDSG game up to that time, $g_t$ is not necessarily a function solely of $S_t$.

For our dyadic impossibility results, we will consider non-strategic agents and show that no function $g(S_t)$ exists which simultaneously satisfies natural axioms and also results in tight approximation of the generalized median. In our result on triadic interactions, we will consider small groups which make their decision through a simple majority rule process. In the non-strategic case, $g(S_t)$ is assumed to be the majority rule equilibrium for that small group (described further in Section \ref{sec:triadic}). In the strategic case, we will give a formal representation of the entire sequence of small group majority rule processes as an infinite-horizon extensive form game (Section \ref{sec:triadic-strategic}). 

\section{Outline}

After describing further related work (Section \ref{sec:related-work}), the remainder of this paper will be structured as follows. Section \ref{sec:dyadic} will present our impossibility result on dyadic interactions. Section \ref{sec:triadic} contains our results on scaling deliberative decision-making with three-person {\it non-strategic} interactions. This section will formally define the majority rule process and equilibrium that motivates our model of the small group decision function $g(S_t)$. It will then use this to show that $O(n\log^2n)$ triads suffice to find a tight approximation of the generalized median for a large class of graphs. Section \ref{sec:triadic-strategic} expands on the prior section by considering the case when agents are strategic and show that we can achieve the same results under a subgame perfect Nash equilibrium. Finally, we conclude in Section \ref{sec:discussion} with a brief discussion of future directions.

\section{Related Work}\label{sec:related-work}

The problem of scaling up decision-making is a well-studied problem in the context of voting and preference elicitation. In this context, one is typically trying to approximate or calculate the output of a social choice function while only eliciting small amounts of information from voters (e.g. through pairwise comparisons)~\cite{Conitzer2005,Service2012,Lu2011,Lee2014}.
Our paper can be viewed as an attempt to create a thread of research mirroring preference elicitation, but for deliberative processes. Thus far, computational social choice has primarily viewed decision-making and preference elicitation from the perspective of efficiency, accuracy, and strategic issues. We propose that deliberation is a valuable new dimension to consider in social choice and provide a small step in this direction. We discuss this direction more in the concluding section. 

Median graphs have been studied before in the context of voting. For instance, it is known that for median graphs, the Condorcet winner is strongly related to the generalized median~\cite{Bandelt1984,Wendell1981,Saban2012}. Nehring and Puppe~\cite{Nehring2007} show that any single-peaked domain which admits a non-dictatorial and neutral strategy-proof social choice function is a median space. Clearwater et. al. also showed that any set of voters and alternatives on a median graph will have a Condorcet winner (their full result is stronger than this)~\cite{CPS2015}.

A similar high-level triadic decision-making process was proposed in a prior paper by the authors~\cite{Goel2012} and analyzed for the restricted case (described in Figure \ref{fig:triad}) of opinions on a line. In this paper, we consider richer small group decision dynamics such as majority rule. We also consider spaces more complex than the line and provide a general framework for analyzing deliberative decision-making that enables us to prove an impossibility result for dyads.

To our knowledge, we are not aware of other literature on algorithmic or mathematical models for scaling deliberation. However, the problem of scaling deliberation has been discussed in the political science community. Deliberation was initially conceptualized as an exchange of arguments among a small group of rational individuals. Recent developments acknowledge that deliberation should be thought of in richer ways, including as an activity that takes place at the system level, among groups.~\cite{MP12}. In other words, it has recently become possible to think of deliberation as a task that can be broken down into various components and performed at different social levels and by large numbers of individuals.
There have also been several practical initiatives for scaling deliberation such as Deliberative Polling~\cite{FLJ00}, in which a single representative sample of participants is brought together to deliberate, and the 21st Century Town Hall Meeting~\cite{LB02}, in which the entire set of participants are divided into tables of size $10$ for a single round of deliberation followed by voting.

The study of majority rule dynamics, which is an important part of deliberative decision-making, has also been a long-studied problem. A particular relevant experimental result analyzed the majority rule dynamic in groups of five and showed that the solution concept that performed best (out of 16 considered) was the majority rule equilibrium~\cite{Plott1978}.
Our work is a natural next step of this observation applied to the goal of scaling deliberative decision-making. Namely, given experimental evidence for the ability of small groups to come to consensus on the majority rule equilibrium, how can we use these small groups to make good decisions for larger groups?

Finally, we also note that our work has strong connections to opinion formation dynamics. Deliberative decision-making processes are essentially opinion formation dynamics for which one can algorithmically choose the sequence of interactions, rather than having them decided by a given social network~\cite{D74} or a nature-induced flocking process~\cite{T84}. Most models of opinion formation use a weighted averaging dynamic and consider the mean rather than the median. We note that our main result on the dichotomy between triadic and dyadic decision-making can be adapted for opinion formation dynamics. Namely, for any opinion formation dynamic to have a chance at approximating the median, individuals should talk to at least two people before they update their opinions. Otherwise, the group consensus will be too easily influenced by extreme opinions.

\section{Dyads cannot tightly approximate the generalized median}\label{sec:dyadic}

We now prove our impossibility result for dyadic decision-making. This result holds for all small group decision functions $g$ satisfying an axiom we call local consistency. Informally, local consistency captures the fact that the decision made among a set of non-strategic participants should only depend on the relationship structure of the participant opinions to each other. In other words, if one set of opinions is a ``translation'' of a second set of opinions, then the small group decision of the second set should be the same ``translation'' of the small group decision of the first set. This is depicted visually in Figure \ref{fig:local}.

More formally, we use the notion of convex sets and convex hulls to capture the {\it opinion structure} of a set of participants. A convex set is any set such that for any $x, y$ in the set, all shortest paths between $x$ and $y$ also lie in the set. The convex hull of $S$, denoted as $C_S$, is the smallest convex set containing $S$.



\begin{figure}
  \centerline{\includegraphics[width=.5\textwidth]{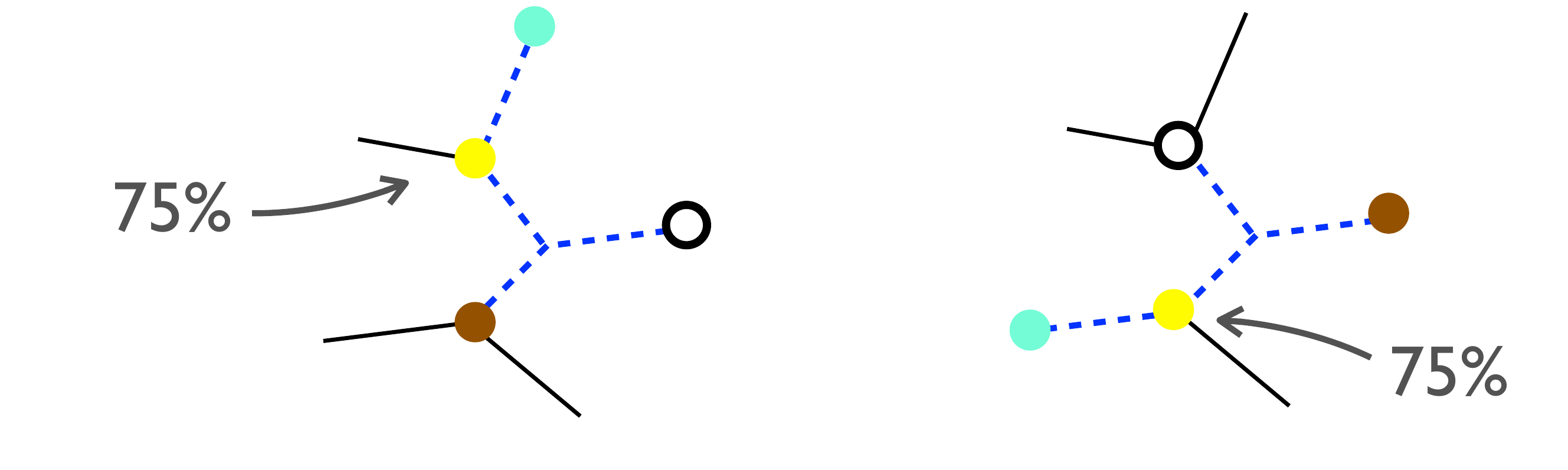}} 
  \caption{\small The green, brown, and white participants in the two separate graphs have the same opinion structure since their positions relative to their convex hulls (the dotted blue lines) are identical. In this example, a locally consistent small group decision function would have to assign the same probability to the event in which the participants choose the yellow participant in either graph.}
   \protect\label{fig:local}
\end{figure}

\begin{definition}
  $S$ and $S'$ are said to have the same opinion structure under $\psi$ if $\psi : C_S \to C_{S'}$ is an isometry (distance-preserving map) between $C_S$ and $C_{S'}$ such that $S' = \psi(S)$.
\end{definition}

\begin{definition}
  The small group decision function $g$ is said to be locally consistent if, (1) for any $S$, $g(S) \in C_S$, and (2) for any $S$ and $S'$ with the same opinion structure under $\psi$, $g(S') = \psi(g(S))$.
\end{definition}
For equally influential participants, this definition encompasses well-known mathematical models in opinion formation such as DeGroot and flocking~\cite{D74,T84}.
We also note that, while we only focus on equally influential participants in this paper due to space contraints, our impossibility results also generalize to participants with varying degrees of influence (with an appropriate generalization of the model to account for an influence parameter).

\begin{theorem}\label{thm:pairwise}
  Consider the LDSG process $L = (V, f, g)$ where $f$ is defined in any way such that $\lvert S_t \rvert = 2$ for all $t$, and $g_t = g(S_t)$ where $g(\cdot)$ is any locally consistent function. Then there exists a configuration of the participant opinions on a line graph such that the final consensus decision $\hat{x}$ satisfies 
  \[\mathbb{E}[D(\hat{x})] \geq \left(\frac{9}{8}-o(1)\right)D(x^*)\]
  where $x^*$ is the generalized median.
\end{theorem}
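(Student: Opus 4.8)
The plan is to first extract the structural content of local consistency on a line, then reduce the whole token dynamics to a random binary ``merge tree,'' and finally exhibit a configuration on which the \emph{unavoidable} randomness of the merges forces a constant-factor loss, since a first-moment argument alone will turn out to be insufficient.

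First I would pin down what local consistency forces on a line. The convex hull $C_{\{u,v\}}$ of a two-person group is exactly the subpath between their opinions, and the reflection exchanging $u$ and $v$ is an isometry of this hull fixing its structure; local consistency then forces $g(\{u,v\})$ to be a distribution that is supported in $[u,v]$ and symmetric about the midpoint $(u+v)/2$. Two consequences drive everything. First, $\mathbb{E}[g(\{u,v\})]=(u+v)/2$ no matter how many tokens the two clusters carry, i.e.\ the decision is ``weight-blind.'' Second, whenever the midpoint and its immediate neighborhood are unavailable as admissible vertices, the decision cannot be (close to) a point mass near the center, so it must \emph{spread}. Viewing the run as a binary tree whose leaves are the participants and whose internal nodes are the merges, the first consequence yields the first-moment identity $\mathbb{E}[\hat{x}]=\sum_{\ell}2^{-d_\ell}x_\ell$, where $d_\ell$ is the depth of leaf $\ell$. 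Since $D$ is convex and piecewise linear, Jensen gives $\mathbb{E}[D(\hat{x})]\ge D(\mathbb{E}[\hat{x}])$ — but this is exactly where the theorem is subtle, because $f$ can choose the tree so that $\sum_\ell 2^{-d_\ell}x_\ell=x^*$. Hence the loss cannot come from a biased mean; it must be squeezed out of the spread of $\hat{x}$ about its mean, using the forced-spreading consequence above.

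Accordingly, I would construct a configuration of \emph{constant diameter} (all opinions within $O(1)$ of one another) placing asymmetric masses on a few vertices, chosen so that the occupied vertices have pairwise-distinct values of $D$ differing by constant factors, and so that the optimal vertex $x^*$ cannot be reconstituted with probability $1-o(1)$ by any schedule. Because the diameter is constant, \emph{every} vertex has $D(\cdot)=\Theta(D(x^*))$, so an $\Omega(1)$ probability of finishing on a suboptimal vertex immediately produces an $\Omega(1)$-factor excess in $D$ (unlike the wide-spread configurations, where far clusters make all $D$-values nearly equal and any excess is washed out). I would then combine the weight-blind mean-martingale (to constrain which biases $\mathbb{E}[\hat{x}]$ are attainable) with a per-merge variance floor coming from forced spreading (to lower-bound the probability that $\hat{x}$ lands on a costly vertex), and optimize the masses so the two effects balance at the extremal ratio; the extremal case should be a final unavoidable fair coin between $x^*$ (cost $D(x^*)$) and a vertex of cost $\tfrac54 D(x^*)$, giving $\tfrac12 D(x^*)+\tfrac12\cdot\tfrac54 D(x^*)=\tfrac98 D(x^*)$.

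The hard part, concentrated in the previous step, is making the lower bound hold \emph{uniformly} over all selection functions $f$ and all locally consistent $g$. A cunning schedule can try to defeat the bound in two ways: by exploiting merges whose midpoint genuinely is an admissible vertex to rebuild $x^*$ deterministically, or by burying a minority cluster deep in the tree to marginalize its contribution to $\mathbb{E}[\hat{x}]$. The crux is therefore an invariant showing that, for the chosen configuration, neither tactic can drive the probability on costly vertices below the threshold that yields $\tfrac98$. I expect establishing this invariant — essentially a potential argument marrying the weight-blind first-moment identity to the forced per-merge variance — to be the main technical obstacle, with the final optimization over the configuration's masses being routine once the invariant is in hand.
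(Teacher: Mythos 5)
Your first step --- extracting from local consistency that $g(\{u,v\})$ is supported on the subpath $[u,v]$ and symmetric about the midpoint, hence has expectation $(u+v)/2$ --- matches the paper exactly. The divergence, and the gap, comes immediately after: you discard the first-moment/Jensen route on the grounds that the scheduler can tune the merge-tree depths so that $\sum_\ell 2^{-d_\ell}x_\ell = x^*$. That weighted identity is not the right first moment for this process. There are always $n$ tokens; a dyadic round replaces the two selected tokens' positions $p,q$ by two copies of a point whose conditional mean is $(p+q)/2$, so $\sum_i y_i^t$ is a martingale and $\mathbb{E}[\hat{x}] = \frac{1}{n}\sum_i x_i$ --- the \emph{unweighted} arithmetic mean --- for every selection function $f$ and every locally consistent $g$. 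The adversary has no handle on the mean at all, so the loss does come from a biased mean, namely the gap between the mean and the median. This is precisely the paper's proof: $Y_t=\frac1n\sum_i y_i^t$ is a martingale, $D$ is convex, so $\mathbb{E}[D(\hat{x})]\ge D(Y_0)$; taking $k{+}1$ opinions at $0$ and one at each of $1,\dots,k$ gives $Y_0\approx k/4$, $x^*=0$, and $D(k/4)=(\tfrac98-o(1))D(0)$. The $9/8$ is the mean-versus-median ratio for that instance, not the payoff of a terminal fair coin between costs $D(x^*)$ and $\tfrac54 D(x^*)$.

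Having turned away from the route that works, your replacement is not completable as described. The ``forced per-merge variance floor'' is false in general: on the integer line, when the two merging positions are at even distance the midpoint is a vertex of the convex hull and local consistency permits $g$ to be a deterministic point mass there, so an adversarial schedule can in principle dodge the randomness you rely on. The uniform-over-$(f,g)$ invariant that is supposed to rule this out is exactly the part you flag as the main technical obstacle and leave unproved; and a constant-diameter configuration still has to deliver the bound with an $o(1)$ error as $n\to\infty$, for which you give no concrete instance or verification. The fix is simply to keep Jensen: once you observe that the conserved quantity is the uniform average of all $n$ token positions (not a tree-weighted average), the whole variance apparatus is unnecessary and the counterexample reduces to exhibiting a distribution on the line whose mean is a $9/8$-factor worse minimizer of $D$ than its median.
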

\begin{proof}
  The proof is by constructing a counterexample (see Appendix \ref{appsec:dyadic-proof}).
\end{proof}

In other words, simple examples exist for which the error $\epsilon$ is always lower bounded by $1/8$, regardless of how many participants are involved or how many small group interactions take place. 
We pont out that a constant factor approximation is not satisfying since, in many scenarios, even the {\it worst} candidate is a constant factor approximation.


\section{$O(n\log^2n)$ triads tightly approximate the generalized median}\label{sec:triadic}

In contrast to the impossibility result concerning dyadic decision-making, we find that triadic decision-making under a simple majority rule process is able to find an extremely tight approximation of the generalized median while only requiring an average of $O(\log^2 n)$ triads per participant. 

The main step in the proof is to first assume that every triad decides on their (local) generalized median. We show that if this is true, then we can find a tight approximation of the (global) generalized median. This assumption is justified by showing that the generalized median is the equilibrium of the majority rule process for any small group coming from the preference domain we are considering (median graphs). This section will only consider non-strategic agents. Section \ref{sec:triadic-strategic} will extend this to the strategic case.

\subsection{Triadic decision-making}\label{sec:triadic-deliberation}

The small group selection function $f$ we use for selecting small groups is a simple randomized process. At each step, three tokens will be selected uniformly at random with replacement. The participants holding these tokens form the group $S_t$ at time $t$. 

Our result (Theorem \ref{thm:triadic}) in Section \ref{subsec:scalability-theorem} does not depend on any specific small group process, only on the assumption that every three-person interaction decides on the generalized median. This section, however, will motivate the generalized median by considering small groups which make their final decision through a specific process: majority rule (described below, and depicted in Figure \ref{fig:ldsg-game}). We model the small group decision function as the majority rule equilibrium and consider only those graphs for which a majority rule equilibrium exists for every group of three. For such graphs, it turns out that the majority rule equilibrium coincides with the generalized median. Thus, we have the following formalization.

\begin{definition}\label{def:tdd}
  Define triadic decision-making as an LDSG process $L = (V, f, g)$ where $g_t = g(S_t) = \text{generalized-median}(S_t)$. For round $t+1$, let $I_{t+1} = \{u, u', u''\}$ denote three tokens selected uniformly at random with replacement, i.e. $u, u', u'' \overset{i.i.d.}{\sim} U_n$, where $U_n$ denotes the uniform distribution over integers $1, 2, \ldots, n$. Then define $f$ such that, for all $t$,
  \begin{align*}
    S_{t+1} = \{ y_{u}^{t}, y_{u'}^{t}, y_{u''}^{t}\}
  \end{align*}
\end{definition}

\begin{figure}[t]
  \centerline{\includegraphics[width=.4\textwidth]{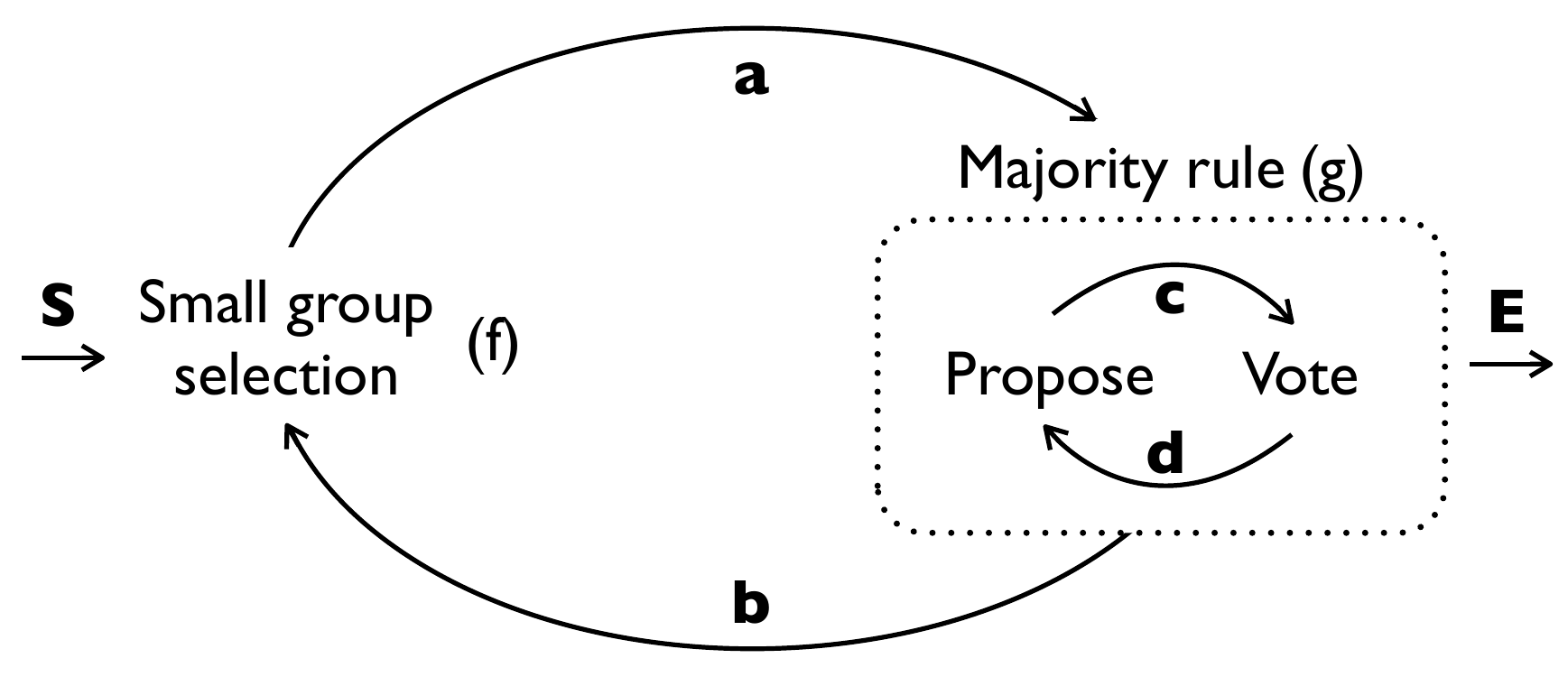}}
  \caption{\small A block diagram of triadic decision-making with majority rule. The process starts (S) with selecting a small group (``Small group selection''). After selecting a small group (a), the small group participates in the majority rule process which starts with a ``Propose'' action. After a proposal (c), all three participants ``Vote'' on the proposal. If there was no motion to end which was accepted by two of the three participants, then another proposal is taken (d). This repeats until a participant motions to end, and two of the three participants accept. If all tokens now belong to a single participant, the overall process ends (E). Otherwise, another small group is selected (b) and the new group of participants engage in the majority rule process. Extensive form representations of each block are sketched in Figure \ref{fig:ldsg-extensive-form}.}
  \label{fig:ldsg-game}
\end{figure}

Majority rule is a classical process for making a decision in a group. Majority rule occurs in rounds in which participants repeatedly suggest an alternate decision to some status quo. If at least two of the three participants vote for the alternate, then the status quo is replaced with the alternative. 
In practice, there are many variations in the way majority rule is implemented. For concreteness, we consider the following majority rule process for groups of three participants. 

\begin{itemize}
  \item \emph{(Initial)} At the beginning, two participants are chosen uniformly at random. One is chosen to be the {\it current winner} and the other is chosen to be the {\it proposer},
  \item \emph{(Propose)} At the start of each round, the proposer proposes an alternative winner,
  \item \emph{(Vote)} All three participants then simultaneously vote for either the current winner or the suggested alternative,
  \item \emph{(Update)} The proposal which received the majority of the votes becomes the new current winner. If the vote was unanimous, then the proposer remains the same. Otherwise, the participant who lost the vote becomes the new proposer.
  \item \emph{(Repeat)} Another round takes place starting from the Propose step,
  \item \emph{(Termination)} During any round, the proposer can propose to end the process instead of proposing an alternative. If he does, and if a majority votes to accept, then the process ends. Otherwise, the proposer is updated according to the same rule in the Update step and another round takes place starting from the Propose step.
\end{itemize}

A majority rule equilibrium (MRE), also known as the Condorcet winner, is a proposal which would beat any other proposal, i.e. always receives a majority of the three votes cast. Whenever this proposal is voted on, it will win. Therefore, it is a commonly accepted solution concept for the decision resulting from a majority rule dynamic.

For this paper, we will consider only graphs for which every group of three participants has a majority rule equilibrium. It turns out that this requirement coincides exactly with a class of graphs known as {\it median graphs}~\cite{Bandelt2008,Knuth2011} (see Theorem \ref{thm:median-graph-characterization}), which include common classes of graphs such as trees, grids, and squaregraphs. Moreover, for median graphs, the majority rule equilibrium of any three participants coincides with the generalized median. 

\begin{theorem}\label{thm:median-graph-characterization}
  Consider any unweighted and undirected graph $G$. Let $P$ denote the property in which every set of three nodes has a Majority Rule Equilibrium (Condorcet winner). Then,
  \[\text{$G$ satisfies $P$ if and only if $G$ is a median graph}\]
  Moreover, the Condorcet winner of any set of three nodes is the unique generalized median for that set.
\end{theorem}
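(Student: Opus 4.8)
The plan is to reduce everything to the classical notion of a \emph{median vertex}. Write $I(x,y)$ for the geodesic interval, i.e. the set of vertices lying on some shortest $x$--$y$ path, and call $m$ a median of the triple $\{u,v,w\}$ if $m \in I(u,v)\cap I(v,w)\cap I(u,w)$; then (for connected $G$) being a median graph is exactly the statement that every triple has precisely one such $m$, so both directions reduce to the triple-level equivalence ``unique median $\Leftrightarrow$ majority rule equilibrium.'' The engine of the whole argument is one betweenness lemma. Fix a triple, a median $m$, an arbitrary alternative $z'$, and set $\delta_i = d(z',x_i)-d(m,x_i)$ for $x_i \in \{u,v,w\}$. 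Since $m\in I(u,v)$ we have $d(m,u)+d(m,v)=d(u,v)\le d(z',u)+d(z',v)$, i.e. $\delta_u+\delta_v\ge 0$, and symmetrically $\delta_v+\delta_w\ge 0$ and $\delta_u+\delta_w\ge 0$. From these three inequalities I would read off two facts: (i) at most one $\delta_i$ is negative, so at most one voter strictly prefers $z'$; and (ii) if exactly one $\delta_i$ is negative, the other two are strictly positive. Consequently a median is \emph{never defeated}, and it strictly beats (a majority of the expressed preferences) every $z'$ that is not equidistant to all three voters; summing the inequalities also gives $D(z')\ge D(m)$, so a median is automatically a generalized median.

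For the $(\Leftarrow)$ direction, assume $G$ is a median graph, so the triple has a unique median $m$. By the lemma $m$ beats every alternative except possibly a $z'$ with all $\delta_i=0$; but $\delta_u=\delta_v=0$ together with the interval equalities forces $d(z',u)+d(z',v)=d(u,v)$, and likewise for the other pairs, so such a $z'$ would itself be a median, whence $z'=m$ by uniqueness. Thus $m$ strictly beats every $z'\ne m$ and is the majority rule equilibrium, and the same computation yields $D(z')>D(m)$ for all $z'\ne m$, so $m$ is the unique generalized median. This settles the ``if'' direction and the ``moreover'' simultaneously.

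For the $(\Rightarrow)$ direction I would argue the contrapositive: if $G$ is not a median graph, some triple has either at least two medians or none, and in each case I exhibit the absence of a Condorcet winner. If there are two distinct medians $m_1,m_2$, the lemma shows both are undefeated; an equilibrium would have to beat $m_1$, which is impossible, so no equilibrium exists. If instead the triple has no median, I would show that any candidate equilibrium $z$ must in fact be a median, a contradiction. Here I use the full strength of the equilibrium condition: for the neighbor $z_1$ of $z$ one step toward $u$, we have $d(z_1,u)=d(z,u)-1$, so $u$ prefers $z_1$, and since $z$ beats $z_1$ at least two voters must prefer $z$, forcing $d(z_1,v)=d(z,v)+1$ and $d(z_1,w)=d(z,w)+1$. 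Thus moving toward any one voter strictly increases the distance to the other two; in particular $z$ is a strict local minimum of $D$, and I would bootstrap this local coordinate structure, together with connectivity, into the global betweenness relations $d(z,v)+d(z,w)=d(v,w)$ (and the two symmetric ones), placing $z$ in all three intervals and contradicting that the triple had no median.

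The routine parts are the betweenness lemma and the $(\Leftarrow)$ direction, which are short (and the $(\Leftarrow)$ statement already appears in the cited literature on Condorcet winners in median spaces). The main obstacle is the no-median case of $(\Rightarrow)$: showing that an equilibrium must lie on all three geodesic intervals. The difficulty is that a vertex lying off an interval need \emph{not} be strictly dominated by any single alternative---for example in $C_5$ a vertex off the interval of an opposite pair is beaten by no neighbor---so I cannot simply produce a dominating competitor. The argument must instead exploit that the equilibrium beats \emph{every} vertex, not merely its neighbors, and convert the local ``moving toward one voter repels the other two'' condition into global geodesic membership; should a fully self-contained version of this step become unwieldy, I would fall back on the established correspondence between Condorcet winners and medians on median graphs from the cited work.
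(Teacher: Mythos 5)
Your overall architecture is sound, and your betweenness lemma (the three inequalities $\delta_u+\delta_v\ge 0$, $\delta_v+\delta_w\ge 0$, $\delta_u+\delta_w\ge 0$) cleanly handles the $(\Leftarrow)$ direction, the ``two medians'' case of $(\Rightarrow)$, and the ``moreover'' claim; this is in fact a tidier packaging than the paper's, which proves ``Condorcet winner $\Rightarrow$ unique generalized median'' by a separate and more laborious case analysis. However, there is a genuine gap exactly where you flag it: the no-median case of $(\Rightarrow)$, i.e.\ showing that a Condorcet winner $c$ for $\{x,y,z\}$ in an arbitrary unweighted graph must lie in $I_{xy}\cap I_{xz}\cap I_{yz}$. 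Your local argument (step to a neighbor $z_1$ of $c$ toward $u$ and use that $c$ beats $z_1$) only yields that every neighbor closer to one voter is farther from the other two; as your own $C_5$ example shows, this local repulsion does not by itself force global geodesic membership, and the bootstrapping step is not supplied. Falling back on the cited literature would also not discharge the theorem as stated, since that correspondence is proved for median graphs, whereas here you must reason about an arbitrary graph in which the triple has no median.

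The paper closes this gap with a short direct argument (Lemma \ref{lem:condorcet-implies-shortest}) that you could adopt: suppose $d(x,c)+d(c,y)>d(x,y)$. First compare $c$ against the voter $y$ itself: since $y$ votes for $y$, both $x$ and $z$ must strictly prefer $c$, so in particular $d(x,c)<d(x,y)$. Because the graph is unweighted, there is a vertex $p$ on a shortest $x$--$y$ path with $d(x,p)=d(x,c)$; then $d(y,p)=d(x,y)-d(x,p)<d(y,c)$, so $y$ strictly prefers $p$ while $x$ is indifferent between $p$ and $c$, and $c$ cannot obtain a strict majority over $p$ --- contradicting that $c$ is a Condorcet winner. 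The two moves your sketch is missing are: using a \emph{voter} as the alternative to pin down $d(x,c)<d(x,y)$, and then using unweightedness to place a competitor at the exact distance $d(x,c)$ along the $x$--$y$ geodesic. With that lemma inserted in place of your local-to-global step, the rest of your proof goes through.
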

\begin{proof}
  See Appendix \ref{appsec:median-graph-characterization} for the proof.
\end{proof}

Thus, we let $g_t = g(S_t) = \text{majority-rule-equilibrium}(S_t) = \text{generalized-median}(S_t)$. As is also mentioned in the proof of Theorem \ref{thm:strategic}, for any one of the three individuals (say, $x$), the generalized median is also the closest point to $x$ that lies between $y$ and $z$. Because of this, it is reasonable to suppose that the small group of individuals can find their equilibrium quickly.

\subsection{Scalability theorem for triadic decision-making}\label{subsec:scalability-theorem}

We now show that a small number of triadic interactions can tightly approximate the generalized median. We highlight the stark contrast between the approximation achieved by triadic decision-making, for which the error $\sqrt{c\log n/n}$ tends to $0$ as $n$ becomes large, and dyadic decision-making, for which the error is lower bounded by $1/8$.

\begin{theorem}\label{thm:triadic}
  Consider a triadic decision-making process $L = (V, f, g)$ where $n$ participants $V$ form a median graph. Then, with probability at least $\displaystyle 1-\frac{n\log n}{n^c}$, $c \geq 2$,
  \begin{align*}
    T = O(cn\log^2 n)\quad\text{and}\quad D(x^*) \leq &D(\hat{x}) \leq \left(1+O\left(\sqrt{\frac{c\log n}{n}}\right)\right)D(x^*)
  \end{align*}
\end{theorem}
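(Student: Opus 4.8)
The plan is to exploit the coordinate structure of median graphs. First I would invoke the standard fact that a median graph embeds isometrically into a hypercube $\{0,1\}^d$ through its split (hyperplane) decomposition, so that $d(x,y)$ equals the Hamming distance of the embedded vectors and the median of any triple---which by Theorem~\ref{thm:median-graph-characterization} is exactly the small group decision $g_t$---is the coordinatewise majority. Under this embedding the objective decomposes: $D(x)=\sum_j |\{i : (x_i)_j \neq x_j\}|$, so $x^*$ is the coordinatewise majority of the initial opinions and each coordinate $j$ contributes its minority count $M_j=\min(N^{(j)}_0, n-N^{(j)}_0)$ to $D(x^*)$, where $N^{(j)}_t$ is the number of tokens carrying bit $1$ in coordinate $j$ after round $t$. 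Since the terminal consensus $\hat x$ is a single node, all coordinates have reached agreement, and the entire problem reduces to determining, per coordinate, which of the two values survives and how long agreement takes.

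Fix a coordinate $j$. Because the three tokens are sampled uniformly at random, their bits are i.i.d.\ Bernoulli$(p_t)$ with $p_t=N^{(j)}_t/n$, and pushing the sampled tokens to the majority bit makes $N^{(j)}_t$ a birth--death walk on $\{0,\dots,n\}$ with $\Pr[+1]=3p_t^2(1-p_t)$, $\Pr[-1]=3p_t(1-p_t)^2$, absorbing at $0$ and $n$, whose drift $3p_t(1-p_t)(2p_t-1)$ points toward the initial majority. The crucial estimate is the probability of converging to the minority value. I would compute the scale function: the down/up ratio is $(n-k)/k$, so $\gamma_k=\prod_{1\le l\le k}(n-l)/l=\binom{n-1}{k}$ is harmonic, and optional stopping gives $\Pr[\text{coordinate }j\text{ converges to the minority}]=\Pr[\mathrm{Bin}(n-1,\tfrac12)\ge \max(N^{(j)}_0,n-N^{(j)}_0)]$, which a Chernoff bound bounds by $\exp(-\Delta_j^2/2n)$ with $\Delta_j=n-2M_j$ the imbalance.

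To assemble the approximation guarantee, call a coordinate \emph{unbalanced} when $\Delta_j\ge \sqrt{2cn\log n}$ and \emph{balanced} otherwise. Each unbalanced coordinate then errs with probability at most $n^{-c}$, so a union bound over the polynomially many active coordinates shows that, with probability $1-O(n\log n/n^c)$, every unbalanced coordinate converges correctly and contributes exactly $M_j$ to $D(\hat x)$. All error is thus confined to balanced coordinates, each of which, even when wrong, adds only $\Delta_j<\sqrt{2cn\log n}$ to $D(\hat x)$ while still contributing $M_j\ge (n-\sqrt{2cn\log n})/2$ to $D(x^*)$; hence even if every balanced coordinate errs, $D(\hat x)/D(x^*)\le 1+\max_j \Delta_j/M_j = 1+O(\sqrt{c\log n/n})$ holds deterministically. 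Combined with $D(\hat x)\ge D(x^*)$ (optimality of $x^*$), this gives the stated ratio.

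For the time bound I would use that full consensus occurs exactly when the last coordinate is absorbed, so $T=\max_j \tau_j$ with $\tau_j$ the absorption time of coordinate $j$. Estimating $\mathbb{E}[\tau_j]=O(n\log n)$ is the delicate step: when the minority count is $M$ a round changes $N^{(j)}_t$ with probability $\Theta(M/n)$, giving $\Theta(n/M)$ rounds per unit decrease and $O(n\log n)$ total once the walk has a drift, while near balance the outward drift $\approx 3(N^{(j)}_t-n/2)/2n$ lets the walk escape its diffusive $\sqrt n$-window in a further $O(n\log n)$ rounds. Restarting the resulting Markov-inequality tail bound $O(c\log n)$ times boosts this to $\tau_j=O(cn\log^2 n)$ with failure probability $n^{-c}$, and a union bound over coordinates yields $T=O(cn\log^2 n)$. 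I expect the main obstacle to be exactly this hitting-time analysis---a random walk whose drift vanishes at the balance point---together with justifying the per-coordinate reduction against the actual token dynamics, where an entire node's tokens (not a single token) move in a round and thereby couple the coordinates through the node weights; I would address the latter by verifying that the per-coordinate minority count remains a supermartingale, so that the marginal tail and union-bound arguments continue to apply.
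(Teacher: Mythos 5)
Your proposal is correct and follows essentially the same route as the paper: your hypercube coordinates are exactly the paper's $E^*$ edge cuts (Lemmas \ref{lem:winsets-4} and \ref{lem:winsets-5}), your per-coordinate birth--death walk with absorption probability $\Pr[\mathrm{Bin}(n-1,\tfrac12)\ge\cdot]$ is the paper's Lemma \ref{lem:urn-function}, the balanced/unbalanced case split is Lemma \ref{lem:edge-approx}, and the restarted-Markov boosting of the $O(n\log n)$ expected hitting time is Lemma \ref{lem:edge-markov-chain}. The coupling issue you flag at the end is benign for the same reason the paper's argument works: in each round only the three sampled tokens move to the triple's median, so each coordinate's token count is marginally a Markov chain with exactly the transition probabilities you wrote, and the union bound over the $O(n\log n)$ active coordinates absorbs the cross-coordinate dependence.
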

\begin{proof}
A high-level understanding of our proof is that, for median graphs, the given urn dynamic can be reduced to several (simpler) coupled urn dynamics, represented by edges in $E^*$. $E^*$ (defined in Appendix \ref{appsec:median-graphs}) can be thought of roughly as ``independent dimensions'' of a median graph. Most of the proof is then tying the approximation factor or time of the overall urn process to those of the simpler coupled urn dynamics.

Our proof will be split into several lemmas. The lower bound $D(\hat{x}) \geq D(x^*)$ is trivial by the definition of $x^*$ as the generalized median. 
To prove the upper bound, we first rewrite $D(x)$ as,
\begin{align*}
D(x) = \sum_{i=1}^n d(x, x_i) = \sum_{i=1}^n \sum_{e\in E^*} \mathbbm{1}\{e \in I_{xx_i}\} = \sum_{e \in E^*} \lvert \{ i \mid e \in I_{xx_i}\} \rvert
\end{align*}
where $I_{xy}$ is the set of nodes lying on a shortest path between $x$ and $y$ (see Appendix \ref{appsec:definitions}). The second equality comes from a property of median graphs stating that $d(x, y) = \sum_{e\in E^*} \mathbbm{1}\{e \in I_{xy}\}$ (Lemma \ref{lem:winsets-4} in the Appendix). 
From this, it follows that
\begin{align*}
  \frac{D(\hat{x})}{D(x^*)} \leq \max_{e \in E^*} \frac{\lvert \{ i \mid e \in I_{\hat{x}x_i}\} \rvert}{\lvert \{ i \mid e \in I_{x^*x_i}\} \rvert}
\end{align*}
In Lemma \ref{lem:edge-approx}, we show that
\begin{align*}
  \prob\left[\frac{\lvert \{ i \mid e \in I_{\hat{x}x_i}\} \rvert}{\lvert \{ i \mid e \in I_{x^*x_i}\} \rvert} > \left(1+O\left(\sqrt{\frac{c\log n}{n}}\right)\right)\right] \leq \frac{1}{n^c},
\end{align*}
We can now achieve our upper bound via a union bound on all edges of $E^*$, noting that $\lvert E^* \rvert \leq \lvert E \rvert \leq n\log n$ (Lemma \ref{lem:winsets-3}).

To bound $T$, we first define $T_{e = (u, v)}$ as the first time $t$ at which all tokens $y_i^t$ lie in either $W_{uv}$ or $W_{vu}$ (informally, either closer to node $u$ or node $v$, see Definition \ref{def:winsets}). By Lemma \ref{lem:winsets-2}, we have
\begin{align*}
  T = \max_{e \in E^*}T_e
\end{align*}
We show in Lemma \ref{lem:edge-markov-chain} that $T_e = O(cn\log^2 n)$ with probability at least $1-\frac{1}{n^c}$. Then we can achieve our result $T = O(cn\log^2 n)$ via a union bound on all edges of $E^*$.
\end{proof}

When one considers grid graphs and trees, one can derive remarkably strong results using roughly the same proof techniques. Triadic decision-making is essentially able to find the {\it exact} generalized median with high probability, a very surprising result given the extreme randomness of the process.


\begin{theorem}
  Consider $n = 9k^2$ participants which make up a $k \times k$ grid, so that each point in the grid has $9$ participants. Assume $k$ is odd. Then with probability greater than 99.4\%, the winning participant $\hat{x}$ is the exact generalized median $x^*$ (the middle point of the grid) {\it regardless of how large $k$ is}.
\end{theorem}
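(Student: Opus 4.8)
The plan is to exploit the machinery already built for Theorem~\ref{thm:triadic}, specializing it to the grid, where the structure $E^*$ is fully explicit. First I would record that for $k$ odd the center is the exact and \emph{unique} generalized median: since the $k\times k$ grid embeds isometrically into a product of two paths, $D(x)$ decomposes coordinatewise as $D(x) = 9\sum_{p}\big(\lvert x_1 - p_1\rvert + \lvert x_2 - p_2\rvert\big)$, and each one-dimensional sum is uniquely minimized at the middle coordinate $(k+1)/2$ precisely because $k$ is odd (an even $k$ would produce a tie). The $\Theta$-classes of $E^*$ are exactly the $k-1$ vertical and $k-1$ horizontal cuts, each splitting the grid into two half-grids $W_{uv}, W_{vu}$. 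Because a vertex of a median graph is determined by the side of every cut on which it lies, and the process terminates exactly when all tokens agree on every cut, we have $\hat x = x^*$ if and only if every cut is ``won'' by (absorbed on) the side containing the center. Hence, by a union bound, it suffices to upper bound, for each cut, the probability that the token mass is absorbed on the minority side.

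Next I would analyze a single cut in isolation. Projecting the triadic move onto one $\Theta$-class, the generalized median of three tokens has coordinate equal to the \emph{majority} of the three projected coordinates (the median operation is coordinatewise on the cube embedding), and all three tokens move to it. Writing $U_t$ for the number of tokens on the $u$-side after round $t$ and $p = U_t/n$, the three i.i.d.\ draws give $\prob[\Delta U = +1] = 3p^2(1-p)$ and $\prob[\Delta U = -1] = 3p(1-p)^2$, so $U_t$ is a birth--death chain on $\{0,\dots,n\}$, absorbing at $0$ and $n$, with ratio $\prob[\Delta U=-1]/\prob[\Delta U=+1] = (n-U)/U$. The gambler's-ruin formula then telescopes cleanly: the product $\prod_{m=1}^{i}(n-m)/m = \binom{n-1}{i}$, so the probability of absorbing at $U=n$ from a start of $U_0$ equals $2^{-(n-1)}\sum_{i=0}^{U_0-1}\binom{n-1}{i} = \prob\big[\mathrm{Bin}(n-1,\tfrac12)\le U_0-1\big]$. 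Thus for a cut whose center-side begins with $U_0 > n/2$ tokens, the failure probability is exactly $\prob\big[\mathrm{Bin}(n-1,\tfrac12)\ge U_0\big]$.

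It then remains to sum these tails over all $2(k-1)$ cuts, and here the choice of $9$ tokens per point is the crucial normalization. The cut closest to the center has $U_0 - n/2 = 9k/2$, while $\mathrm{Bin}(n-1,\tfrac12)$ with $n = 9k^2$ has standard deviation $\approx \tfrac12\sqrt{n} = \tfrac{3k}{2}$, so this deviation is exactly $3$ standard deviations \emph{independently of $k$} --- which is what makes the bound uniform in $k$. Since $k$ is odd, there are exactly two vertical and two horizontal cuts at this minimal offset, each failing with probability $\approx \Phi(-3)\approx 0.00135$, while the remaining cuts sit at offsets of $3,9,15,\dots$ standard deviations and contribute negligibly. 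A union bound then gives total failure $\lesssim 4\Phi(-3)\approx 0.0054$, i.e.\ success probability $> 0.994$, with the continuity correction only improving matters for finite $k$.

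The main obstacle I anticipate is making this tail estimate rigorous rather than merely heuristic. A crude Hoeffding bound yields only $\exp(-4.5)\approx 0.011$ per central cut, which after the union bound over four central cuts leaves success near $0.956$ --- not good enough. To reach $0.994$ one must control $\prob[\mathrm{Bin}(n-1,\tfrac12)\ge U_0]$ at its true Gaussian value $\approx \Phi(-3)$, e.g.\ via an exact binomial computation together with a Berry--Esseen or refined Chernoff estimate that is uniform in $k$, and then confirm that the (rapidly decaying) contributions of the farther cuts fit within the remaining slack. A secondary point to verify is the effect of the with-replacement sampling (collisions among the three drawn tokens); but since the projected side-values are i.i.d.\ Bernoulli regardless of collisions, the chain above is unchanged and this causes no real difficulty.
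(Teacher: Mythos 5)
Your approach is the one the paper intends: the authors give no written proof of this theorem, remarking only that it follows from ``roughly the same proof techniques'' as Theorem \ref{thm:triadic}, and your argument is precisely the explicit instantiation of that machinery on the grid. The reduction to the $2(k-1)$ axis-parallel cuts, the observation that $\hat{x}=x^*$ exactly when every cut is absorbed on the center's side (via Lemma \ref{lem:winsets-5} and the fact that a vertex of a median graph is determined by its cut signature), the projected birth--death chain with transition ratio $(n-U)/U$ (Lemma \ref{lem:median-majority} feeding into Lemma \ref{lem:urn-function}), and the identification of the absorption probability with a $\mathrm{Bin}(n-1,\tfrac12)$ tail are all correct --- note that the displayed formula in Lemma \ref{lem:urn-function} has a typo (the sum should run to the initial state, as it does when the lemma is invoked in Lemma \ref{lem:edge-markov-chain}), and you have silently used the correct version. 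Your numerical accounting also correctly locates the source of the constants: nine tokens per grid point puts the innermost cuts at an offset of $\tfrac{9k}{2}=3\sigma$ independently of $k$, there are four such cuts, and $4\Phi(-3)\approx 0.0054<0.006$, with the cuts at $9\sigma,15\sigma,\dots$ contributing negligibly even after summing over all $O(k)$ of them.

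The one genuine gap is the final tail estimate, and you have diagnosed it accurately: Hoeffding gives $e^{-4.5}\approx 0.011$ per central cut, an order of magnitude too weak, while Berry--Esseen with the standard constant carries an error of order $1/k$, which swamps $\Phi(-3)\approx 1.35\times 10^{-3}$ for small $k$; neither tool alone yields a bound that is simultaneously sharp and uniform in $k$. To close this you would need, e.g., a direct verification that $\prob\bigl[\mathrm{Bin}(9k^2-1,\tfrac12)\ge \tfrac{9k^2}{2}+\tfrac{9k}{2}\bigr]\le 1.5\times 10^{-3}$ for all odd $k$, by exact computation for small $k$ and a refined normal approximation for large $k$; the continuity correction works in your favor, but the margin between $0.0054$ and $0.006$ is thin, so this step cannot be waved away. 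Since the paper supplies no proof at all for this theorem, there is nothing to check your fix against; as written, your argument is a correct and essentially complete proof sketch with exactly one quantitative step left heuristic.
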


\begin{theorem}
  Consider $n$ participants which make up a binary tree of height $h$. Then with probability greater than $1-4e^{-n/16}$, the winning participant $\hat{x}$ is either the root node (the generalized median) or one of its two children. This is again true {\it regardless of how large $k$ is}.
\end{theorem}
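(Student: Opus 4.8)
The plan is to analyze the token dynamics on a binary tree by exploiting the recursive structure that the tree's root induces, and to reduce the problem to tracking which of the root's three incident subtrees ultimately absorbs all tokens. First I would observe that on a tree, the generalized median of any triad is the unique point where the three shortest paths meet (the ``Steiner point''), which by Theorem~\ref{thm:median-graph-characterization} is exactly the majority rule equilibrium. Since the tree has $n = 2^{h+1}-1$ nodes and the root has two subtrees each of size $(n-1)/2$, the root is the global generalized median: moving from the root into either subtree strictly increases $D$. My first lemma would establish that the number of tokens in each of the two subtrees rooted at the root's children behaves like a coupled urn/martingale process, analogous to the $E^*$ edge dynamics used in Theorem~\ref{thm:triadic}, so that the final winner lies in the subtree that wins this top-level urn competition.

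Next I would leverage the concentration already packaged in the proof of Theorem~\ref{thm:triadic}. For the edge $e$ incident to the root separating the two child-subtrees, the relevant ratio $\lvert\{i \mid e \in I_{\hat{x}x_i}\}\rvert / \lvert\{i \mid e \in I_{x^*x_i}\}\rvert$ concentrates, and because the two subtrees start perfectly balanced (each holding exactly $(n-1)/2$ tokens), a symmetric martingale argument shows that with high probability neither subtree wins enough of an advantage to pull the consensus more than one edge away from the root. Concretely, I would argue that the winning token's opinion $\hat{x}$ can differ from the root only by crossing a single root-incident edge, i.e.\ $\hat{x}$ is the root or one of its two children. The tail bound would come from bounding the probability that the balanced urn process at the top edge drifts far enough to carry all tokens deep into one subtree before the dynamics stabilize; this is where the $4e^{-n/16}$ factor should emerge, via a Hoeffding/Azuma estimate on the difference of token counts across the root, with the constant $1/16$ and the prefactor $4$ arising from a union bound over the $O(1)$ critical edges near the root and the standard sub-Gaussian constant.

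The main obstacle I anticipate is making precise the claim that the consensus cannot stray more than one edge from the root. The edge-by-edge analysis in Theorem~\ref{thm:triadic} controls the approximation ratio $D(\hat{x})/D(x^*)$ in aggregate, but here I need a sharper, per-edge statement: once the top-level urn (across the root edge) has its count imbalance bounded, I must propagate this to show the deeper edges never fully resolve in favor of a node two or more steps from the root. I expect to handle this by noting that for a token to end at depth $\geq 2$, the process must both choose a child-subtree \emph{and} resolve the next edge within that subtree entirely in one direction, and the balance at the root makes the first event insufficiently decisive with high probability. A subtlety is that the statement is phrased with ``$k$'' (``regardless of how large $k$ is'') whereas the tree is parametrized by height $h$; I read this as a typo for $h$, and the exponential-in-$n$ tail is what guarantees the bound holds uniformly as $h \to \infty$. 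The cleanest route is to isolate the single random variable measuring net token flow across the root edge, show it is a bounded-difference martingale over the $T = O(n\log^2 n)$ rounds, and apply Azuma directly.
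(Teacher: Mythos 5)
There is a genuine gap: you are applying the concentration argument to the wrong edge. The root edge is (essentially) balanced --- each child's subtree holds $(n-1)/2$ of the $n$ tokens --- so the top-level urn process absorbs on either side with probability close to $1/2$, and no martingale/Azuma estimate on the net flow across that edge can yield an exponentially small failure probability. Indeed the theorem does not claim the root wins; it only claims the winner is the root \emph{or one of its children}, precisely because the root-edge competition is a fair coin. Your proposed mechanism (``the balance at the root makes the first event insufficiently decisive'') therefore cannot produce the $e^{-n/16}$ tail, and the framing ``neither subtree wins enough of an advantage'' is also off, since the process always absorbs all tokens at a single node --- the question is only how deep that node is.

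The bound comes instead from the four \emph{grandchild} edges. For an edge $e=(u,v)$ where $u$ is a child of the root and $v$ a grandchild, the win set $W_{vu}$ is the subtree rooted at $v$ and contains only $2^{h-1}-1=(n-3)/4\approx n/4$ nodes, i.e.\ it falls short of $n/2$ by about $n/4\geq \sqrt{cn\ln n}$. Running the per-edge urn argument of Lemma~\ref{lem:edge-markov-chain} (via Lemma~\ref{lem:urn-function} and the Chernoff step there) with $X_0\approx n/4$ gives
\[
\prob[\hat{x}\in W_{vu}]\;\leq\;\exp\!\left(-\tfrac{1}{2}\cdot\tfrac{n}{2}\cdot\left(\tfrac{n/4}{n/2}\right)^{2}\right)=e^{-n/16},
\]
and the winner lies at depth $\geq 2$ only if one of the four grandchild edges resolves in favor of its small side; a union bound over these four edges gives the stated $4e^{-n/16}$. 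So the correct proof is a direct reuse of the paper's edge-by-edge absorption analysis applied to the imbalanced edges two levels down, not a transient concentration bound over the $T=O(n\log^2 n)$ rounds at the balanced root edge. (You are right that ``$k$'' in the statement should read ``$h$''.)
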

Even stronger results hold for higher-dimensional grids or higher degree trees.

\subsection{A simpler, restricted variant}\label{sec:restricted}

The downside of the above setting is that participants will need to take time to discover the opinions of other participants. An implementation of this in practice would require more discovery mechanisms and tools. An accurate analysis would require a model of this discovery process, which is out of the scope of this paper.

In this section, we consider a simpler setting in which the small group decision is restricted to one of the members of the triad itself, i.e. $g_t \in S_t$. It turns out that an extremely simple process exists for finding the majority rule equilibrium under this restriction: simply ask each participate to vote on one of the other two participants. The participant who receives the largest number of votes wins. If there is a tie, then there is no exchange in tokens. This dynamic was analyzed in \cite{Goel2012} and shown to obtain essentially similar results for the case of opinions on a line. Here, we extend these results to star nodes (Theorem \ref{thm:star-restricted}) and also give simulations indicating that it also works for trees and grids (Figure \ref{fig:simulations}). Extending this to all median graphs (and perhaps beyond) remains an interesting open problem.

\begin{figure}[t]
  \centering
  \subfloat[Tree of height 10]{\includegraphics[width=.32\textwidth]{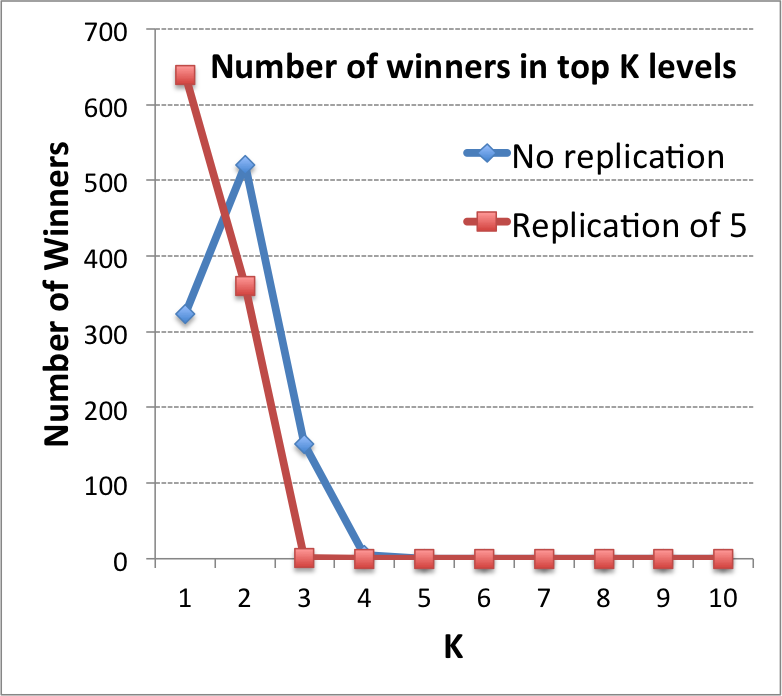}}\hfill
  \subfloat[21x21 Grid]{\includegraphics[width=.29\textwidth]{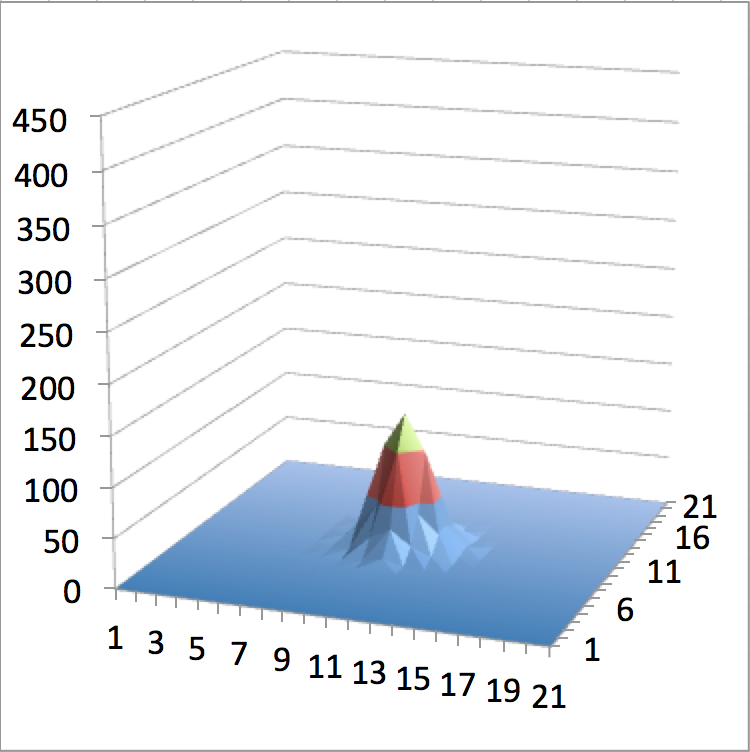}}\hfill
  \subfloat[Grid with 5 initial tokens]{\includegraphics[width=.29\textwidth]{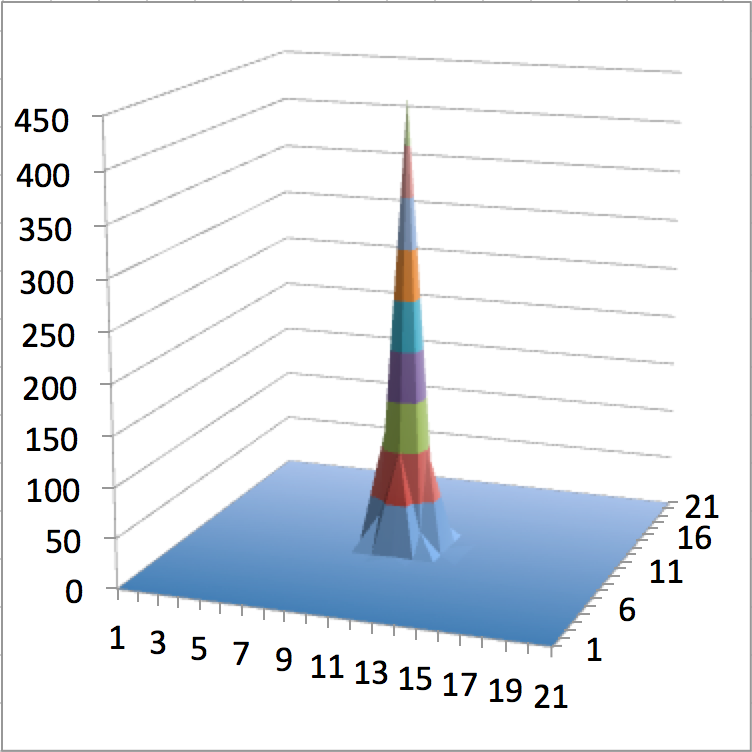}}
  \caption{\small These plots show the number of wins for nodes over 1000 runs of restricted triadic decision-making. We consider two graphs: (a) a binary tree of height 10, and (b, c) a grid of size 21x21. Each of these are simulated with two initial conditions: one or five initial tokens per participant. With five initial tokens, the root node of the tree wins 63.9\% of the time and the top three nodes win 99.9\% of the time. Similarly, with five initial tokens, the center node of the grid wins 44.7\% of the time and the middle 9 nodes win 99.6\% of the time.}
  \label{fig:simulations}
\end{figure}

\begin{theorem}\label{thm:star-restricted}
  Consider $n$ participants on a star graph, for which $j = \Omega(\sqrt{cn\log n})$ of the participants are located at the root, and for which the remaining $n-j$ participants are distributed evenly among the $k$ leaves. Then with probability greater than $1-n^{-c}$, the winning participant $\hat{x}$ is a participant at the root node.
\end{theorem}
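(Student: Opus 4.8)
The plan is to track, for each leaf $L_i$, the number of tokens $\ell_i^t$ currently held by participants located at $L_i$, and to show that every such count is driven to $0$, at which point all $n$ tokens reside at the root and the process can only consolidate them onto a single root participant. The first observation is that the bipartition ``all tokens at the root'' is absorbing: a token leaves the root only when a triad consists of one root token and two tokens held by participants at a common leaf, so once no leaf holds any token this can never recur. Hence it suffices to prove that, with probability at least $1-n^{-c}$, every $\ell_i^t$ reaches $0$; the winner $\hat x$ is then necessarily a root participant. Because a star is a tree, its $\Theta$-classes (the edge classes comprising $E^*$) are exactly its $k$ edges $(\text{root}, L_i)$, and the cut associated with $L_i$ separates the win-set $W_{L_i,\text{root}} = \{L_i\}$ from $W_{\text{root},L_i} = V\setminus\{L_i\}$. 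Thus the problem decomposes into $k$ coupled one-dimensional processes, one per leaf, which I would union-bound over at the end.

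Second, I would compute the one-step drift of $\ell_i^t$ under the restricted voting rule (each participant votes for whichever of the other two is closer to its own opinion). A short case analysis of the possible triads, projected onto the cut $(\{L_i\}, V\setminus\{L_i\})$, shows that the minority side of the triad is always absorbed into the majority side: whenever the triad contains two $L_i$-tokens and one outside token the outside token moves to $L_i$ ($\ell_i$ increases by one), whenever it contains one $L_i$-token and two outside tokens the $L_i$-token moves out ($\ell_i$ decreases by one), and the only exceptions are exact ties, which leave $\ell_i$ unchanged. This yields a conditional drift of the form $\mathbb{E}[\ell_i^{t+1}-\ell_i^t \mid \mathcal{F}_t] = \tfrac{3\,\ell_i^t(n-\ell_i^t)(2\ell_i^t-n)}{n^3} + (\text{tie corrections} \le 0)$, which is strictly negative precisely when $\ell_i^t < n/2$. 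The key point is that this bound depends only on the triad's composition relative to the single cut and not on how tokens are arranged among the \emph{other} leaves, so it holds conditionally regardless of the coupling; consequently $\ell_i = n/2$ is an \emph{unstable} fixed point and $\ell_i = 0$ a stable one.

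Third, since the root is the generalized median (each leaf carries $(n-j)/k \le n/2$ participants for $k\ge 2$), leaf $L_i$ can capture the whole crowd only if $\ell_i^t$ first climbs from its initial value $(n-j)/k$ up across the unstable threshold $n/2$. For $k=2$ the initial gap to the threshold is exactly $n/2 - (n-j)/2 = j/2$, and for $k \ge 3$ it is at least $n/6$; in all cases the gap is $\Omega(j) = \Omega(\sqrt{cn\log n})$. Near $\ell_i = n/2$ the drift linearizes to $+\tfrac{3}{2}(\ell_i - n/2)/n$ with $O(1)$ per-step variance, so the walk behaves like a diffusion escaping an unstable equilibrium whose natural fluctuation scale is $\Theta(\sqrt{n})$. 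I would construct an exponential supermartingale capturing this linearized drift and apply a Freedman-type inequality to show that the probability that $\ell_i^t$ ever reaches $n/2$ (rather than being swept toward $0$) is $e^{-\Omega(j^2/n)} \le n^{-(c+1)}$; conditioned on never crossing $n/2$, the strengthening negative drift drains $\ell_i$ to $0$ within $O(n\log^2 n)$ steps, exactly as in the time bound of Theorem~\ref{thm:triadic}. A union bound over the at most $n$ leaves then gives overall failure probability at most $n^{-c}$.

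I expect the technical heart --- and the step where the hypothesis $j = \Omega(\sqrt{cn\log n})$ enters in an essential way --- to be this escape estimate at the unstable fixed point: one must show that a head start of order $\sqrt{cn\log n}$ below the threshold $n/2$ is enough to overwhelm the $\Theta(\sqrt{n})$-scale random fluctuations, with additional room for the union-bound factor $\sqrt{\log n}$. A secondary obstacle is making the drift bound genuinely robust to the coupling between leaves, which is handled by the observation above that the projected per-step change of $\ell_i$ is determined entirely by the triad's relation to the cut $(\{L_i\}, V\setminus\{L_i\})$, so each leaf process can be dominated by an autonomous biased walk for the purposes of concentration. It is worth noting that the case $k=2$ is precisely the three-point path $L_1 - \text{root} - L_2$, recovering the line result of \cite{Goel2012}; the present theorem extends that analysis to arbitrarily many branches, where the genuinely new ingredient is controlling all $k$ cuts simultaneously.
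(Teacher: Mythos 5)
There is a genuine gap, and it sits exactly at the step you flag as the ``key point.'' Your case analysis of a triad containing one $L_i$-token and two outside tokens is incomplete: when the two outside tokens sit at two \emph{distinct} non-root leaves $L_j \neq L_{j'}$, all three pairwise distances equal $2$, every voter is indifferent, and the restricted rule does not move the $L_i$-token out. Depending on tie-breaking the $L_i$-holder can even win the plurality vote and absorb both other tokens, so $\ell_i$ can jump by $+2$ from a configuration in which $L_i$ is the minority side of the cut. (This is the essential difference between the restricted variant and the unrestricted one: the triad's decision is a plurality winner among its members, not the generalized median, so Lemma~\ref{lem:median-majority}-style ``minority is absorbed into the majority of the cut'' reasoning does not apply.) Consequently your drift formula is wrong in sign of the correction term: the conditional drift is $\tfrac{3\ell^2(n-\ell)}{n^3} - \tfrac{3\ell(n-\ell)^2}{n^3}(1-q)$, where $q$ is the conditional probability that the two outside tokens lie on distinct non-root leaves, and when the outside tokens are widely scattered ($q$ near $1$) this is strictly \emph{positive} even for $\ell \ll n/2$. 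So the per-leaf process is not dominated by an autonomous negatively-biased walk, the claimed decomposition into $k$ independent cut-processes fails, and the escape estimate at $n/2$ has nothing to rest on. A quick sanity check: with one token on each of three distinct leaves and none at the root, some leaf eventually wins everything, even though every $\ell_i$ starts far below $n/2$.

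The dependence of the leaf dynamics on how the \emph{other} tokens are arranged is precisely the difficulty the paper's proof is built to handle: it observes that the truncated state (root count, max leaf count) is not Markov, introduces the ``concentrated'' CRTD chain, and proves three monotone coupling lemmas (Lemmas~\ref{lem:crtd-coupling}--\ref{lem:crtd-rtd-coupling-full}) showing the true process stochastically dominates the worst case in which all non-root tokens are piled onto two leaves with $(n-j)/2$ each; that configuration reduces to the three-node line, where Lemma~\ref{lem:urn-function} and the hypothesis $j = \Omega(\sqrt{cn\log n})$ give the $1-n^{-c}$ bound. If you want to salvage a drift-based argument, you would need to track a quantity such as the root count or the maximum leaf count jointly with the dispersion of the remaining tokens, which effectively re-derives the paper's coupling; the single-cut projection alone cannot work for the restricted dynamic.
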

\begin{proof}
  We use $\{R_t, t \geq 0\}$, $R_t \in \mathbb{Z}^{k+1}$, to denote a markov chain which is induced by restricted triadic decision-making (RTD) on the star. At time $t$,
  \begin{enumerate}
    \item $R_{t1}$ is the number of tokens at the center (root node) of the star,
    \item $R_{tj}$ is the number of tokens at the $j-1$-th leaf node of the star for $j = 2, \ldots, k+1$.
  \end{enumerate}
  We use $\{R'_t, t \geq 0\}$, $R'_t \in \mathbb{Z}^2$, to denote a truncated representation which encodes the number of tokens at the center of the star and the {\it max} number of tokens at the leaf nodes. Specifically,
  \begin{align*}
    R'_{t1} = R_{t1},\quad\text{ and }\quad R'_{t2} = \max_{2 \leq j \leq k+1}R_{tj}
  \end{align*}
  It is easy to verify that $\{R_t, t \geq 0\}$ is a markov chain because restricted triadic decision-making only depends on the positions of the tokens.
  However, the truncated representation $\{R'_t, t \geq 0\}$ is {\it not} a markov chain since transition probabilities depend on the distribution of the tokens among the leaves.

  A major step in our proof will be to couple $\{R_t, t \geq 0\}$ to another stochastic process $\{S_t, t \geq 0\}$, $S_t \in \mathbb{Z}^{k+1}$, which we will call the concentrated variant of restricted triadic decision-making on the star (CRTD). The state space of $S_t$ is the same as that of $R_t$ and also describes a partition of $n$ tokens over the $k+1$ nodes of the star graph. At time $t$,
  \begin{enumerate}
    \item $S_{t1}$ is the number of tokens at the center (root node) of the star,
    \item $S_{tj}$ is the number of tokens at the $j-1$-th leaf node of the star for $j = 2, \ldots, k+1$.
  \end{enumerate}
  A step of this markov chain involves first taking a step according to the normal RTD transition resulting from a triad. Once this is done, the maximum number of tokens on the leaf nodes is calculated and the tokens on the leaves are then {\it concentrated} while preserving the calculated maximum and ensuring that $S_{t2} \geq S_{t3} \geq \ldots \geq S_{t(k+1)}$. As a simple example, suppose there are $5$, $4$, $3$, and $2$ tokens on leaf nodes $S_{t2}, S_{t3}, S_{t4}, S_{t5}$ respectively after a step of the RTD transition. Then the max number of tokens on the leaf nodes is $5$, and concentrating the tokens would result in $5$, $5$, $4$, and $0$ tokens on leaf nodes $S_{t2}, S_{t3}, S_{t4}, S_{t5}$ respectively. 
  We define a corresponding truncated representation to be $\{S'_t, t \geq 0\}$, $S'_t \in \mathbb{Z}^2$, where
  \begin{align*}
    S'_{t1} = S_{t1},\quad\text{ and }\quad S'_{t2} = \max_{2 \leq j \leq k+1}S_{tj}
  \end{align*}
  It is easy to verify that {\it both} $\{S_t, t \geq 0\}$ and $\{S'_t, t \geq 0\}$ are markov chains.
  With these definitions, we can state the following three lemmas (proof in Appendix \ref{appsec:star-restricted}):
  \begin{enumerate}
    \item (Lemma \ref{lem:crtd-coupling}): Consider any two CRTD dynamics $X_t$, $Y_t$, and their corresponding truncated representations $X'_t$, $Y'_t$. Suppose that $X'_{j1} \leq Y'_{j1}$ and $X'_{j2} \geq Y'_{j2}$ for some $j$. Then we show that a coupling exists such that $X'_{(j+1)1} \leq Y'_{(j+1)1}$ and $X'_{(j+1)2} \geq Y'_{(j+1)2}$.
    \item (Lemma \ref{lem:crtd-rtd-coupling}): Consider a CRTD dynamic $Y_t$ and a RTD dynamic $Z_t$ and their corresponding truncated representations $Y'_t$, $Z'_t$. Suppose that $Y'_{j1} = Z'_{j1}$ and $Y'_{j2} = Z'_{j2}$ for some $j$. Then we show that a coupling exists such that $Y'_{(j+1)1} \leq Z'_{(j+1)1}$ and $Y'_{(j+1)2} \geq Z'_{(j+1)2}$.
    \item (Lemma \ref{lem:crtd-rtd-coupling-full}): Consider a CRTD dynamic $X_t$ and a RTD dynamic $Z_t$ and their corresponding truncated representations $X'_t$, $Z'_t$. Suppose that $X'_{01} \leq Z'_{01}$ and $X'_{02} \geq Z'_{02}$. Then we show that a coupling exists such that $X'_{t1} \leq Z'_{t1}$ and $X'_{t2} \geq Z'_{t2}$ for all $t$ and for every history of the markov chain.
  \end{enumerate}
  We can now wrap up our proof. Consider a CRTD dynamic $X_t$ and a RTD dynamic $Z_t$ and their corresponding truncated representations $X'_t$, $Z'_t$. Suppose that $X_0 = (a, (n-a)/2, (n-a)/2, 0, \ldots, 0)$ and that $X'_{01} \leq Z'_{01}$ and $X'_{02} \geq Z'_{02}$. By Lemma \ref{lem:crtd-rtd-coupling-full}, a coupling exists such that whenever the center of the star wins in $X_t$, i.e. $X'_t$ converges to $(n, 0)$, the center of the star must also win in $Z_t$, i.e. $Z'_t$ has converged to $(n, 0)$. Therefore the probability that the center of the star wins in $Z_t$ is at least as high as the probability that the center of the star wins in $X_t$. 
  
  We now calculate the probability that the center of the star wins in $X_t$. The main observation is that, for our initial condition, the star being considered in $X_t$ only has two leaves. It is easy to verify that in this case, the CRTD dynamic is identical to the RTD dynamic (the leaves are always already ``concentrated''). Moreover, since a star with two leaves is simply a line, we already have the probability that the center of the star wins in $X_t$. For $a = \Omega(\sqrt{cn\log n})$, this probability is greater than $1-n^{-c}$, which concludes our proof.
\end{proof}

%
%
%
%
%
%
%

\section{Triadic decision-making succeeds with strategic agents}\label{sec:triadic-strategic}

One cannot naively apply the prior results to strategic agents in a full LDSG process. This is because a strategic agent wants to ultimately maximize the utility he receives from the global decision $\hat{x}$. Thus, an agent might settle for a worse consensus decision in a given small group if it could help in the long run.
We show that this does not happen and that, in fact, we can achieve the same results described in Theorem \ref{thm:triadic} for strategic agents under a subgame perfect Nash equilibrium~\cite{Shoham2008}. 

\subsection{Game theoretic formalization}

The entire decision-making process induced by Definition \ref{def:tdd} and the majority rule process (Section \ref{sec:triadic-deliberation}) can be formalized as an infinite horizon extensive form game (see Figures \ref{fig:ldsg-game} and \ref{fig:ldsg-extensive-form}).
A full strategy for the game involves choosing, given every possible past history, a strategy for how one votes and proposes motions. For each individual $i$, the utility assigned from the game is defined as $u_i(\hat{x}) = -d(x_i, \hat{x})$ if a winner $\hat{x}$ is produced, and $-\infty$ if the game never terminates.\footnote{We use this definition of utility due to its simplicity and space constraints; however, our results hold for a much broader class of utilities known as single-peaked preferences.}

\begin{figure}[t]
  \centerline{\includegraphics[width=.4\textwidth]{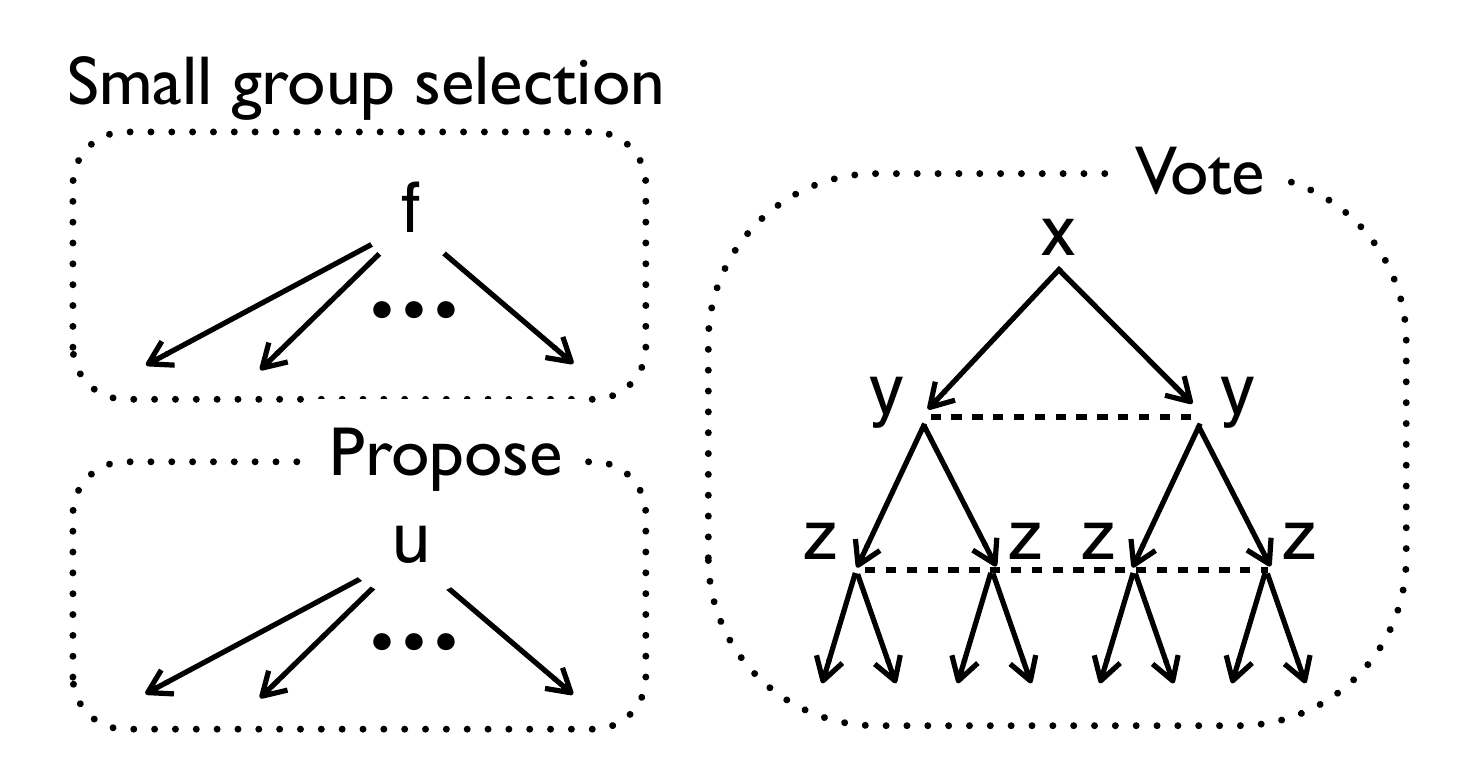}}
  \caption{\small The game-theoretic representation of triadic decision-making with majority rule is an infinite horizon extensive form game made out of several blocks as depicted in Figure \ref{fig:ldsg-game}. The ``Small group selection'' block is played by the small group selection function $f$ who selects one of $n^3$ possible triads according to the probability distribution described by Definition \ref{def:tdd}. The ``Propose'' block for a player $u$ consists of $n+1$ actions corresponding to either proposing to replace the current winner with one of the $n$ participants or motioning to end the process. The ``Vote'' block for the small group $x, y, z$ consists of a simultaneous vote to either accept or reject the proposal. The fact that these votes are simultaneous is represented formally by making the nodes for $y$ and for $z$ a single information set.}
  \label{fig:ldsg-extensive-form}
\end{figure}

\subsection{Approximation and time results for strategic agents in the unrestricted setting}

We show that a strategy which we call truthful bargaining achieves a subgame perfect Nash equilibria for the aforementioned game while also returning the generalized median as the consensus decision of each small group. This means that the results of Theorem \ref{thm:triadic} carry over to this equilibrium.

The truthful bargaining strategy is simple in that it is independent of past small group interactions. Roughly speaking, it requires a participant to vote truthfully between opinions, and to propose his best ``bargaining point'' when one exists, and to motion to end the process otherwise. When voting on a motion to end, truthful bargaining requires a participant to accept the motion when he has no bargaining points, and reject it otherwise.

\begin{definition}
  Suppose that $w$ is the current winner in a triadic majority rule process between participants $x, y, z$. Define the bargaining points for participant $x$ to be any point which is not the current winner $w$, but lies on a shortest path between $w$ and $x$ as well as a shortest path between $w$ and at least one of $y$ or $z$. The best bargaining point for $x$ is the bargaining point which is closest to himself.
\end{definition}

\begin{definition}\label{def:tr}
  Define truthful bargaining to be the strategy in which, independent of the past history,
  \begin{itemize}
    \item If a vote is taken for a proposed alternative, the agent votes truthfully according to his preference,
    \item If it is the agent's turn to propose a motion, he proposes his best bargaining points if one exists, and motions to end otherwise,
    \item If a vote is taken on a motion to end, he accepts if he has no bargaining points, and rejects it otherwise.
  \end{itemize}
  For all of these decisions, ties are broken arbitrarily.
\end{definition}

\begin{theorem}\label{thm:strategic} 
  Consider a triadic decision-making process $L = (V, f, g)$ where participants $V$ form a median graph and $g_t$ is determined by a majority rule process between the participants $S_t$. Then the truthful bargaining strategy is a subgame perfect Nash equilibrium. Moreover, $g_t = \text{generalized-median}(S_t)$ for all $t$.
\end{theorem}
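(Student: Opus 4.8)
The plan is to prove the two conclusions in sequence: first that, when all three members of a triad follow truthful bargaining, the majority-rule subgame terminates at the triad's generalized median $m = \text{generalized-median}(S_t)$, and second that no agent can profit from a unilateral deviation, so truthful bargaining is a subgame perfect Nash equilibrium. The first conclusion is both the content of the ``Moreover'' clause and a building block for the equilibrium argument.

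For the outcome of a single triad I would track the current winner $w$ and argue three facts. (i) Any bargaining point $b$ that a proposer offers defeats $w$: since $b$ lies on a shortest path from $w$ toward the proposer it is strictly preferred by the proposer, and since $b$ also lies on a shortest path from $w$ to one of the other two members it is strictly preferred by that member as well, so truthful voting gives $b$ a two-vote majority. (ii) The potential $\Phi(w) = d(w,x)+d(w,y)+d(w,z)$ strictly decreases whenever a bargaining point replaces $w$: if $b$ is toward the proposer and toward one other member then $d(b,\cdot)$ drops by $d(w,b)$ for each of those two while rising by at most $d(w,b)$ for the third, so $\Phi(b) \le \Phi(w) - d(w,b) < \Phi(w)$; as $\Phi$ is integer-valued and bounded below by $\Phi(m)$, only finitely many replacements occur. (iii) At $w = m$ no member has a bargaining point: since $m$ lies on a shortest path between each pair of members (the median-graph median property), the intervals $I_{mx}, I_{my}, I_{mz}$ pairwise intersect only in $m$, because a point shared by two of them would shortcut the corresponding shortest path through $m$; hence no point off $m$ lies simultaneously toward one member and toward a second. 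Combining these with a lemma that whenever $w \neq m$ at least two members still possess a bargaining point (so any motion to end is rejected by a two-to-one vote and the winner keeps improving), shows the subgame can terminate only at $m$ and must reach it. This establishes $g_t = \text{generalized-median}(S_t)$ and recovers, via Theorem \ref{thm:median-graph-characterization}, the Condorcet-winner characterization used throughout.

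For subgame perfection I would invoke the one-shot deviation principle, first checking its validity here: continuation payoffs are well defined after every finite history, and the $-\infty$ payoff for non-termination makes any deviation that merely prolongs the game strictly dominated, so it suffices to rule out profitable single deviations at the three node types (voting on a proposed alternative, proposing, and voting on a motion to end), assuming truthful bargaining is played thereafter. The decisive structural facts are that $m$ is the Condorcet winner of the triad, so no deviating agent can assemble a two-vote majority for any point he strictly prefers to what truthful play yields, and that, for a deviator at opinion $x$, the generalized median is exactly the closest point to $x$ lying between the other two members (the point of $I_{yz}$ nearest $x$). Hence a deviation in the current triad can only leave the local winner unchanged, move it strictly farther from $x$, or cause non-termination, never strictly closer.

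The main obstacle is the long-run incentive flagged in the introduction: an agent might accept a locally worse winner hoping to bias a future triad in his favor, and ruling this out requires relating the local decision to the final consensus $\hat{x}$ across the stochastic remainder of the process. I would close this gap with a monotonicity lemma: decomposing the median graph along the cuts $E^*$ used in the proof of Theorem \ref{thm:triadic} reduces the token dynamics to independent one-dimensional (line) processes, on each of which the final winning coordinate is a monotone function of every token's coordinate; consequently replacing a current triad's winner by a point farther from $x$ can only move $\hat{x}$ weakly farther from $x$. Since truthful bargaining already secures the closest attainable local winner for $x$, and global utility $-d(x,\hat{x})$ is monotone in the local winner's proximity to $x$, the local best response is also globally optimal, completing the one-shot-deviation verification. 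I expect the most delicate parts to be the no-premature-termination (no-cycling) lemma in the single-triad analysis and this global monotonicity argument.
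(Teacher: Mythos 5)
Your single-triad analysis and your global monotonicity idea both match the substance of the paper's argument: the paper likewise shows that truthful play yields the triad's generalized median (Lemma \ref{lem:truthful-round}, though it does so by observing that the first proposer's best bargaining point is already $m(x,y,z)$, so the round ends in two steps, rather than via your potential-function descent), that a deviator either forces non-termination or obtains a local winner $w'$ with the true median lying in $I_{xw'}$ (Lemma \ref{lem:deviate-round}), and that this ``pushed away along shortest paths'' relation is preserved under taking medians, cut by cut over $E^*$ (Lemma \ref{lem:median-preserves-domin}). So the structural content of your proof is right.

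The genuine gap is your reduction to one-shot deviations. The one-shot deviation principle is only valid for games that are continuous at infinity, and this game is not: there is no discounting, and the entire payoff $-d(x_i,\hat{x})$ is determined by the terminal outcome, so behavior arbitrarily far in the future has a non-vanishing effect on payoffs. Your justification --- that non-terminating deviations are dominated because of the $-\infty$ payoff --- rules out one failure mode but does not supply the truncation argument the principle actually requires (a profitable deviation must be shown to remain profitable when cut off at a finite horizon and reverted to truthful play thereafter, which is exactly what continuity at infinity buys you). The paper sidesteps this entirely: Lemma \ref{lem:deviate-complete} compares an \emph{arbitrary} deviating strategy against truthful play by coupling the two token processes round by round and maintaining the invariant that each truthful token $\hat{y}_i^t$ lies in $I_{xy_i^t}$; the inductive step is precisely your monotonicity observation, packaged as Lemma \ref{lem:median-preserves-domin}. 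Your own machinery closes the gap if you run it this way --- propagate the invariant inductively over all rounds against an arbitrary deviation --- but as written, the step from ``no profitable one-shot deviation'' to ``subgame perfect equilibrium'' is unjustified.
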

\begin{proof}
We prove this through several lemmas. We first show that, if all participants follow truthful bargaining, then the consensus decision of each triad $S$ will be the generalized median of $S$ (Lemma \ref{lem:truthful-round}). Following this, we consider any triadic round of any subgame in which one participant $u$ deviates. We show that regardless of the strategy that $u$ deviates to, either the round never ends or the generalized median of $S$ lies on a shortest path from $u$ to the resulting winner $w'$ (Lemma \ref{lem:deviate-round}). Finally, we leverage Lemma \ref{lem:deviate-round} to showing that, for any subgame, if one participant $u$ deviates, then the utility received is stochastically dominated by the utility received if $u$ did not deviate (Lemma \ref{lem:deviate-complete}), which concludes our proof.
\end{proof}

Since we have shown that $g_t = \text{generalized-median}(S_t)$, the approximation and time bounds of Theorem \ref{thm:triadic} extend to the strategic case.

\section{Concluding Discussion}\label{sec:discussion}

In this paper, we proposed a model in which sequences of small group interactions are used to scale up decision-making. We showed that small groups can find the wisdom of the crowd efficiently --- sometimes in situations when the group at large is unable to come to a stable decision --- but we also showed that the feasibility of doing so can depend strongly on the size of the small groups. This phenomenon, which also applies to simple opinion formation dynamics, is succinctly summarized by the following aphorism: while two heads are better than one, three heads are better than two.

There are many questions remaining that we view as promising directions for future research. In our paper, we studied a particular approach and dynamic to scaling deliberative decision-making. We believe that the problem of scaling deliberative decision-making deserves a broader and more fundamental framework. Perhaps one can define a model for {\it deliberative social choice}, in which a global choice is calculated from a subset of small group choices. This should be done first for participants whose opinions do not change (like in this paper and in majority rule dynamics), but then should be extended to true deliberative models which account for transfer of information and evolution of opinions. Such mathematical work should also be coupled to systems work which consider motivational challenges and interaction interfaces. Ultimately, all of these should work towards a realization of deliberative democracy in practice.



\begin{acks}
  The authors would like to thank Helene Landemore for discussing and pointing us to relevant deliberative democracy literature, and the reviewers (over the course of several submissions) for pushing us to motivate and communicate our paper more clearly. This work was supported by the Army Research office Grant No. 116388, the Office of Naval Research Grant No. 11904718, and by the Stanford Cyber Initiative. David T. Lee was also supported by an NSF graduate research fellowship and a Brown Institute for Media Innovation Magic Grant.
\end{acks}

\bibliographystyle{ACM-Reference-Format-Journals}
\bibliography{ec16-small-groups_v20}

\newpage

\appendix
\section*{APPENDIX}

\section{Definitions and notation}\label{appsec:definitions}

\begin{definition}\label{def:interval}
  Define the interval $I_{xy}$ between points $x$ and $y$ to be the set containing all points lying on a shortest path between $x$ and $y$, i.e. $I_{xy} = \{w \mid d(x, y) = d(x, w) + d(w, y)\}$. For an edge $e=(i,j)$ we let $e \in I_{xy}$ if $i, j \in I_{xy}$.
\end{definition}

\begin{definition}\label{def:convex}
  Define a set $S$ to be convex if for any two points $x, y$ in $S$, every shortest path between $x$ and $y$ also lies in $S$, i.e. $x, y \in S \implies I_{xy} \in S$. Define the convex hull of a set $S$ to be the smallest convex set that contains $S$. 
\end{definition}

\begin{definition}\label{def:winsets}
  For any edge $e = (u, v)$, define the win sets $W_{uv} = \{w \in V \mid d(w, u) < d(w, v)\}$ and $W_{vu} = \{w \in V \mid d(w, v) < d(w, u)\}$ to be the set of nodes that are closer to $u$ or $v$ respectively. 
\end{definition}

\begin{definition}\label{def:median-of-three}
  We will use the notation $m(x, y, z)$ to denote the generalized median of the points $x, y, z$, i.e. $m(x, y, z) = \arg\min_{u \in V} d(u, x) + d(u, y) + d(u, z)$.
\end{definition}

\begin{definition}\label{def:condorcet}
  A node $x \in V$ is a Condorcet winner for the set $S$ if, for any other node $y \in V$, $\lvert \{ u \in S \mid d(u, x) < d(u, y) \} \rvert > \lvert \{ u \in S \mid d(u, y) < d(u, x) \}\rvert$.
\end{definition}

\begin{definition}\label{def:median-graph}
  A graph is a median graph if, for every $x, y, z$, there is exactly one point which simultaneously lies on some shortest path between $x$ and $y$, between $x$ and $z$, and between $y$ and $z$, i.e. $\lvert I_{xy} \cap I_{xz} \cap I_{yz} \rvert = 1$.
\end{definition}

\section{Properties of median graphs}\label{appsec:median-graphs}

Median graphs are well-studied and have a rich structure. We list several properties that we use, which can be found in the following references~\cite{Knuth2011,Klavzar1999,Mulder1978,Bandelt2008,Imrich2000}.

\begin{lemma}\label{lem:gated-sets}
  Any convex set $S$ in a median graph $V$ is gated. That is, for any $x$ in the graph $V$, there exists some gate $g(x) \in S$ such that for any node $y \in S$, some shortest path from $x$ to $y$ goes through $g(x)$.
\end{lemma}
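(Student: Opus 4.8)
The plan is to exhibit the gate explicitly as a closest point of $S$ to $x$ and then verify the gate property by invoking the median operation that defines median graphs. First I would let $p \in S$ be a vertex minimizing $d(x, \cdot)$ over $S$; such a $p$ exists since $S$ is nonempty and $V$ is finite. The claim to prove is that $p = g(x)$ serves as the gate, i.e. that $p$ lies on a shortest path from $x$ to every $y \in S$.

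To verify this, I would fix an arbitrary $y \in S$ and consider the median $m = m(x, p, y)$, which is well-defined and unique by Definition \ref{def:median-graph}. By the defining property of the median (Definition \ref{def:median-of-three}), $m$ lies on some shortest path between $p$ and $y$, so $m \in I_{py}$; since $p, y \in S$ and $S$ is convex, Definition \ref{def:convex} gives $I_{py} \subseteq S$ and hence $m \in S$. At the same time, $m$ lies on a shortest path between $x$ and $p$, so $d(x, p) = d(x, m) + d(m, p)$.

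The crux of the argument is to collapse $m$ onto $p$ using minimality. Because $m \in S$ and $p$ minimizes distance to $x$ over $S$, we have $d(x, m) \geq d(x, p)$. Substituting into the displayed equality yields $d(x, p) \geq d(x, p) + d(m, p)$, which forces $d(m, p) = 0$ and hence $m = p$. But $m = m(x, p, y)$ lies on a shortest path between $x$ and $y$, so $p$ does too; equivalently $d(x, y) = d(x, p) + d(p, y)$, i.e. $p \in I_{xy}$. Since $y \in S$ was arbitrary, $p$ is a gate for $x$, which completes the proof.

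The main obstacle is really just the clean identification of the gate: everything hinges on choosing the distance-minimizing point $p$ and then recognizing that the median $m(x, p, y)$ is simultaneously trapped inside $S$ by convexity and constrained to lie on the $x$--$p$ geodesic, so that minimality forces $m = p$. I do not expect further technical difficulty, though I would double-check that the median operation is genuinely available (it is, by Definition \ref{def:median-graph}) and that finiteness of $V$ guarantees the minimizer exists. As a side remark, uniqueness of the gate then follows immediately, since any two gates $g, g'$ would satisfy $g \in I_{xg'}$ and $g' \in I_{xg}$, forcing $g = g'$, although the lemma only asserts existence.
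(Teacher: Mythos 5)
Your proof is correct. Note that the paper itself does not prove Lemma \ref{lem:gated-sets} at all: it is listed in Appendix B as a known structural property of median graphs and cited from the literature (\cite{Knuth2011,Klavzar1999,Mulder1978,Bandelt2008,Imrich2000}), so there is no in-paper argument to compare against. Your argument --- take $p \in S$ minimizing $d(x,\cdot)$, form the median $m$ of $x$, $p$, $y$, trap $m$ in $S$ via convexity of $I_{py}$, and collapse $m$ onto $p$ using $d(x,p) = d(x,m) + d(m,p)$ together with minimality --- is the standard textbook proof of this fact and is complete. One cosmetic remark: you attribute the property ``$m$ lies on a shortest path between each pair'' to Definition \ref{def:median-of-three}, which actually defines $m(x,y,z)$ as the minimizer of the distance sum; the on-all-three-geodesics property is what Definition \ref{def:median-graph} guarantees directly (and the two notions coincide on median graphs by Theorem \ref{thm:median-graph-characterization}). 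Citing Definition \ref{def:median-graph} for the existence and the geodesic property of $m$ would make the step airtight without changing anything else.
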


\begin{lemma}\label{lem:interval-convex}
  The interval $I_{xy}$ between any two nodes $x$ and $y$ in a median graph is convex.
\end{lemma}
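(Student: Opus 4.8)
The plan is to reduce the statement to the (essentially trivial) fact that intervals in hypercubes are convex, by exploiting the standard structural theorem that every median graph embeds isometrically into a hypercube (equivalently, is a partial cube); this is one of the well-known properties collected from \cite{Mulder1978,Klavzar1999,Imrich2000,Bandelt2008,Knuth2011}. Fix such an embedding, identifying each vertex $w$ with a binary string $(w_i)_i$ so that for $u,v \in V$ the graph distance $d(u,v)$ equals the Hamming distance $\lvert\{i : u_i \neq v_i\}\rvert$. Since the embedding is isometric, the graph interval $I_{xy}$ coincides with the Hamming interval restricted to $V$; concretely, $w \in I_{xy}$ if and only if $w_i = x_i$ for every coordinate $i$ on which $x$ and $y$ agree (on the remaining coordinates $w$ is unconstrained). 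I would first record this coordinatewise characterization, checking it directly from $d(x,w)+d(w,y)=d(x,y)$: a coordinate with $x_i=y_i$ forces $w_i=x_i$, since disagreeing there would add $2$ to the left side but nothing to the right.

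With this characterization in hand the convexity of $I_{xy}$ is immediate. Unwinding Definition~\ref{def:convex}, I must show that for any $a,b \in I_{xy}$ and any $c \in I_{ab}$ we have $c \in I_{xy}$. Let $i$ be a coordinate on which $x$ and $y$ agree. Because $a,b \in I_{xy}$, the characterization gives $a_i = x_i$ and $b_i = x_i$, so in particular $a_i = b_i$; that is, $i$ is a coordinate on which $a$ and $b$ agree. Applying the same characterization to $c \in I_{ab}$ then forces $c_i = a_i = x_i$. As $i$ ranged over all coordinates where $x$ and $y$ agree, $c$ satisfies the defining condition for $I_{xy}$, so $c \in I_{xy}$, and $I_{xy}$ is convex.

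The only nontrivial ingredient is therefore the isometric hypercube embedding, which is where all the structure of median graphs is used; everything after it is a one-line coordinate count. I would cite this embedding rather than reprove it, since it is a classical property listed in this appendix's references. If a self-contained argument were preferred, the same conclusion can be reached intrinsically through the median (majority) operation $m(\cdot,\cdot,\cdot)$: writing $w \in I_{xy} \iff m(x,w,y)=w$ (a consequence of the uniqueness in Definition~\ref{def:median-graph}) and invoking the distributive law $m(m(a,c,b),x,y)=m(m(a,x,y),m(c,x,y),m(b,x,y))$ shows $m(x,c,y)\in I_{ab}$, after which the uniqueness of the median pins down $m(x,c,y)=c$. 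I expect the bookkeeping in this intrinsic version to be the main obstacle, which is precisely why the hypercube embedding is the cleaner route.
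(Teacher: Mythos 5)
Your proposal is correct, but note that the paper does not actually prove this lemma: it appears in Appendix B as one of several ``properties that we use, which can be found in the following references,'' so the paper's own ``proof'' is a citation to the median-graph literature. Your argument is a genuine, essentially self-contained derivation, modulo the one classical ingredient you invoke (that every median graph embeds isometrically into a hypercube, i.e.\ is a partial cube). Given that embedding, your coordinatewise characterization of $I_{xy}$ is exactly right --- the identity $d(x,w)+d(w,y)=d(x,y)+2\lvert\{i : x_i=y_i,\ w_i\neq x_i\}\rvert$ shows $w\in I_{xy}$ iff $w$ agrees with $x$ on every coordinate where $x$ and $y$ agree --- and the convexity of $I_{xy}$ then follows by the clean transitivity argument you give. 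What your route buys is transparency: the lemma is reduced to a one-line coordinate count, at the cost of importing a nontrivial structural theorem that the paper also only cites. Your sketched intrinsic alternative via $w\in I_{xy}\iff m(x,w,y)=w$ and the long distributive law is also a standard path, but as you note it leaves a final step (pinning down $m(x,c,y)=c$ from $m(x,c,y)\in I_{ab}\cap I_{cx}\cap I_{cy}$) that needs a bit more care than ``uniqueness of the median'' alone; the hypercube route is the cleaner of the two and is complete as written.
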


\begin{lemma}\label{lem:median-graph-axioms}
  For any $a, b, c, d, e \in V$ where $V$ is a median graph, $m(a, b, m(c, d, e)) = m(m(a, b, c), m(a, b, d), e)$.
\end{lemma}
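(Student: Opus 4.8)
The plan is to prove this not by manipulating distances directly, but as an identity about the coordinatewise structure of a median graph, leveraging the \emph{hypercube coordinatization} that already underlies the paper's $E^*$ construction. For each edge class $e=(u,v) \in E^*$ I assign every vertex $w$ a bit $\chi_e(w)$ recording whether $w \in W_{uv}$ or $w \in W_{vu}$ (Definition \ref{def:winsets}). The distance formula $d(x,y) = \sum_{e \in E^*}\mathbbm{1}\{e \in I_{xy}\}$ (Lemma \ref{lem:winsets-4}) says precisely that $d(x,y)$ is the Hamming distance between the bit-vectors $\chi(x)$ and $\chi(y)$; in particular $\chi$ is injective, so two vertices coincide iff they agree in every coordinate.

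The main step will be to show that the median operation $m(\cdot,\cdot,\cdot)$ (Definition \ref{def:median-of-three}) is computed coordinatewise by Boolean majority, i.e. $\chi_e(m(x,y,z)) = \mathrm{maj}(\chi_e(x), \chi_e(y), \chi_e(z))$ for every $e \in E^*$. First I would observe, from the Hamming-distance reading of Lemma \ref{lem:winsets-4}, that $p \in I_{xy}$ (Definition \ref{def:interval}) iff $\chi_e(p) = \chi_e(x)$ on every coordinate where $\chi_e(x) = \chi_e(y)$, since a point on a shortest path cannot ``waste'' a coordinate on which its two endpoints already agree. Applying this to all three intervals $I_{xy}, I_{xz}, I_{yz}$ and using that among three bits at least two agree, one gets that any $p \in I_{xy} \cap I_{xz} \cap I_{yz}$ must take, in each coordinate, the majority of the three bits. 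Since a median graph has exactly one such point (Definition \ref{def:median-graph}), that point is $m(x,y,z)$ and its coordinates are exactly the coordinatewise majorities.

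Given this, both sides of the claimed identity reduce, coordinate by coordinate, to the same Boolean statement $\mathrm{maj}(a,b,\mathrm{maj}(c,d,e)) = \mathrm{maj}(\mathrm{maj}(a,b,c), \mathrm{maj}(a,b,d), e)$ on $\{0,1\}$, which I would dispatch by a one-line case analysis on whether the bits $a$ and $b$ agree: if $a=b$ both sides equal $a$, and if $a \neq b$ both sides equal $\mathrm{maj}(c,d,e)$. Because the two sides of the identity therefore have identical bit-vectors under $\chi$, and $\chi$ is injective, they are the same vertex, completing the proof.

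I expect the main obstacle to be the middle step: cleanly arguing that membership in the triple intersection of intervals forces the coordinatewise majority, and in particular that the coordinatewise-majority bit-vector is actually realized by a vertex of $G$ rather than merely being a consistent set of bits. The existence and uniqueness of that realizing vertex is exactly what Definition \ref{def:median-graph} guarantees, so the argument hinges on correctly combining it with the interval characterization; everything else is bookkeeping. As an alternative route, one could instead cite that the metric median makes $V$ a median algebra and derive the identity formally from the median-algebra distributive law $m(m(a,b,c),d,e) = m(a, m(b,d,e), m(c,d,e))$, but that derivation is more computational and I would prefer the coordinatewise argument for transparency.
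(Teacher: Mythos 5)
Your proof is correct, but note that the paper does not actually prove this lemma: it is listed in Appendix~B among ``properties that we use, which can be found in the following references,'' i.e.\ it is imported from the median-graph literature without argument. So there is no paper proof to match; what you have done is derive one of the cited black-box facts from the others. Your route is the standard one (median graphs as isometric subgraphs of hypercubes with coordinatewise Boolean majority as the median operation), and it hangs together: the interval characterization you need in the middle step is, in one direction, exactly Lemma~\ref{lem:winsets-5} as stated in the paper, and in the other direction follows from per-coordinate tightness of the triangle inequality in the Hamming metric given by Lemma~\ref{lem:winsets-4}; the identification of $m(x,y,z)$ with the unique point of $I_{xy}\cap I_{xz}\cap I_{yz}$ is supplied by Definition~\ref{def:median-graph} together with Lemmas~\ref{lem:shortest-implies-median} and~\ref{lem:condorcet-implies-unique-median}, so you never need to realize an arbitrary bit-vector as a vertex --- the worry you raise at the end dissolves. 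You also correctly avoid circularity by not invoking Lemma~\ref{lem:interval-intersection}, whose proof in the paper depends on the very identity you are proving. Two cosmetic caveats: the vertex named $e$ in the statement collides with your use of $e$ for edge classes of $E^*$, so rename one of them; and your proof is self-contained only modulo Lemmas~\ref{lem:winsets-2}, \ref{lem:winsets-4}, and~\ref{lem:winsets-5}, which the paper likewise cites without proof --- acceptable, but worth stating explicitly so the reader knows exactly which black boxes remain.
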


\begin{lemma}\label{lem:winsets-1}
  For any edge $e = (u, v)$ in a median graph $V$, $W_{uv}$ and $W_{vu}$ are convex sets that partition the nodes.
\end{lemma}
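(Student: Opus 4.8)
The plan is to reduce both claims to a single clean characterization of the win sets in terms of the median operation $m(\cdot,\cdot,\cdot)$ (Definition~\ref{def:median-of-three}), and then to exploit the median-algebra identity of Lemma~\ref{lem:median-graph-axioms}.

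First I would record the elementary fact that, since $u$ and $v$ are adjacent, $I_{uv} = \{u,v\}$: any $z$ with $d(u,z)+d(z,v) = d(u,v) = 1$ must be $u$ or $v$. Combined with the defining property of a median graph (Definition~\ref{def:median-graph}), this forces $m(w,u,v) \in I_{wu}\cap I_{wv}\cap I_{uv} \subseteq \{u,v\}$ for every vertex $w$, and this median is unique. I would then observe that $m(w,u,v) = u$ exactly when $u \in I_{wv}$, i.e. $d(w,v) = d(w,u)+1$, which is precisely the condition $w \in W_{uv}$; symmetrically $m(w,u,v) = v$ iff $w \in W_{vu}$. Since the median exists, is unique, and lies in $\{u,v\}$, exactly one of these holds for each $w$ (in particular no vertex is equidistant from $u$ and $v$, since such a vertex would leave the triple $\{w,u,v\}$ with no median, contradicting Definition~\ref{def:median-graph}). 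This establishes that $W_{uv}$ and $W_{vu}$ partition $V$, and yields the characterization $w \in W_{uv} \iff m(w,u,v) = u$.

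For convexity it then suffices to treat $W_{uv}$, the argument for $W_{vu}$ being identical. I would take $x, y \in W_{uv}$ and an arbitrary $w \in I_{xy}$, and aim to show $m(w,u,v) = u$. The key observation is that $w \in I_{xy}$ makes $w$ itself the median of $x,y,w$: it lies in $I_{xy}\cap I_{xw}\cap I_{yw}$, and the median is unique, so $w = m(x,y,w)$. Substituting this and applying Lemma~\ref{lem:median-graph-axioms} with $(a,b,c,d,e) = (u,v,x,y,w)$ gives
\[
m(u,v,w) = m\bigl(u,v,m(x,y,w)\bigr) = m\bigl(m(u,v,x),\, m(u,v,y),\, w\bigr).
\]
Because $x,y \in W_{uv}$, the characterization above gives $m(u,v,x) = m(u,v,y) = u$, so the right-hand side collapses to $m(u,u,w) = u$. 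Hence $m(w,u,v) = u$, i.e. $w \in W_{uv}$, which proves convexity.

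The main obstacle is the convexity step, and it is entirely about finding the right algebraic move rather than any hard estimate: one must spot that $w$ can be rewritten as $m(x,y,w)$ so that the median associativity law of Lemma~\ref{lem:median-graph-axioms} reduces $m(u,v,w)$ to $m(u,u,w)$. Once this is seen, no case analysis or distance bookkeeping is required. In writing the details I would only need to check that the arguments of Lemma~\ref{lem:median-graph-axioms} match my substitution, and that the degenerate evaluations used ($m(u,u,w) = u$, and the uniqueness giving $w = m(x,y,w)$) follow directly from Definition~\ref{def:median-of-three}.
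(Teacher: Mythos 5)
Your proof is correct. Note, however, that the paper does not actually prove this lemma: it is listed in Appendix B as a known property of median graphs, cited to the standard references, so there is no in-paper argument to compare against. Your derivation is a clean, self-contained one given the paper's toolkit. The partition half is exactly the standard argument (adjacency forces $I_{uv}=\{u,v\}$, the unique median of $\{w,u,v\}$ must lie there, and equidistant vertices would leave the triple with no median), and the convexity half is a nice algebraic shortcut: recognizing that $w\in I_{xy}$ means $w=m(x,y,w)$ and then collapsing $m(u,v,w)$ via the identity of Lemma~\ref{lem:median-graph-axioms} to $m(u,u,w)=u$. Every step checks out, including the characterization $w\in W_{uv}\iff m(w,u,v)=u$. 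The one caveat worth flagging is logical dependency rather than correctness: you lean on Lemma~\ref{lem:median-graph-axioms}, which the paper also cites without proof, and in several standard developments of median graph theory the convexity of halfspaces is established before (or independently of) that median distributive law, typically via the Djokovi\'c--Winkler relation. So your argument is valid within the paper's framework of citable facts, but it would be circular in a treatment that derives the median identity from halfspace convexity; if you wanted a fully self-contained proof you would instead argue directly with distances (showing that a shortest $x$--$y$ path cannot leave $W_{uv}$ because crossing the $e$-cut twice would contradict minimality).
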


\begin{lemma}\label{lem:winsets-2}
  For any two unique nodes, there exists at least one edge that partitions them.
\end{lemma}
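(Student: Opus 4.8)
The plan is to produce an explicit separating edge by reading it off a shortest path between the two given nodes. Let $a$ and $b$ be the two distinct nodes; since $a \neq b$ we have $d(a,b) \geq 1$, so any shortest path $P$ from $a$ to $b$ contains at least one edge. Let $e = (u,v)$ be an edge of $P$ with $u$ appearing before $v$ along $P$ (for concreteness one may take $u = a$ and $v$ its successor on $P$). I claim this edge partitions $a$ and $b$ in the sense of the win sets of Definition \ref{def:winsets}, i.e. $a \in W_{uv}$ and $b \in W_{vu}$.

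The key step is to use the additivity of distance along a shortest path. Because a subpath of a shortest path is itself a shortest path, distances add along $P$; in particular, since $u$ and $v$ are consecutive on $P$ with $u$ before $v$, we get $d(a,v) = d(a,u) + 1$ and $d(b,u) = d(b,v) + 1$. The first identity gives $d(a,u) = d(a,v) - 1 < d(a,v)$, so $a \in W_{uv}$; the second gives $d(b,v) = d(b,u) - 1 < d(b,u)$, so $b \in W_{vu}$. Hence $a$ and $b$ lie on opposite sides of $e$, which is exactly the assertion that $e$ partitions them.

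I expect no substantive obstacle here: the only care needed is to verify the two distance identities, which follow immediately from the subpath property above. I would also remark for completeness that this argument uses only connectedness of the graph, and that the stronger statement of Lemma \ref{lem:winsets-1} (which does invoke the median, hence bipartite, structure) is what additionally guarantees that \emph{every} node, not merely $a$ and $b$, is assigned to exactly one of $W_{uv}$ and $W_{vu}$. For the present statement we need only the weaker separation of the two specified nodes, so the direct shortest-path argument suffices.
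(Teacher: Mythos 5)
Your argument is correct. For the record, the paper does not actually prove Lemma \ref{lem:winsets-2}: it is listed in Appendix B as one of several standard structural facts about median graphs, with the proof deferred to the cited literature on median graphs. Your proposal therefore supplies a self-contained argument where the paper offers only a citation, and the argument itself is sound: taking any edge $e=(u,v)$ on a shortest $a$--$b$ path with $u$ preceding $v$, the subpath property gives $d(a,v)=d(a,u)+1$ and $d(b,u)=d(b,v)+1$, so $a\in W_{uv}$ and $b\in W_{vu}$ by Definition \ref{def:winsets}. Your closing remark is also well judged: this separation of the two named nodes uses only connectedness, whereas the stronger claim of Lemma \ref{lem:winsets-1} --- that $W_{uv}$ and $W_{vu}$ are convex and together cover \emph{every} node --- genuinely needs the median-graph (in particular bipartite) structure, and that stronger form is what the later lemmas (e.g.\ Lemma \ref{lem:winsets-4} and the definition of $E^*$) actually rely on. So your proof establishes exactly the stated lemma, no more, and does so correctly.
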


\begin{lemma}\label{lem:winsets-3}
  For a median graph with $n$ nodes, there are at most $\frac{n}{2}\log n$ edges in the graph.
\end{lemma}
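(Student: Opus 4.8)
The plan is to prove the bound by structural induction on median graphs, using Mulder's convex expansion theorem: every median graph is obtained from a single vertex by a sequence of \emph{convex expansions}, where an expansion takes two convex subgraphs $G_1, G_2$ covering the current graph, with convex intersection $G_0 = G_1 \cap G_2$, and replaces $G_0$ by two adjacent copies $G_0', G_0''$ joined by a perfect matching, $G_0'$ retaining the edges to $G_1 \setminus G_0$ and $G_0''$ retaining those to $G_2 \setminus G_0$. Writing $n, m$ for the vertex and edge counts before an expansion, $p = \lvert V(G_0)\rvert$, and $e_0$ for the number of edges of $G_0$, a direct count gives $n' = n + p$ and $m' = m + e_0 + p$ (each edge of $G_0$ is duplicated and $p$ matching edges are added). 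Since $G_0$ is convex it is itself a median graph, and since $G_0 \subseteq G$ we have $p \le n < n'$, so both $G$ and $G_0$ are strictly smaller than $G'$ and the induction hypothesis applies to each.

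Set $h(x) = \tfrac12 x \log_2 x$ with $h(0) = 0$; the claim is $m \le h(n)$. The base case (a single vertex, $n = 1$, $m = 0$) is immediate. For the inductive step, the hypothesis gives $m \le h(n)$ and $e_0 \le h(p)$, so $m' = m + e_0 + p \le h(n) + h(p) + p$, and it suffices to prove the analytic inequality
\[
h(n + p) - h(n) \ge h(p) + p \qquad (1 \le p \le n).
\]

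To establish this I would use convexity of $h$. Since $h'(x) = \tfrac12 \log_2 x + \tfrac{1}{2\ln 2}$,
\[
h(n+p) - h(n) - h(p) = \int_0^p \bigl(h'(n+t) - h'(t)\bigr)\,dt = \frac{p}{2}\left(\frac1p\int_0^p \log_2 \frac{n+t}{t}\,dt\right).
\]
The average inside the parentheses is increasing in $n$, so over the range $p \le n$ it is minimized at $n = p$; the substitution $t = ps$ reduces this boundary case to $\int_0^1 \log_2\frac{1+s}{s}\,ds = 2$, whence the whole expression is $\ge \tfrac{p}{2}\cdot 2 = p$, exactly as required. This completes the induction and yields $m \le \tfrac12 n \log_2 n$. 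The bound is tight, being attained by the hypercubes $Q_d$ (where $n = 2^d$ and $m = d\,2^{d-1}$), which also explains why the inductive inequality is an equality precisely in the ``full'' expansions $p = n$ (doubling the graph into a prism, the operation that builds hypercubes).

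The main obstacle is the analytic inequality, and specifically the observation that its worst case is the full expansion $n = p$; once that is identified, the boundary integral evaluates cleanly to $2$ and the rest — invoking the expansion theorem and the vertex/edge bookkeeping — is routine. A secondary point needing care is the justification that the split subgraph $G_0$ is again a median graph so the hypothesis applies to it, which follows from the standard fact that convex subgraphs of median graphs are median (the median of any triple lies on shortest paths between the three points and hence remains inside a convex set).
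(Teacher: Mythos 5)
Your proof is correct. Note that the paper does not actually prove this lemma: it appears in Appendix B as one of several ``properties of median graphs'' cited to the literature (Mulder, Imrich--Klav\v{z}ar, etc.), and the argument you give via Mulder's convex expansion theorem is essentially the standard proof found in those references. Your bookkeeping $n' = n+p$, $m' = m + e_0 + p$ is right, the reduction to the analytic inequality $h(n+p) - h(n) \ge h(p) + p$ for $1 \le p \le n$ is the correct inductive core, and your verification of it checks out: the integral representation is valid (the singularity of $h'$ at $0$ is integrable), the average is monotone in $n$ so the worst case is $n=p$, and $\int_0^1 \log_2\frac{1+s}{s}\,ds = 2$ gives exactly the needed bound, with equality precisely for the prism/hypercube expansions, matching the tight examples $Q_d$. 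The one hypothesis worth stating explicitly (which you do flag) is that both the contracted graph $G$ and the convex subgraph $G_0$ are again median graphs, so that strong induction on the vertex count applies to each.
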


\begin{lemma}\label{lem:winsets-4}
  Let $E^*$ denote a minimal set of edges such that the win sets of $E^*$ partition all the nodes of the graph. Then the distance between two nodes $x$ and $y$ is equal to the number of edges in $E^*$ that separate $x$ and $y$.
\end{lemma}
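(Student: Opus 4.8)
The plan is to reduce the statement to a single structural fact: every edge of the graph has its two endpoints separated by exactly one cut coming from $E^*$. Here I call the bipartition $\{W_{uv}, W_{vu}\}$ induced by an edge $e = (u,v)$ its \emph{cut}, and I say $e$ \emph{separates} two nodes when they lie in different win sets of $e$. First I would record two consequences of the definition of $E^*$. Since the win sets of $E^*$ separate every pair of distinct nodes, at least one edge of $E^*$ separates the endpoints $p,q$ of any graph edge $(p,q)$. And since $E^*$ is minimal, no two of its edges induce the same cut: two edges with identical cuts separate exactly the same pairs of nodes, so one of them could be removed without destroying the separating property.

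The crux is the following sub-lemma: if $e=(u,v)$ separates the endpoints of a graph edge $(p,q)$, with $p \in W_{uv}$ and $q \in W_{vu}$, then $W_{uv} = W_{pq}$; that is, $e$ induces precisely the cut of $(p,q)$. I would prove it using gatedness (Lemma \ref{lem:gated-sets}): $W_{uv}$ is convex (Lemma \ref{lem:winsets-1}), hence gated, so it has a gate $g$ for $q$ through which some shortest path from $q$ to each vertex of $W_{uv}$ passes. Applying this to $p \in W_{uv}$, a shortest path from $q$ to $p$ has length $1$, so $g \in \{p, q\}$; since $g \in W_{uv}$ but $q \in W_{vu}$, we get $g = p$. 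Then for every $s \in W_{uv}$ a shortest path from $q$ to $s$ runs through $p$, giving $d(q,s) = 1 + d(p,s)$, so $s$ is strictly closer to $p$ than to $q$ and $s \in W_{pq}$; thus $W_{uv} \subseteq W_{pq}$. The symmetric argument, gating $p$ onto the convex set $W_{vu}$, gives $W_{vu} \subseteq W_{qp}$, and since each pair partitions $V$ (Lemma \ref{lem:winsets-1}) these inclusions force $W_{uv} = W_{pq}$.

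Combining the sub-lemma with the two consequences above yields the exactly-one claim: at least one edge of $E^*$ separates $(p,q)$, every edge that does must induce the cut of $(p,q)$, and $E^*$ contains no two edges with the same cut, so precisely one edge of $E^*$ separates each graph edge. To finish, I fix nodes $x,y$, set $F = \{e \in E^* : e \text{ separates } x, y\}$, and take a shortest path $x = w_0, w_1, \ldots, w_m = y$ with $m = d(x,y)$. Each step $(w_{i-1}, w_i)$ is separated by exactly one edge of $E^*$, so the total number of cut-crossings along the path, summed over steps, equals $m$. Counting instead per cut: if $e \notin F$ then $x,y$ lie in a common win set $W$, which is convex, so the entire shortest path stays in $W$ (a subpath of a shortest path is itself shortest, and $I_{xy} \subseteq W$), and $e$ is crossed zero times; if $e \in F$ then $x,y$ lie on opposite sides, the path crosses at least once, and a second crossing would force a subpath between two same-side vertices to leave the convex win set, which is impossible, so $e$ is crossed exactly once. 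Hence the total number of crossings also equals $|F|$, giving $d(x,y) = m = |F|$, as claimed.

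The main obstacle is the sub-lemma; once it is in hand, the crossing count is routine. Notably, the whole argument rests only on the convexity and partition property of win sets (Lemma \ref{lem:winsets-1}) and on gatedness of convex sets (Lemma \ref{lem:gated-sets}), so I would not need bipartiteness or any explicit distance computations.
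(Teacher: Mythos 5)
The paper gives no proof of Lemma~\ref{lem:winsets-4}: it is listed among the properties of median graphs cited to the references \cite{Knuth2011,Klavzar1999,Mulder1978,Bandelt2008,Imrich2000}, so there is no in-paper argument to compare against. Your proof is correct and self-contained, and it is essentially the standard Djokovi\'c--Winkler argument for partial cubes, rederived from only the lemmas the paper does state. The two halves both check out. The sub-lemma (any $e=(u,v)\in E^*$ separating the endpoints of a graph edge $(p,q)$ induces exactly the cut $\{W_{pq},W_{qp}\}$) is proved cleanly via gatedness: the gate $g$ of $q$ in the convex set $W_{uv}$ satisfies $d(q,g)+d(g,p)=d(q,p)=1$ with $g\neq q$, forcing $g=p$, whence $d(q,s)=1+d(p,s)$ for all $s\in W_{uv}$; combined with minimality of $E^*$ (no two edges of $E^*$ with identical cuts) and the separating property (at least one edge of $E^*$ cuts each adjacent pair), this gives that each graph edge is cut by exactly one member of $E^*$. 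The double count over a shortest $x$--$y$ path then works because convexity of win sets (Lemma~\ref{lem:winsets-1}) forces each cut in $F$ to be crossed exactly once and each cut outside $F$ zero times, the subpath-of-a-shortest-path observation ruling out multiple crossings. The only implicit dependence worth flagging is that both Lemma~\ref{lem:gated-sets} and Lemma~\ref{lem:winsets-1} are median-graph facts, so your argument, like the lemma itself, is specific to median graphs --- which is exactly the setting in which the paper invokes it.
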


\begin{lemma}\label{lem:winsets-5}
  $x \in I_{yz} \iff x \in W_{uv}$ if $y, z \in W_{uv}$ for all $e =(u,v)$.
\end{lemma}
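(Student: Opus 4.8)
The plan is to first disambiguate the statement as the biconditional: $x \in I_{yz}$ if and only if, for every edge $e = (u,v)$, whenever $y, z \in W_{uv}$ we also have $x \in W_{uv}$. Intuitively this says that $x$ lies on a geodesic between $y$ and $z$ exactly when $x$ falls on the same side as $y$ and $z$ for every cut that does not already separate them. I would prove the two directions separately, leaning on the structural lemmas already established for median graphs.

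For the forward direction ($\Rightarrow$), suppose $x \in I_{yz}$ and fix any edge $e = (u,v)$ with $y, z \in W_{uv}$. By Lemma \ref{lem:winsets-1}, $W_{uv}$ is convex, so by the definition of convexity (Definition \ref{def:convex}) every shortest path between $y$ and $z$ lies inside $W_{uv}$; that is, $I_{yz} \subseteq W_{uv}$. Since $x \in I_{yz}$ by assumption, $x \in W_{uv}$. This direction is therefore immediate from convexity and needs no computation.

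For the backward direction ($\Leftarrow$), I would use Lemma \ref{lem:winsets-4}, which lets me write each of $d(y,z)$, $d(y,x)$, and $d(x,z)$ as the number of edges of $E^*$ that separate the corresponding pair. The goal is the triangle equality $d(y,z) = d(y,x) + d(x,z)$, which is exactly the condition for $x \in I_{yz}$. I would argue edge by edge over $e \in E^*$, using that each such $e$ partitions the nodes into $W_{uv}$ and $W_{vu}$ (Lemma \ref{lem:winsets-1}). If $y$ and $z$ lie on the same side, the hypothesis (applied to $e$, or to its reversal $(v,u)$, depending on which side) forces $x$ onto that same side, so $e$ separates none of the three pairs and contributes $0 = 0 + 0$. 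If instead $y$ and $z$ lie on opposite sides, then $x$ agrees with exactly one of them, so $e$ separates $\{y,z\}$ together with exactly one of $\{y,x\}$ and $\{x,z\}$, contributing $1 = 1 + 0$ or $1 = 0 + 1$. Summing these per-edge identities over $E^*$ yields the desired equality, hence $x \in I_{yz}$.

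The main obstacle is the careful bookkeeping in the backward direction: I must handle the symmetric case where $y$ and $z$ both lie in $W_{vu}$ rather than $W_{uv}$ (invoking the hypothesis for the reversed edge), and I must note that applying the hypothesis only to the representative edges in $E^*$ suffices even though it is stated for all edges of $G$ — which is immediate since $E^* \subseteq E$. Everything else reduces to the partition and convexity properties already available.
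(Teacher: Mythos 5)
Your proof is correct, but note that the paper itself gives no proof of this lemma: it is listed in Appendix B among standard ``properties of median graphs'' and attributed to the cited references, so there is no in-paper argument to compare against. Your reconstruction is essentially the textbook one. The forward direction is exactly as you say: $W_{uv}$ is convex (Lemma \ref{lem:winsets-1}), so $y,z\in W_{uv}$ forces $I_{yz}\subseteq W_{uv}$ by Definition \ref{def:convex}. The backward direction via Lemma \ref{lem:winsets-4} is also sound: for each $e\in E^*$ the win sets partition $V$, so either $y,z$ lie on the same side --- in which case your hypothesis places $x$ there too and the edge contributes $0=0+0$ to $d(y,z)$ versus $d(y,x)+d(x,z)$ --- or $y,z$ are separated, in which case the hypothesis is vacuous and the edge automatically contributes $1=1+0$ or $1=0+1$; summing over $E^*$ gives the triangle equality defining $x\in I_{yz}$. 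Your two housekeeping remarks (handling the orientation $(v,u)$, and the observation that restricting the hypothesis to $E^*\subseteq E$ suffices) are exactly the right details to flag, and nothing further is needed.
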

%

\section{Known random walk lemma}

We use a lemma which is found in \cite{Goel2012}, which discusses how to scale up voting also using the theme of triads.

\begin{lemma}\label{lem:urn-function}
  Consider an absorbing random walk $X_t$ on the integers $0, 1, \ldots, n$ with the following transition probabilities:
  \begin{align*}
    \prob[X_{t+1} = X_t + \Delta] = \begin{cases}
      \frac{3X_t^2(n-X_t)}{n^3} &\text{ if $\Delta = 1$}\\
      \frac{3X_t(n-X_t)^2}{n^3} &\text{ if $\Delta = -1$}\\
      \frac{X_t^3 + (n-X_t)^3}{n^3} &\text{ if $\Delta = 0$}
    \end{cases}
  \end{align*}
  Let $T$ denote the absorption time at which the random walk first hits state $0$ or $n$. Then,
  \begin{align*}
    \prob[X_T = n] &= \left(\frac{1}{2}\right)^{n-1}\sum_{j=1}^n\binom{n-1}{j-1}\\
    \mathbb{E}[T] &\leq n\ln n + O(n)
  \end{align*}
\end{lemma}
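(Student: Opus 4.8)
The plan is to read $\{X_t\}$ as a nearest-neighbor birth--death chain on $\{0,1,\dots,n\}$ with both endpoints absorbing, writing $p_i=\frac{3i^2(n-i)}{n^3}$ for the up-probability, $q_i=\frac{3i(n-i)^2}{n^3}$ for the down-probability, and $r_i=1-p_i-q_i$ for the hold-probability. The two assertions should be attacked separately: the absorption probability depends only on the \emph{ratio} $q_i/p_i$ and is blind to the self-loops, whereas the expected time is dominated by the holding times that those self-loops create.

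For the absorption probability, first observe the crucial simplification $q_i/p_i=(n-i)/i$. Define $h(i)=\prob[X_T=n\mid X_0=i]$; first-step analysis gives $h(i)=p_ih(i+1)+r_ih(i)+q_ih(i-1)$, which rearranges to $p_i\bigl(h(i+1)-h(i)\bigr)=q_i\bigl(h(i)-h(i-1)\bigr)$ with boundary values $h(0)=0$, $h(n)=1$. Setting $\delta_i=h(i)-h(i-1)$, the recurrence becomes $\delta_{i+1}=\frac{n-i}{i}\delta_i$, so the first differences telescope into $\delta_k=\delta_1\prod_{m=1}^{k-1}\frac{n-m}{m}=\delta_1\binom{n-1}{k-1}$. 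Summing and using $h(n)=1$ together with $\sum_{k=1}^{n}\binom{n-1}{k-1}=2^{n-1}$ pins down $\delta_1=(1/2)^{n-1}$, giving $h(i)=(1/2)^{n-1}\sum_{j=1}^{i}\binom{n-1}{j-1}$; this is the stated identity with the initial count inserted as the upper summation limit.

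For the expected time I would pass to the jump chain obtained by deleting self-loops, whose up/down probabilities at state $i$ are the normalized values $p_i/(p_i+q_i)=i/n$ and $q_i/(p_i+q_i)=(n-i)/n$. Each visit to $i$ costs a geometric waiting time of mean $1/(p_i+q_i)=\frac{n^2}{3i(n-i)}$, so $\mathbb{E}[T\mid X_0]=\sum_{i=1}^{n-1}G(X_0,i)\,\frac{n^2}{3i(n-i)}$, where $G(\cdot,i)$ is the Green's function (expected number of visits) of the absorbed jump chain. Because the jump chain drifts outward --- its up-bias $i/n$ exceeds $1/2$ exactly when $i>n/2$, and symmetrically below --- the visit counts $G(X_0,i)$ remain $O(1)$ in expectation for states bounded away from the endpoints, and the harmonic split $\frac{1}{i(n-i)}=\frac1n\bigl(\frac1i+\frac1{n-i}\bigr)$ converts the weighted sum into $\sum_i \frac1i\sim\ln n$, producing the $n\ln n$ main term. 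Equivalently, one can solve the expected-time recurrence directly: writing $u_i=\tau_i-\tau_{i+1}$ for $\tau_i=\mathbb{E}[T\mid X_0=i]$, the relation $(p_i+q_i)\tau_i=1+p_i\tau_{i+1}+q_i\tau_{i-1}$ collapses to the first-order recurrence $p_iu_i-q_iu_{i-1}=1$, which integrates against the same binomial weights and is summed using $\tau_0=\tau_n=0$.

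The absorption half is a clean telescoping and I expect no trouble there. The real obstacle is the expected-time bound with the \emph{sharp} leading constant $1$ rather than a generic $O(n\log n)$: the holding times blow up to $\Theta(n)$ per visit at the two boundary states $i\in\{1,n-1\}$, so one must show the Green's-function mass near the endpoints contributes only $O(n)$ and does not inflate the $\ln n$ coefficient. A safe way to secure just the upper bound is a Lyapunov argument --- exhibit a potential $\Phi$ with $\Phi(0)=\Phi(n)=0$ and $\mathbb{E}[\Phi(X_{t+1})-\Phi(X_t)\mid X_t=i]\le-1$ on the interior, whence $\mathbb{E}[T]\le\Phi(X_0)$ --- with $\Phi$ built from partial harmonic sums so that $\max_i\Phi(i)\le n\ln n+O(n)$; verifying the drift inequality and the bound on $\max_i\Phi$ is where the careful (but routine) estimation lives.
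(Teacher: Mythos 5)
First, note that the paper itself does not prove this lemma: it is imported verbatim from the cited reference \cite{Goel2012}, so your argument can only be measured against that source rather than an in-paper proof. Your treatment of the absorption probability is correct and complete: the identity $q_i/p_i=(n-i)/i$, the telescoping of the first differences $\delta_k=\delta_1\binom{n-1}{k-1}$, and the normalization $\delta_1=(1/2)^{n-1}$ give exactly the gambler's-ruin formula, and you are right that the displayed statement only makes sense with the initial state inserted as the upper summation limit (as literally written the right-hand side equals $1$); this is indeed how the lemma is invoked in Lemma \ref{lem:edge-markov-chain}, where the upper limit is the initial token count $\frac{n}{2}-\sqrt{cn\ln n}$.

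The expected-time half, however, is not yet a proof, and its one concrete quantitative claim is wrong. You assert that the Green's function $G(X_0,i)$ of the loop-erased jump chain is $O(1)$ for states bounded away from the endpoints, but this fails near $i=n/2$: there the up-bias $i/n$ differs from $\frac{1}{2}$ by $O(1/\sqrt{n})$ over a window of width $\sqrt{n}$, the escape probability from $i$ is $\Theta(1/\sqrt{n})$ (e.g.\ $\binom{n-1}{n/2}\big/\sum_{l\geq n/2}\binom{n-1}{l}=\Theta(1/\sqrt{n})$), and hence $G(X_0,i)=\Theta(\sqrt{n})$. The final bound survives there only because the holding time $\frac{n^2}{3i(n-i)}$ is $O(1)$ in that window, so the central contribution is $O(\sqrt{n}\cdot\sqrt{n})=O(n)$ --- but that is a different argument from the one you give. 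More importantly, the part that actually produces the $n\ln n$ term, namely summing $G(X_0,i)\cdot\frac{n^2}{3i(n-i)}$ over states near the absorbing boundaries where the per-visit holding time is $\Theta(n/i)$, is never carried out with enough care to pin the leading constant at $1$; you defer it to an unspecified Lyapunov potential $\Phi$ whose drift inequality and maximum you do not verify. As it stands you have rigorously established the formula for $\prob[X_T=n]$ but only gestured at $\mathbb{E}[T]\leq n\ln n+O(n)$; to close the gap you would need to either exhibit $\Phi$ and check $\mathbb{E}[\Phi(X_{t+1})-\Phi(X_t)\mid X_t=i]\leq -1$ on the interior, or solve the first-order recurrence $p_iu_i-q_iu_{i-1}=1$ explicitly against the binomial weights and bound the resulting double sum.
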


\section{Proof of Theorem 6.3}\label{appsec:dyadic-proof}

\begin{proof}[of Theorem \ref{thm:pairwise}]
  Consider the line graph, represented on the reals $\mathcal{R}$, for which $n = 2k+1$ and $x_i = i$ for $i = 1, 2, \ldots, k$ and $x_i = 0$ otherwise. For any group of two participants $S = \{x, y\}$, the convex hull of the participants is simply the interval between them. Clearly, $S$ has the same relationship structure as itself under $\psi(t) = x + y - t$, i.e. the interval ${x, x+1, \ldots, y}$ is isomorphic to itself when reversed. Invoking local consistency implies that
  $g(S) \in \{x, x+1, \ldots, y\}$ and $g(S) = \psi(g(S))$, implying that $g$ must be symmetric about the point $\frac{1}{2}(x + y)$. Define $Y_t = \frac{1}{n}\sum_i y_i^t$ and consider time $t+1$ for which $S_{t+1} = \{j_1, j_2\}$. Then,
  \begin{align*}
    n\mathbb{E}[Y_{t+1} \mid y^t] &= \sum_i \mathbb{E}[y_i^{t+1} \mid y^t]\\
    &= \left(\sum_{i \not\in S_{t+1}}y_i^t\right) + \mathbb{E}[y_{j_1}^{t+1} + y_{j_2}^{t+1} \mid y^t]\\
    &= \left(\sum_{i \not\in S_{t+1}}y_i^t\right) + \mathbb{E}[2\cdot\frac{1}{2}(y_{j_1}^{t} + y_{j_2}^t) \mid y^t] = nY_t
  \end{align*}
  Therefore, $Y_t$ is a martingale, and for any time $t$, $\mathbb{E}[Y_t] = Y_0$. Noting that $D(\cdot)$ is convex, we can apply Jensen's inequality to get $\mathbb{E}\left[\frac{1}{n}\sum_iD(y_i^t)\right] \geq D\left(\mathbb{E}\left[\frac{1}{n}\sum_i y_i^t\right]\right) = D(Y_0)$. It is not hard to verify that $D(Y_0) = (\frac{9}{8}-o(1))D(x^*)$ for the line graph given, which concludes our proof.
\end{proof}

\section{Proof of Theorem 7.2}\label{appsec:median-graph-characterization}

\begin{proof}[of Theorem \ref{thm:median-graph-characterization}]
  We will break this proof up into several lemmas. Consider a set $S = \{x, y, z\}$ and a node $c$. We will refer to the following possible statements for any node $v$
  \begin{enumerate}
    \item $v$ is a Condorcet winner for $S$,
    \item $v$ lies on a shortest path between $x$ and $y$, $x$ and $z$, and $y$ and $z$,
    \item $v$ is a generalized median for $S$.
  \end{enumerate}
  We first show that if $c$ satisfies $1$, then $c$ satisfies $2$ (Lemma \ref{lem:condorcet-implies-shortest}) and is the unique node satisfying $3$ (Lemma \ref{lem:condorcet-implies-unique-median}). But since any node satisfying $2$ must satisfy $3$ (Lemma \ref{lem:shortest-implies-median}), then $c$ must be the unique node satisfying $2$. Therefore, $G$ satisfies $P$ only if $G$ is a median graph (see Definition \ref{def:median-graph}).

  We then show that if $c$ is the unique node satisfying $2$, then $c$ satisfies $1$ (Lemma \ref{lem:unique-shortest-implies-condorcet}). Therefore $G$ satisfies $P$ if $G$ is a median graph (see Definition \ref{def:median-graph}).
\end{proof}

\begin{lemma}\label{lem:condorcet-implies-shortest}
  Consider any three nodes $x, y, z$ of an unweighted and undirected graph. Suppose that there exists a node $c$ which is a Condorcet winner for $x, y, z$. Then $c$ lies on a shortest path between $x$ and $y$, $x$ and $z$, and $y$ and $z$.
\end{lemma}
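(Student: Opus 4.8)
The plan is to argue by contradiction: assuming $c$ is a Condorcet winner for $\{x,y,z\}$, I would show that if $c$ failed to lie on a shortest path between some pair, then I could exhibit a rival node that $c$ does not beat, violating Definition \ref{def:condorcet}. By the symmetry of the three pairs it suffices to treat the pair $(x,y)$ and show $c \in I_{xy}$; the cases where $c$ coincides with $x$ or $y$ are immediate, so I may assume $c \notin \{x,y\}$. The key observation I would exploit throughout is that, with only three voters, $c$ beats a rival $w$ only if at least two of $x,y,z$ are strictly closer to $c$ than to $w$; equivalently, $c$ fails against $w$ as soon as at least one voter strictly prefers $w$ and no more than one voter strictly prefers $c$.

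First I would record a preliminary distance bound by taking the rival to be the terminal $x$ itself. Since $c \neq x$, voter $x$ strictly prefers $x$ to $c$, so for $c$ to beat $x$ both $y$ and $z$ must strictly prefer $c$; in particular $d(y,c) < d(x,y)$. (Symmetrically $d(x,c) < d(x,y)$.) This guarantees that the quantity $d(x,y) - d(y,c)$ is a nonnegative integer at most $d(x,y)$, which is what lets me locate a legitimate rival on a shortest path.

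The main step is to suppose $c \notin I_{xy}$, i.e.\ $d(x,c) + d(y,c) > d(x,y) =: L$, and fix a shortest $x$-$y$ path $x = p_0, p_1, \dots, p_L = y$, so that $d(x,p_i) = i$ and $d(y,p_i) = L - i$. I would choose the rival $p = p_{i^*}$ with $i^* = L - d(y,c)$, which is a valid vertex by the preliminary bound and differs from $c$ since $p \in I_{xy}$ while $c \notin I_{xy}$. Then voter $y$ is indifferent, because $d(y,p) = L - i^* = d(y,c)$, whereas voter $x$ strictly prefers $p$, because $d(x,p) = i^* = L - d(y,c) < d(x,c)$ precisely by the assumption $c \notin I_{xy}$. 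Hence at least one voter ($x$) strictly prefers $p$ to $c$ while at most one ($z$) can strictly prefer $c$ to $p$, so $c$ does not beat $p$ --- contradicting that $c$ is a Condorcet winner.

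I expect the delicate part to be the counting at the chosen rival rather than any geometric difficulty. The rival $p$ has to be engineered so that the argument is insensitive to where the third terminal $z$ sits: by making $y$ exactly indifferent and $x$ strictly in favor of $p$, even a single vote from $z$ for $c$ cannot produce the strict majority that Definition \ref{def:condorcet} demands. The two supporting subtleties are ensuring $p$ is an actual vertex on the path (handled by the preliminary bound $d(y,c) \le L$) and ruling out the degenerate coincidences of $c$ with a terminal (handled separately, and trivial); once these are in place the contradiction is immediate, and the same argument applied to the pairs $(x,z)$ and $(y,z)$ completes the proof.
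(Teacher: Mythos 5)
Your proof is correct and follows essentially the same route as the paper's: both derive the preliminary bound from the fact that $c$ must beat a terminal vertex, and both place a rival $p$ on a shortest $x$--$y$ path so that one of $x,y$ is exactly indifferent between $p$ and $c$ while the other strictly prefers $p$, making a strict majority for $c$ impossible regardless of $z$. The only difference is cosmetic --- the paper chooses $p$ with $d(x,p)=d(x,c)$ (so $x$ is indifferent and $y$ strictly prefers $p$), whereas you choose $p$ with $d(y,p)=d(y,c)$; the two are symmetric variants of the same argument.
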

\begin{proof}
  Suppose for the sake of contradiction that $c$ does not lie on a shortest path between $x$ and $y$, so that 
  \begin{align}
    d(x, c) + d(c, y) > d(x, y)\label{eqn:2-1}
  \end{align}
  Since $c$ is a Condorcet winner, it must beat $y$ in a pairwise election. But since $y$ would clearly vote for itself over $c$, both $x$ and $z$ must vote for $c$. This implies that 
  \begin{align}
    d(x, y) > d(x, c)\label{eqn:2-2}
  \end{align} 
  Now, since the graph is unweighted, there lies a point $p \in I_{xy}$ such that 
  \begin{align}
    d(x, c) = d(x, p)\label{eqn:2-3}
  \end{align} 
  Applying equations (\ref{eqn:2-1}, \ref{eqn:2-2}, \ref{eqn:2-3}) gives us $d(y, p) = d(x, y) - d(x, p) = d(x, y) - d(x, c) < d(y, c)$. Then we have a point $p$ such that $x$ is ambivalent between $p$ and $c$, and $y$ prefers $p$. Then $c$ cannot be a Condorcet winner, and we have a contradiction. Therefore $c$ must lie on a shortest path between $x$ and $y$. The same argument shows that $c$ must also lie on a shortest path between $x$ and $z$, and $y$ and $z$.
\end{proof}

\begin{lemma}\label{lem:condorcet-implies-unique-median}
  Consider any three nodes $x, y, z$ of an unweighted and undirected graph. Suppose that there exists a node $c$ which is a Condorcet winner for $x, y, z$. Then $c$ is the unique generalized median for $x, y, z$.
\end{lemma}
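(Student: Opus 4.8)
The plan is to first establish that $c$ attains the minimum of $D(v)=d(v,x)+d(v,y)+d(v,z)$, and then to argue that no other node can tie it. By Lemma~\ref{lem:condorcet-implies-shortest}, the Condorcet winner $c$ lies simultaneously on a shortest path between each pair, so $d(x,c)+d(c,y)=d(x,y)$, $d(x,c)+d(c,z)=d(x,z)$, and $d(y,c)+d(c,z)=d(y,z)$. Summing these three identities and dividing by two gives the clean value $D(c)=\tfrac12\big(d(x,y)+d(x,z)+d(y,z)\big)$. On the other hand, for an \emph{arbitrary} node $v$ the triangle inequality yields $d(v,x)+d(v,y)\geq d(x,y)$ together with the two analogous bounds; adding them shows $2D(v)\geq d(x,y)+d(x,z)+d(y,z)=2D(c)$. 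Hence $D(v)\geq D(c)$ for every $v$, so $c$ is a generalized median.

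For uniqueness, I would suppose some $v\neq c$ also satisfies $D(v)=D(c)$. Then the chain of inequalities above must be tight, which forces each of the three triangle inequalities to become an equality; equivalently $v$ lies on a shortest path between every pair of $x,y,z$, just as $c$ does. The key step is then purely algebraic: writing $a_1=d(x,c)$, $a_2=d(y,c)$, $a_3=d(z,c)$ and $b_1=d(x,v)$, $b_2=d(y,v)$, $b_3=d(z,v)$, the fact that both $c$ and $v$ lie on all three intervals gives $a_1+a_2=b_1+b_2$, $a_1+a_3=b_1+b_3$, and $a_2+a_3=b_2+b_3$. Setting $\delta_i=a_i-b_i$, this $3\times 3$ system reads $\delta_1+\delta_2=\delta_1+\delta_3=\delta_2+\delta_3=0$, whose unique solution is $\delta_i=0$ for each $i$; thus $v$ is exactly as far from $x$, from $y$, and from $z$ as $c$ is.

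The final step is to collide this conclusion with the definition of a Condorcet winner (Definition~\ref{def:condorcet}). Since every one of $x,y,z$ is \emph{equidistant} from $c$ and $v$, none of them strictly prefers $c$ to $v$, so $c$ does not beat $v$ in the pairwise comparison, contradicting the assumption that $c$ is a Condorcet winner while $v\neq c$. Therefore $c$ is the unique minimizer of $D$, i.e.\ the unique generalized median. I expect the main obstacle to be the uniqueness direction rather than the existence direction: one must rule out the possibility that a competing median differs from $c$ only through ties at the vertices, and the equidistance identities $\delta_i=0$ are exactly what close this gap, converting the Condorcet hypothesis (which a priori guarantees only a strict majority and permits indifferent voters) into an outright contradiction.
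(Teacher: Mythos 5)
Your proof is correct, and it takes a genuinely different route from the paper's. The paper argues by contradiction with a case analysis on how the three voters split between $c$ and a competing median $m$: the easy case is when nobody prefers $m$, and the hard case (two prefer $c$, one prefers $m$) is handled by constructing an auxiliary point $p \in I_{xm}$ with $d(x,p)=d(x,c)$ (using unweightedness) and invoking Lemma~\ref{lem:median-three-distance} to assemble three inequalities that sum to $D(c)<D(m)$. You instead lean on Lemma~\ref{lem:condorcet-implies-shortest} (legitimately --- it is proved independently, so there is no circularity) to pin $D(c)$ at the triangle-inequality floor $\tfrac12\bigl(d(x,y)+d(x,z)+d(y,z)\bigr)$; any tie then forces the competitor $v$ onto all three intervals, the $3\times 3$ linear system forces $d(\cdot,c)=d(\cdot,v)$ at each vertex, and equidistance violates the \emph{strict} majority demanded by Definition~\ref{def:condorcet}. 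Your version is shorter, dispenses with Lemma~\ref{lem:median-three-distance} and the auxiliary-point construction, and makes explicit that $c$ actually attains the minimum (the paper leaves this implicit, deducible only because a minimizer must exist and every $m\neq c$ is excluded); the trade-off is that it is not self-contained, inheriting its dependence on unweightedness through Lemma~\ref{lem:condorcet-implies-shortest} rather than using it directly.
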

\begin{proof}
  Suppose for the sake of contradiction, that another node $m$ is a generalized median. Since $c$ is a Condorcet winner, we can split the proof into the following two cases.

  {\it Case 1:} None of $x, y, z$ prefer $m$ to $c$, and at least one prefers $c$ to $m$. In this case, it is trivially true that $d(x, c) + d(y, c) + d(z, c) < d(x, m) + d(y, m) + d(z, m)$, so $m$ cannot be a generalized median.

  {\it Case 2:} Two of $x, y, z$ ($x, y$ without loss of generality) prefer $c$ to $m$ and $z$ prefers $m$ to $c$. Since the graph is unweighted, there lies a point $p \in I_{xm}$ such that 
  \begin{align}
    d(x, p) = d(x, c)\label{eqn:2-4}
  \end{align} 
  But then, we must have $d(y, p) \geq d(y, m)$ (Lemma \ref{lem:median-three-distance}). Since $y$ prefers $c$ to $m$, this implies that 
  \begin{align}
    d(y, p) > d(y, c)\label{eqn:2-5}
  \end{align}
  Since $c$ is a Condorcet winner, then equations (\ref{eqn:2-4}, \ref{eqn:2-5}) imply 
  \begin{align}
    d(z, p) \geq d(z, c)\label{eqn:2-6}
  \end{align}
  Applying (\ref{eqn:2-6}), the triangle inequality, the fact that $p \in I_{xm}$, and (\ref{eqn:2-4}) results in $d(z, c) \leq d(z, p) \leq d(z, m) + d(m, p) = d(z, m) + d(m, x) - d(p, x) = d(z, m) + d(m, x) - d(x, c)$. Adding $d(x, c)$ on both sides tells us that 
  \begin{align}
    d(z, c) + d(x, c) \leq d(z, m) + d(x, m)\label{eqn:2-7}
  \end{align}
  Repeating the same argument with $p \in I_{ym}$ gives us 
  \begin{align}
    d(z, c) + d(y, c) \leq d(z, m) + d(y, m)\label{eqn:2-8}
  \end{align}
  But we also know that 
  \begin{align}
    d(x, c) + d(y, c) < d(x, m) + d(y, m)\label{eqn:2-9}
  \end{align}
  since $x$ and $y$ prefer $c$ to $m$. Adding equations (\ref{eqn:2-7}, \ref{eqn:2-8}, \ref{eqn:2-9}) together, we get $2(d(x, c) + d(y, c) + d(z, c)) < 2(d(x, m) + d(y, m) + d(z, m))$, which means that $m$ is not a generalized median and we have a contradiction. Therefore, $c$ must be the unique generalized median.
\end{proof}

\begin{lemma}\label{lem:median-three-distance}
  Consider any three nodes $x, y, z$ of an unweighted and undirected graph and a node $m$ which is a generalized median for $x, y, z$. Then for any point $p \in I_{xm}$, $d(y, p) \geq d(y, m)$.
\end{lemma}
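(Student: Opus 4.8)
The plan is to argue by contradiction, exploiting directly the optimality of $m$ as a minimizer of the sum of distances to $x$, $y$, and $z$. Suppose, contrary to the claim, that there is some $p \in I_{xm}$ with $d(y,p) < d(y,m)$. I will show that $p$ then achieves a strictly smaller total distance than $m$, which contradicts the hypothesis that $m$ is a generalized median for $x, y, z$ (Definition \ref{def:median-of-three}).

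First I would record the only two structural facts needed. Since $p$ lies on a shortest path between $x$ and $m$, the interval identity (Definition \ref{def:interval}) gives the exact equality $d(x,p) = d(x,m) - d(p,m)$. For the third point $z$ I use nothing more than the triangle inequality $d(z,p) \le d(z,m) + d(p,m)$. These two facts, together with the assumed strict inequality $d(y,p) < d(y,m)$ for $y$, are everything that enters the argument.

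Adding the three relations, the key observation is that the $-d(p,m)$ contributed by the $x$-term exactly cancels the $+d(p,m)$ contributed by the $z$-term, leaving
\begin{align*}
  d(x,p) + d(y,p) + d(z,p) &< \bigl(d(x,m) - d(p,m)\bigr) + d(y,m) + \bigl(d(z,m) + d(p,m)\bigr)\\
  &= d(x,m) + d(y,m) + d(z,m).
\end{align*}
This says the sum of distances from $p$ to $x, y, z$ is strictly less than the corresponding sum for $m$, contradicting the defining minimality of the generalized median $m$. Hence no such $p$ exists and $d(y,p) \ge d(y,m)$ for every $p \in I_{xm}$.

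The proof is essentially immediate once this combination is set up; there is no substantive obstacle. The only point requiring care is orienting the shortest-path identity for $x$ and the triangle inequality for $z$ in opposite directions, so that the two $d(p,m)$ terms cancel rather than accumulate — had I applied the triangle inequality to the wrong endpoint, the terms would add and the strict inequality would be lost.
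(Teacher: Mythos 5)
Your proof is correct and uses exactly the same three ingredients as the paper's --- the minimality of $m$, the identity $d(x,p) + d(p,m) = d(x,m)$ from $p \in I_{xm}$, and the triangle inequality $d(z,p) \le d(z,m) + d(p,m)$ --- combined so that the $d(p,m)$ terms cancel. The only difference is that you phrase it as a contradiction while the paper chains the inequalities directly; the arguments are the same.
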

\begin{proof}
  We have
  \begin{align}
    d(x, m) &+ d(y, m) + d(z, m)\notag\\
    &\leq d(x, p) + d(y, p) + d(z, p)\label{eqn:2-10}\\
    &\leq d(x, p) + d(y, p) + d(p, m) + d(z, m)\label{eqn:2-11}\\
    &= d(x, m) + d(y, p) + d(z, m)\label{eqn:2-12}
  \end{align}
  where (\ref{eqn:2-10}) follows from $m$ being the generalized median, (\ref{eqn:2-11}) follows from the triangle inequality, and (\ref{eqn:2-12}) follows from $p\in I_{xm}$. Subtracting $d(x, m) + d(z, m)$ on both sides gives us $d(y, p) \geq d(y, m)$.
\end{proof}

\begin{lemma}\label{lem:shortest-implies-median}
  Consider any three nodes $x, y, z$ of an unweighted and undirected graph. Suppose that there exists a node $m$ which lies on a shortest path between $x$ and $y$, $x$ and $z$, and $y$ and $z$. Then $m$ is a generalized median for $x, y, z$.
\end{lemma}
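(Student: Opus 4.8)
The plan is to show directly that the sum of distances from $m$ to $x,y,z$ is no larger than that from any competing node $u$, by summing three triangle inequalities. First I would translate the hypothesis into three distance equalities: since $m$ lies on a shortest path between each pair, we have $d(x,m)+d(m,y)=d(x,y)$, $d(x,m)+d(m,z)=d(x,z)$, and $d(y,m)+d(m,z)=d(y,z)$.

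Next, for an arbitrary node $u$, I would apply the triangle inequality to each of the three pairs, using the equalities above to rewrite the right-hand sides in terms of $m$:
\begin{align*}
  d(x,u)+d(u,y) &\geq d(x,y) = d(x,m)+d(m,y),\\
  d(x,u)+d(u,z) &\geq d(x,z) = d(x,m)+d(m,z),\\
  d(y,u)+d(u,z) &\geq d(y,z) = d(y,m)+d(m,z).
\end{align*}
Adding these three inequalities, each of $d(x,u)$, $d(y,u)$, $d(z,u)$ appears exactly twice on the left and each of $d(x,m)$, $d(y,m)$, $d(z,m)$ appears exactly twice on the right, yielding $2\bigl(d(x,u)+d(y,u)+d(z,u)\bigr) \geq 2\bigl(d(x,m)+d(y,m)+d(z,m)\bigr)$. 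Dividing by two gives $d(x,u)+d(y,u)+d(z,u) \geq d(x,m)+d(y,m)+d(z,m)$, and since $u$ was arbitrary, $m$ minimizes the sum of distances and is therefore a generalized median for $x,y,z$.

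There is essentially no hard step here: the argument is a clean symmetric application of the triangle inequality, and the only thing to watch is the bookkeeping that guarantees the factor of two cancels exactly. I would note that this lemma, unlike the median-graph characterization as a whole, requires no median-graph structure at all — it holds in any unweighted undirected graph, which is consistent with its use in the forward direction of Theorem \ref{thm:median-graph-characterization}.
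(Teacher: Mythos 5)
Your proof is correct and is essentially identical to the paper's: both rewrite $2\bigl(d(x,m)+d(y,m)+d(z,m)\bigr)$ as $d(x,y)+d(x,z)+d(y,z)$ using the shortest-path hypothesis and then bound that sum by $2\bigl(d(x,u)+d(y,u)+d(z,u)\bigr)$ via three triangle inequalities. Your observation that no median-graph structure is needed also matches the paper, which states the lemma for arbitrary unweighted undirected graphs.
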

\begin{proof}
  Consider any other point $m'$. Then,
  \begin{align}
    2(d(x, m) &+ d(y, m) + d(z, m))\notag\\
    &= d(x, y) + d(x, z) + d(y, z)\label{eqn:2-13}\\
    &\leq 2(d(x, m') + d(y, m') + d(z, m'))\label{eqn:2-14}
  \end{align}
  where (\ref{eqn:2-13}) follows from the fact that $m$ lies on the shortest path between every two of $x, y, z$ and (\ref{eqn:2-14}) follows from the triangle inequality.
\end{proof}

\begin{lemma}\label{lem:unique-shortest-implies-condorcet}
  Consider any three nodes $x, y, z$ of an unweighted and undirected graph. Suppose that there exists a unique node $m$ which lies on a shortest path between $x$ and $y$, $x$ and $z$, and $y$ and $z$. Then $m$ is a Condorcet winner for $x, y, z$.
\end{lemma}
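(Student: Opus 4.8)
The plan is to fix an arbitrary competitor $w \neq m$ and verify directly the Condorcet condition of Definition \ref{def:condorcet}: that strictly more of the three voters $x, y, z$ prefer $m$ to $w$ than prefer $w$ to $m$. To organize this I set $a_u = d(u,w) - d(u,m)$ for each $u \in \{x,y,z\}$, so that voter $u$ strictly prefers $m$ exactly when $a_u > 0$, strictly prefers $w$ exactly when $a_u < 0$, and is indifferent when $a_u = 0$. The goal becomes showing that $\lvert\{u : a_u > 0\}\rvert > \lvert\{u : a_u < 0\}\rvert$.

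The first step is to record the three inequalities $a_x + a_y \geq 0$, $a_x + a_z \geq 0$, and $a_y + a_z \geq 0$. Each is a one-line consequence of the hypothesis that $m$ lies on a shortest path between the relevant pair, combined with the triangle inequality routed through $w$: for the first pair, $d(x,w) + d(w,y) \geq d(x,y) = d(x,m) + d(m,y)$, and rearranging gives $a_x + a_y \geq 0$; the other two are symmetric. An immediate corollary is that at most one of the three quantities $a_x, a_y, a_z$ can be negative, since if two were negative their sum would violate one of the inequalities. In the language of votes, this already shows that \emph{at most one} voter can prefer $w$ to $m$.

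Next I would split into cases according to how many $a_u$ are negative. If exactly one is negative, say $a_x < 0$, then the inequalities $a_x + a_y \geq 0$ and $a_x + a_z \geq 0$ force $a_y \geq -a_x > 0$ and $a_z \geq -a_x > 0$; hence two voters strictly prefer $m$ while only one prefers $w$, and the Condorcet condition holds. If none is negative but at least one is strictly positive, then at least one voter prefers $m$ and none prefers $w$, so the condition again holds. Both of these cases are settled by the triangle-inequality bookkeeping alone and require no appeal to uniqueness.

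The remaining case is the delicate one and is where the uniqueness hypothesis does the real work: $a_x = a_y = a_z = 0$, meaning $w$ is equidistant with $m$ from all three points. I expect this tie-breaking step to be the main obstacle, and the key observation is that such a $w$ must itself lie on all three pairwise shortest paths. Indeed, from $d(x,w) = d(x,m)$ and $d(w,y) = d(m,y)$ we get $d(x,w) + d(w,y) = d(x,m) + d(m,y) = d(x,y)$, so $w \in I_{xy}$ (Definition \ref{def:interval}), and the same computation gives $w \in I_{xz} \cap I_{yz}$. Since $m$ was assumed to be the \emph{unique} node lying on all three pairwise shortest paths, this forces $w = m$, contradicting $w \neq m$; hence the all-indifferent case never occurs. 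Combining the cases shows that for every $w \neq m$ strictly more voters prefer $m$, so $m$ is a Condorcet winner for $x, y, z$.
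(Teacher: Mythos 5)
Your proof is correct and rests on the same two ingredients as the paper's: the pairwise inequalities $d(u,w)+d(w,u') \geq d(u,u') = d(u,m)+d(m,u')$ coming from $m$ lying on each pairwise shortest path (which bound the number of voters preferring $w$ by one), and the uniqueness hypothesis to eliminate the degenerate competitor. The only difference is organizational --- the paper invokes uniqueness up front to obtain one strict preference for $m$, whereas you defer it to rule out the all-ties case --- so this is essentially the same argument.
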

\begin{proof}
Consider any other node $m'$. Then there must be one pair ($x$ and $y$ without loss of generality) such that $m'$ does not lie on a shortest path between them. Therefore, $d(x, m) + d(m, y) < d(x, m') + d(m', y)$, which implies that at least one of $d(x, m) < d(x, m')$ or $d(y, m) < d(y, m')$ is true. Let this node be $x$ without loss of generality. If neither of $y$ or $z$ is closer to $m'$ than $m$, then $m$ beats $m'$ in a pairwise election and we are done. 

Now suppose one of $y$ or $z$ is closer to $m'$ than $m$. Then it must be true that the other is closer to $m$ than $m'$ since $d(y, m) + d(m, z) \leq d(y, m') + d(m', z)$. Therefore, $m$ still beats $m'$ in a pairwise election. Since this holds for any $m'$, this means that $m$ is the Condorcet winner.
\end{proof}

\section{Lemmas for Theorem 7.3}


\begin{lemma}\label{lem:edge-approx}
  Consider a triadic decision-making process $L = (V, f, g)$ where participants $V$ form a median graph and $g_t = g(S_t) = \text{generalized-median}(S_t)$. Fix an edge $e = (u, v)$. Then,
  \begin{align*}
    \prob\left[\frac{\lvert \{ i \mid e \in I_{\hat{x}x_i}\} \rvert}{\lvert \{ i \mid e \in I_{x^*x_i}\} \rvert} > \left(1+O\left(\sqrt{\frac{c\log n}{n}}\right)\right)\right] \leq \frac{1}{n^c}
  \end{align*}
\end{lemma}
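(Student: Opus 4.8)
The plan is to project the whole dynamics onto the single edge $e$, recognize the projection as the absorbing random walk of Lemma~\ref{lem:urn-function}, and finish with a binomial tail estimate. First I would exploit the partition structure: by Lemma~\ref{lem:winsets-1} the win sets $W_{uv}, W_{vu}$ partition $V$, and $e \in I_{xy}$ holds exactly when $e$ \emph{separates} $x$ and $y$ (one of the pair in each win set). Hence both counts in the ratio depend only on which side of $e$ the relevant points lie. Relabel the endpoints of $e$ so that $x^* \in W_{uv}$, and set $b = \lvert\{i : x_i \in W_{vu}\}\rvert$ --- which is precisely the denominator $\lvert\{i : e \in I_{x^* x_i}\}\rvert$ --- and $a = n-b = \lvert\{i : x_i \in W_{uv}\}\rvert$. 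The numerator $\lvert\{i : e \in I_{\hat x x_i}\}\rvert$ equals $b$ when $\hat x \in W_{uv}$ and $a$ when $\hat x \in W_{vu}$, so the ratio is $1$ on the first event and $a/b$ on the second. In particular the ratio can exceed $1$ only if $\hat x$ lands on the side of $e$ opposite $x^*$ \emph{and} $a > b$; otherwise the event in the statement is empty and the bound is trivial, so I may assume $x^*$ sits on the strict majority side ($a > b$).

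Next I would set up the urn. Let $X_t$ be the number of tokens lying in $W_{uv}$ after round $t$, so $X_0 = a > n/2$, with absorption at $X=n$ (all tokens in $W_{uv}$, i.e. $\hat x \in W_{uv}$) or $X=0$ (i.e. $\hat x \in W_{vu}$). The key structural input is that a triad's generalized median always falls on the majority side of the triad with respect to $e$: if two of the three sampled tokens lie in $W_{uv}$, then since the median lies on a shortest path between those two points, Lemma~\ref{lem:winsets-5} forces it into $W_{uv}$ (and symmetrically for $W_{vu}$). Because the three tokens are drawn i.i.d. uniformly, a step moves $X_t$ by $+1$ exactly when the sample splits two-to-one toward $W_{uv}$ (the lone minority token follows the median across $e$), by $-1$ on the reverse split, and leaves $X_t$ fixed on a unanimous sample. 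These are precisely the probabilities $3X_t^2(n-X_t)/n^3$, $3X_t(n-X_t)^2/n^3$, and $(X_t^3+(n-X_t)^3)/n^3$ of Lemma~\ref{lem:urn-function}, so $\{X_t\}$ is exactly that absorbing walk started at $a$.

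With the reduction in hand, the remainder is a tail estimate. The birth--death structure underlying Lemma~\ref{lem:urn-function} gives, for start $X_0 = a$, the minority-absorption probability $\prob[X_T = 0 \mid X_0 = a] = (\tfrac12)^{n-1}\sum_{k=a}^{n-1}\binom{n-1}{k} = \prob[\mathrm{Bin}(n-1,\tfrac12) \ge a]$, which Hoeffding's inequality bounds by $\exp\!\big(-2(a-\tfrac{n-1}{2})^2/(n-1)\big)$. I would then finish with a dichotomy on the fixed configuration. If $a/b \le 1+\epsilon$ for $\epsilon = \Theta(\sqrt{c\log n/n})$, the ratio (whose only non-trivial value is $a/b$) cannot exceed $1+\epsilon$, so the event has probability $0$. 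If instead $a/b > 1+\epsilon$, then $b=n-a$ gives $a - \tfrac{n}{2} > \tfrac{n\epsilon}{2(2+\epsilon)} = \Omega(n\epsilon) = \Omega(\sqrt{cn\log n})$, whence the Hoeffding exponent is $\Omega(c\log n)$ and $\prob[\mathrm{Bin}(n-1,\tfrac12)\ge a] \le n^{-c}$ once the hidden constant in $\epsilon$ is chosen large enough; in this regime the ratio exceeds $1+\epsilon$ only when $\hat x$ lands on the minority side, which is exactly this probability.

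The main obstacle is the urn reduction of the second step: one must verify that the projection of the token process onto the single edge $e$ really is a Markov chain with the stated $\pm 1$ transition probabilities. This hinges on (i) the median-to-majority property supplied by Lemma~\ref{lem:winsets-5}, and (ii) careful bookkeeping of the token-update rule so that on a two-to-one split exactly one token crosses $e$ (in particular handling repeated draws under sampling with replacement). Once the projected chain is identified with Lemma~\ref{lem:urn-function}, the identification of $x^*$'s side, the binomial absorption formula, and the Hoeffding tail are all routine.
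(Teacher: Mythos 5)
Your proposal is correct and follows essentially the same route as the paper: project onto the win-set partition of $e$ via Lemma~\ref{lem:winsets-5}, identify the token count in one side as the urn process of Lemma~\ref{lem:urn-function} using the median-goes-with-the-majority property, bound the minority-absorption probability by a binomial tail (the paper does this in Lemmas~\ref{lem:edge-markov-chain} and~\ref{lem:median-majority}), and dispose of the near-balanced case by noting the ratio is then already $1+O(\sqrt{c\log n/n})$. The only differences are organizational — the paper phrases the dichotomy as $N_2 \lessgtr n/2-\sqrt{cn\ln n}$ rather than $a/b \lessgtr 1+\epsilon$, which is equivalent.
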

\begin{proof}
Without loss of generality, suppose that $W_{uv}$ contains less of the $n$ participants and that $W_{vu}$ contains more. Let $N_1$ denote the number of participants in $W_{vu}$ and $N_2$ denote the number of participants in $W_{uv}$, so that $N_2 \leq N_1$ and $N_1 + N_2 = n$ (Lemma \ref{lem:winsets-1}).

We first note that $\lvert \{ i \mid e \in I_{\hat{x}x_i}\} \rvert$ is simply the number of $x_i$ which are in a different partition $W_{uv}$, $W_{vu}$ than $\hat{x}$ (Lemma \ref{lem:winsets-5}). This must be either $N_1$ or $N_2$ depending on whether $\hat{x} \in W_{uv}$ or $\hat{x} \in W_{vu}$, so
\begin{align*}
  \frac{\lvert \{ i \mid e \in I_{\hat{x}x_i}\} \rvert}{\lvert \{ i \mid e \in I_{x^*x_i}\} \rvert} > 1 \iff &\lvert \{ i \mid e \in I_{\hat{x}x_i}\} \rvert = N_1 \text{ and }\\
  &\lvert \{ i \mid e \in I_{x^*x_i}\} \rvert = N_2
\end{align*}
Let us first consider the case when $N_2 \leq \frac{n}{2} - \sqrt{cn \ln n}$. Then we show that $\prob[\hat{x} \in W_{uv}] \leq n^{-c}$ (Lemma \ref{lem:edge-markov-chain}) which implies that $\frac{\lvert \{ i \mid e \in I_{\hat{x}x_i}\} \rvert}{\lvert \{ i \mid e \in I_{x^*x_i}\} \rvert} > 1$ with probability at most $n^{-c}$.

Let us now consider the case when $N_2 \geq \frac{n}{2} - \sqrt{cn \ln n}$. Then, 
\begin{align*}
  \frac{\lvert \{ i \mid e \in I_{\hat{x}x_i}\} \rvert}{\lvert \{ i \mid e \in I_{x^*x_i}\} \rvert} \leq \frac{N_1}{N_2} &= \frac{n- N_2}{N_2} = \left(1+O\left(\sqrt{\frac{c\log n}{n}}\right)\right)
\end{align*}
and our theorem statement is trivially true. 
\end{proof}

\begin{lemma}\label{lem:edge-markov-chain}
  Consider a triadic decision-making process $L = (V, f, g)$ where participants $V$ form a median graph and $g_t = g(S_t) = \text{generalized-median}(S_t)$. Fix an edge $e = (u, v)$ and suppose that less than $\frac{n}{2} - \sqrt{cn\ln n}$ participants belong to the win set $W_{uv}$. Also, let $T_e$ denote the first time $t$ at which all tokens $y_i^t$ belong to either $W_{uv}$ or $W_{vu}$. Then,
  \begin{align}
    &\prob[\hat{x} \in W_{uv}] \leq n^{-c}\label{eqn:edge-prob}\\
    &\prob[T_e > cn\ln^2 n] < O(n^{-c})\label{eqn:edge-time}
  \end{align}
\end{lemma}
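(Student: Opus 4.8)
The plan is to reduce the full token dynamics, projected onto the single fixed edge $e=(u,v)$, to the one-dimensional absorbing random walk of Lemma \ref{lem:urn-function}, and then read off both bounds from that lemma together with standard tail estimates. As in Lemma \ref{lem:edge-approx}, assume without loss of generality that $W_{vu}$ is the larger win set, and let $X_t$ denote the number of tokens lying in $W_{vu}$ after round $t$, so that $X_0 = N_1 > \frac{n}{2} + \sqrt{cn\ln n}$ and $n - X_t$ counts the tokens in $W_{uv}$. The heart of the reduction is the \emph{majority property} of the median with respect to $e$: for any three opinions $x,y,z$, the generalized median $m(x,y,z)$ lies in whichever of $W_{uv},W_{vu}$ contains at least two of $x,y,z$. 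This holds because $m(x,y,z)\in I_{xy}\cap I_{xz}\cap I_{yz}$ (Definition \ref{def:median-graph}, Theorem \ref{thm:median-graph-characterization}), so by Lemma \ref{lem:winsets-5}, whenever two of the points lie on the same side of $e$, the interval between them—and hence $m$—lies on that side. Consequently, when a round draws three tokens uniformly with replacement and moves all of them to their median, the side of the median is the majority side among the three draws. Writing $j$ for the number of the three draws landing in $W_{vu}$, a short case analysis (the lone minority draw, when it exists, is necessarily a token distinct from the two majority draws, since it sits on the other side) shows that $X_t$ increases by $1$ exactly when $j=2$, decreases by $1$ exactly when $j=1$, and is unchanged when $j\in\{0,3\}$. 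Since the draws are i.i.d.\ uniform, $j\sim\mathrm{Bin}(3, X_t/n)$, so $\{X_t\}$ is an autonomous Markov chain with precisely the transition probabilities of Lemma \ref{lem:urn-function}, with $W_{vu}$ playing the role of the absorbing state $n$.

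Given this reduction, (\ref{eqn:edge-prob}) is immediate, since $\hat{x}\in W_{uv}$ if and only if every token is eventually absorbed into $W_{uv}$, i.e.\ $X_T=0$. The gambler's-ruin formula for this birth--death chain uses the ratios $q_i/p_i=(n-i)/i$, whence $\rho_i:=\prod_{j=1}^i q_j/p_j=\binom{n-1}{i}$ and $\sum_{i=0}^{n-1}\rho_i=2^{n-1}$; this is exactly the content of the absorption-probability estimate in Lemma \ref{lem:urn-function}. Therefore
\[
  \prob[X_T=0] = 2^{-(n-1)}\sum_{i=X_0}^{n-1}\binom{n-1}{i} = \prob\!\left[\mathrm{Bin}\!\left(n-1,\tfrac12\right)\geq X_0\right].
\]
Since $X_0 > \frac{n}{2}+\sqrt{cn\ln n}$ exceeds the mean $\frac{n-1}{2}$ by more than $\sqrt{cn\ln n}$, Hoeffding's inequality bounds this upper tail by $\exp\!\big(-2cn\ln n/(n-1)\big)\leq n^{-2c}\leq n^{-c}$, giving (\ref{eqn:edge-prob}).

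For the time bound (\ref{eqn:edge-time}), I would invoke the expected-absorption-time estimate $\mathbb{E}[T\mid X_0=k]\leq n\ln n + O(n)$ of Lemma \ref{lem:urn-function}, applied uniformly over the starting state $k$. Setting $B = e\,(n\ln n + O(n)) = \Theta(n\ln n)$, Markov's inequality gives $\prob[T_e>B\mid X_0=k]\leq 1/e$ for every $k$, so by the Markov property applied to consecutive blocks of $B$ steps, the chain remains unabsorbed for $rB$ steps with probability at most $e^{-r}$. Taking $r=c\ln n$ yields $\prob[T_e>c\ln n\cdot B]\leq n^{-c}$ with $c\ln n\cdot B = O(cn\ln^2 n)$, which is (\ref{eqn:edge-time}) after absorbing constants.

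The main obstacle is the reduction step: one must verify that projecting the multidimensional token process onto the partition induced by a single edge yields \emph{exactly} the walk of Lemma \ref{lem:urn-function}. Two points require care: (i) the median-majority property, which is what makes the projected chain autonomous, so that its transition depends on $X_t$ alone and not on token positions within each win set; and (ii) the sampling-with-replacement subtlety—when two of the three drawn tokens coincide, one must check that $X_t$ still changes according to the same $j$-based rule, which holds because a token on the minority side is necessarily a distinct token from those on the majority side. Once the reduction is established, both tail bounds are routine consequences of Lemma \ref{lem:urn-function}.
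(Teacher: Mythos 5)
Your proof is correct and follows essentially the same route as the paper's: project the token process onto the edge $e$ via the median-majority property (the paper's Lemma \ref{lem:median-majority}) to obtain exactly the urn walk of Lemma \ref{lem:urn-function}, bound the absorption probability by a Binomial$(n-1,\tfrac12)$ tail with a Chernoff/Hoeffding estimate, and get the time bound by iterating Markov's inequality over $O(n\ln n)$-length blocks. Your explicit handling of the with-replacement subtlety and your gambler's-ruin derivation of the absorption probability are slightly more careful than the paper's presentation, but the argument is the same.
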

\begin{proof}
Let $X_t = \lvert \{ i \mid y_i^t \in W_{uv} \}$ denote the number of tokens in $W_{uv}$ at time $t$. By Lemma \ref{lem:median-majority}, the generalized median of three nodes will belong to $W_{uv}$ if and only if at least two of the three nodes belong to $W_{uv}$. Therefore,
\begin{align*}
    \prob[X_{t+1} = X_t + \Delta] = \begin{cases}
      \frac{3X_t^2(n-X_t)}{n^3} &\text{ if $\Delta = 1$}\\
      \frac{3X_t(n-X_t)^2}{n^3} &\text{ if $\Delta = -1$}\\
      \frac{X_t^3 + (n-X_t)^3}{n^3} &\text{ if $\Delta = 0$}
    \end{cases}
  \end{align*}
  To prove Equation \ref{eqn:edge-prob}, we note that the event $\hat{x} \in W_{uv}$ corresponds to the event that $X_{T_e} = n$. Applying Lemma \ref{lem:urn-function}, we have
  \begin{align}
    \prob[\hat{x} \in W_{uv}] &\leq \left(\frac{1}{2}\right)^{n-1}\sum\limits_{j = 1}^{\frac{n}{2}-\sqrt{cn\ln n}}\binom{n-1}{j-1}\notag\\
    &\leq \text{exp}\left(-\frac{1}{2}\cdot\frac{n}{2}\cdot\left(\frac{\sqrt{cn\ln n}}{n/2}\right)^2\right) = \frac{1}{n^c}\notag
  \end{align}
  where the second inequality is found by interpreting the binomial expression in terms of coin flips and applying Chernoff's bound. 

  Equation \ref{eqn:edge-time} follows by applying techniques for high-probability bounds for probabilistic recurrences (we follow the technique of \cite{AspnesNotes}):

  Let $T(i)$ denote $T_e$ given $X_0 = i$. Then $T(i)$ satisfies the probabilistic recurrence
  \[T(i) = p_iT(i+1) + q_iT(i-1) + r_iT(i) + 1\]
  where $p_i, q_i, r_i$ are the probabilities of the transitions $i\to i+1$, $i\to i-1$ and $i\to i$. From Lemma \ref{lem:urn-function}, we have that $\mathbb{E}[T(i)] \leq T^*$ for all $i$, and for $T^* = n\ln n + O(n)$.
  By Markov's inequality, $\prob[T(i) \geq \alpha T^*] \leq \frac{1}{\alpha}$, for any $\alpha > 0$. By conditional probability, $\prob[T(i) \geq 2\alpha T^*] = \prob[T(i) \geq 2\alpha T^* \mid T(i) \geq \alpha T^* ]\prob[T(i) \geq \alpha T^*]$. But note that $\prob[T(i) \geq 2\alpha T^* \mid T(i) \geq \alpha T^*] \leq \frac{1}{\alpha}$ since the remaining random walk conditioned at the point when $T(i) = \alpha T^*$ is just distributed as $T(i')$ for some $i'$. Repeating this logic gives $\prob[T(i) \geq k\alpha T^*] \leq \alpha^k$. Choosing $\alpha = e$ and $k = c\ln n$ gives us our desired result. Note that the point at which $T(i) = \alpha T^*$ exists for any $T^*$ chosen such that $\alpha T^*$ is an integer since $T(i)$ increases in integer increments. 
\end{proof}

\begin{lemma}\label{lem:median-majority}
  Consider any edge $e = (u, v)$ of a median graph $V$ and three nodes $S = \{x, y, z\}$. Let $W^*$ denote the set $W_{uv}$ or $W_{vu}$ which contains at least two of the three nodes. Then $\text{generalized-median}(S) \in W^*$.
\end{lemma}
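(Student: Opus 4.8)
The plan is to reduce the claim to the convexity of win sets, using only the fact that two of the three nodes lie on the same side of the edge $e$. Since the win sets $W_{uv}$ and $W_{vu}$ partition all of $V$ (Lemma~\ref{lem:winsets-1}) and $S$ consists of three nodes, by pigeonhole at least two of them fall into a single win set; this is exactly the set $W^*$ named in the statement. I would begin by relabeling the nodes so that $x, y \in W^*$, noting that the third node $z$ may lie in either win set and, as the argument will show, plays no role.

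Next I would recall how the generalized median sits relative to the three points. Because $V$ is a median graph, $m(x, y, z)$ is the unique node lying simultaneously on a shortest path between each pair of $x, y, z$; this follows from Definition~\ref{def:median-graph} together with Lemma~\ref{lem:shortest-implies-median} and the uniqueness established in Theorem~\ref{thm:median-graph-characterization}. In particular, $m(x, y, z) \in I_{xy}$.

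Finally I would combine the two observations. By Lemma~\ref{lem:winsets-1}, $W^*$ (being one of $W_{uv}$, $W_{vu}$) is convex, so the definition of convexity (Definition~\ref{def:convex}) guarantees that having both endpoints $x, y \in W^*$ forces $I_{xy} \subseteq W^*$. Since $m(x, y, z) \in I_{xy}$, we immediately obtain $m(x, y, z) \in W^*$, which is the desired conclusion.

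I do not expect a genuine obstacle here: once the median-graph structure guarantees that $m(x, y, z)$ lies on a shortest path between any two of the three points, the result drops out of convexity. The only points requiring any care are the initial pigeonhole step and the observation that the location of $z$ is irrelevant — we never need to determine which win set $z$ belongs to, only that some pair among the three shares a win set.
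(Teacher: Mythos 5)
Your proposal is correct and follows essentially the same route as the paper's proof: pigeonhole on the partition $W_{uv},W_{vu}$, convexity of $W^*$ from Lemma~\ref{lem:winsets-1}, and the fact that $m(x,y,z)\in I_{xy}\subseteq W^*$. The only difference is that you spell out why $m(x,y,z)\in I_{xy}$, which the paper leaves implicit.
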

\begin{proof}
First we note that $W_{uv}$ and $W_{vu}$ partition $V$ (Lemma \ref{lem:winsets-1}). Therefore, by pigeonhole principle, at least two of $x, y, z$ belong to the same set, so that $W^*$ is well defined. Suppose, without loss of generality, that $x, y \in W^*$. We know that $W^*$ is convex (Lemma \ref{lem:winsets-1}). Therefore, $I_{xy} \subseteq W^*$ (Definition \ref{def:convex}). But then $m(x, y, z) \in I_{xy} \subseteq W^*$.
%
\end{proof}


\section{Lemmas for Theorem 7.6}\label{appsec:star-restricted}

\begin{lemma}\label{lem:crtd-coupling}
  Consider any two CRTD dynamics $X_t$, $Y_t$, and their corresponding truncated representations $X'_t$, $Y'_t$. Suppose that $X'_{j1} \leq Y'_{j1}$ and $X'_{j2} \geq Y'_{j2}$ for some $j$. Then a coupling exists such that $X'_{(j+1)1} \leq Y'_{(j+1)1}$ and $X'_{(j+1)2} \geq Y'_{(j+1)2}$.
\end{lemma}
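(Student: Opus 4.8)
The plan is to prove the statement as a one-step monotone coupling for the truncated Markov chain, and the first move is a reduction. Because CRTD re-concentrates the leaf tokens after every transition, the full token configuration of a CRTD chain is a deterministic function of its truncated state $(c,M) := (S'_{t1}, S'_{t2})$ together with the fixed parameters $n$ and $k$: the $L = n-c$ leaf tokens occupy $\lfloor L/M\rfloor$ leaves at value $M$, one leaf holds the remainder $L - M\lfloor L/M\rfloor$, and the rest are empty. Hence the transition of $(c,M)$ is itself Markov and it suffices to couple at the level of $(c,M)$. I would fix the partial order $(c,M)\preceq(c',M') \iff c\le c' \text{ and } M\ge M'$, under which the hypothesis is $X'_j \preceq Y'_j$ and the conclusion is that a coupling achieves $X'_{j+1}\preceq Y'_{j+1}$.

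Next I would classify the effect of a single triad. Letting $K\sim\mathrm{Binomial}(3,c/n)$ be the number of center tokens drawn, the center coordinate moves by $\Delta c = 0$ for $K\in\{0,3\}$, by $\Delta c=+1$ for $K=2$, and for $K=1$ by $\Delta c=+2$ when the two leaf tokens lie on distinct leaves and $\Delta c=-1$ when they lie on the same leaf; the max coordinate then moves by a bounded amount whose sign is dictated by the multiplicity of the maximum-valued leaves and the size of the remainder leaf, both of which are read off from $(c,M)$. The one structural fact that powers the coupling is that, at fixed leaf total $L$, the collision probability $\frac{1}{L^2}\sum_\ell n_\ell^2$ that two random leaf tokens share a leaf is Schur-convex in the leaf profile and therefore increasing in $M$ along concentrated profiles. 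Consequently, conditioned on drawing exactly one center token, the same-leaf (center-decreasing) outcome is more likely in $X$ and the distinct-leaf (center-increasing $+2$) outcome is more likely in $Y$, precisely because $M_X\ge M_Y$.

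I would then build a joint coupling that uses both inequalities $c_X\le c_Y$ and $M_X\ge M_Y$ simultaneously. The point to watch is that the center coordinate is not monotone on its own --- the $+2$ move means a larger-$M$, smaller-$c$ state can in principle out-gain a more advanced one --- so one cannot simply couple the counts $K_X\le K_Y$ and argue coordinatewise. The fix is to couple the full triad (center/leaf status and leaf identities) so that every draw in which $X$ plays $+2$ is matched with a draw in which $Y$ plays $\Delta c_Y\ge +1$; this is feasible exactly because the $M$-ordering suppresses $X$'s distinct-leaf draws, i.e. $\{X \text{ distinct leaves}\}$ can be nested inside the event that $Y$ also gains at the center. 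Within each resulting case I would verify $c_X+\Delta c_X\le c_Y+\Delta c_Y$ and, tracking the re-concentration, $M'_X\ge M'_Y$, using the multiplicity/remainder bookkeeping above. An equivalent and perhaps cleaner route is to verify that the one-step kernel of $(c,M)$ is stochastically monotone against every up-set of $\preceq$ and then invoke Strassen's theorem to produce the coupling.

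The step I expect to be the main obstacle is precisely this simultaneous control of the two coordinates through the concentration step: center-increasing moves delete leaf tokens and can lower $M$, while the center-decreasing move adds a leaf token and can raise it, so the coupling that preserves $c_X\le c_Y$ pulls against the one that preserves $M_X\ge M_Y$. Reconciling them forces the coupling to act on exact leaf identities rather than on triad types alone, and to treat the boundary configurations --- a unique maximum leaf, $M_X=M_Y$, or $c_X=c_Y$ --- separately; getting the $M'$-bookkeeping right in these corner cases is where the real work lies.
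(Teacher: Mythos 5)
Your proposal takes essentially the same route as the paper, which for this lemma offers only a sketch: ``write out the transition probabilities of the truncated Markov chains and verify a coupling case by case.'' Your write-up is a more articulated version of that plan, and its structural observations are correct and consistent with the paper's setup: the concentrated configuration is a deterministic function of $(c,M)=(S'_{t1},S'_{t2})$ together with $n$ and $k$, so the truncated chain is Markov; the right partial order is $c\le c'$, $M\ge M'$; the move classification $\Delta c\in\{-1,0,+1,+2\}$ according to the number of center tokens drawn and whether the two leaf tokens collide is accurate for the star dynamic; and the Schur-convexity of $\sum_\ell n_\ell^2$ is indeed the fact that makes concentration the ``worst case.'' You also correctly flag the real difficulty (the $+2$ move makes the center coordinate non-monotone on its own), which the paper does not discuss at all.

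Two of your specific nesting claims are imprecise as stated. First, the assertion that, conditioned on drawing exactly one center token, the same-leaf outcome is more likely in $X$ ``precisely because $M_X\ge M_Y$'' compares $\Sigma_X/L_X^2$ with $\Sigma_Y/L_Y^2$ where $\Sigma=\sum_\ell n_\ell^2$ and $L=n-c$; since $L_X\ge L_Y$, this does not follow from $M_X\ge M_Y$ alone (take $M_X=M_Y$ and $c_X\ll c_Y$). Second, matching $\{\Delta c_X=+2\}$ into $\{\Delta c_Y\ge +1\}$ is not enough when $c_X=c_Y$: there you must nest $\{\Delta c_X=+2\}$ into $\{\Delta c_Y=+2\}$. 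Both issues disappear if you perform the case split on the gap $c_Y-c_X$ \emph{before} invoking Schur-convexity: for $c_Y-c_X\ge 2$ any pair of moves preserves the $c$-ordering, so only the $M$-bookkeeping matters; for $c_Y-c_X=0$ the leaf totals agree and Schur-convexity gives exactly $\Sigma_X\ge\Sigma_Y$, hence $\prob[\Delta c_X=+2]\le\prob[\Delta c_Y=+2]$ and $\prob[\Delta c_X=-1]\ge\prob[\Delta c_Y=-1]$; for $c_Y-c_X=1$ the required inequality $\prob[\Delta c_X=+2]\le\prob[\Delta c_Y\ge+1]$ holds, but its proof must also use the two-center term $3c_Y^2L_Y/n^3$ and is not a pure concentration argument. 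With that reordering, your outline is a faithful (and more informative) account of the casework the paper declines to write out.
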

\begin{proof}
  The proof for this is just a lot of algebra, so we will not go through it here. Since $X'_t$, $Y'_t$ are markov chains, we can simply write out the expressions for all the state transitions and their probabilities. Once we do so, breaking it up into multiple cases will enable one to verify that a coupling of these different cases satisfies the desired property (that $X'_{(j+1)1} \leq Y'_{(j+1)1}$ and $X'_{(j+1)2} \geq Y'_{(j+1)2}$).
\end{proof}

\begin{lemma}\label{lem:crtd-rtd-coupling}
  Consider a CRTD dynamic $Y_t$, a RTD dynamic $Z_t$, and their corresponding truncated representations $Y'_t$, $Z'_t$. Suppose that $Y'_{j1} = Z'_{j1}$ and $Y'_{j2} = Z'_{j2}$ for some $j$. Then a coupling exists such that $Y'_{(j+1)1} \leq Z'_{(j+1)1}$ and $Y'_{(j+1)2} \geq Z'_{(j+1)2}$.
\end{lemma}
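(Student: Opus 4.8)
The plan is to reduce the claim to a one-step comparison of the underlying token dynamics and then build the coupling by conditioning on how the three sampled tokens split between the center and the leaves. Write $c = Y_{j1} = Z_{j1}$ for the common number of center tokens, $L = n-c$ for the number of leaf tokens, and $m = Y'_{j2} = Z'_{j2}$ for the common maximum leaf load. The crucial structural fact is that, because $Y_t$ is a CRTD dynamic, $Y_j$ has already been \emph{concentrated}: its leaf profile is completely determined by $(c,m)$ (one leaf at $m$, subsequent leaves filled greedily to $m$, the rest empty). By contrast $Z_j$ is an arbitrary RTD state subject only to center load $c$ and maximum leaf load $m$. Thus I must exhibit a single coupled step that works \emph{simultaneously for every} admissible full state $Z_j$, which is exactly the type of case-by-case transition matching used in Lemma~\ref{lem:crtd-coupling}.

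First I would enumerate the effect of one RTD step on the star in terms of the sampled multiset of positions: (i) three center tokens, or three tokens at a single leaf, leave both coordinates unchanged ($\Delta c = 0$); (ii) two center tokens and one leaf token move the leaf token to the center ($\Delta c = +1$); (iii) one center token and two tokens at the \emph{same} leaf move the center token out to that leaf ($\Delta c = -1$); (iv) one center token and two tokens at \emph{distinct} leaves elect the center as restricted median ($\Delta c = +2$); (v) three leaf tokens at two or three distinct leaves either concentrate onto the majority leaf or, when all three are distinct, tie and make no change ($\Delta c = 0$). The single quantity that distinguishes (iii) from (iv) --- and hence controls the center coordinate --- is the leaf \emph{collision probability} $Q=\sum_i(\ell_i/n)^2$, the chance that two independent draws land on the same leaf.

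The engine of the proof is the majorization inequality that, among all leaf profiles with a fixed sum $L$ and fixed maximum $m$, the concentrated profile maximizes the Schur-convex functional $\sum_i \ell_i^2$; hence $Q^Y \ge Q^Z$. Because the $+1$ and $0$ probabilities of the center increment do not depend on the leaf profile (the collision term cancels in the $0$ case), passing from $Z$ to $Y$ leaves those masses untouched and only transfers probability from the $+2$ outcome (iv) to the $-1$ outcome (iii), by exactly $3(c/n)(Q^Y-Q^Z)$. This is a pure downward shift, so the center increment of $Y$ is stochastically dominated by that of $Z$, and a monotone coupling of this single coordinate giving $Y'_{(j+1)1}\le Z'_{(j+1)1}$ exists immediately.

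The harder part --- and the step I expect to be the main obstacle --- is making this coupling respect the maximum-leaf coordinate at the same time, for every $Z_j$. Here I would exploit two levers. First, since leaves carrying equal loads are exchangeable, the coupling is free to choose \emph{which} leaf each draw hits, so I can always route gains of $Y$ onto a maximal leaf and losses off of non-maximal leaves. Second, $Y$'s concentration means it typically has several leaves sitting at the maximum $m$, so in case (v) it is more likely than $Z$ to draw two tokens on a top leaf and raise the post-transition maximum to $m+1$; and because re-concentration preserves that post-transition maximum, it suffices to couple the draws so that $Y$'s post-transition maximum is always at least $Z$'s. Combining these, I would match outcomes case by case, conditioning on the number $0,1,2,3$ of center draws (coupled to coincide, since $c$ is common to both chains), so that whenever $Z$ raises its maximum leaf, $Y$ raises its own at least as much, yielding $Y'_{(j+1)2}\ge Z'_{(j+1)2}$ jointly with the center inequality. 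As in Lemma~\ref{lem:crtd-coupling}, verifying that one joint coupling handles all the residual sub-cases is largely bookkeeping; the nontrivial content is the concentration/majorization comparison of $Q^Y$ and $Q^Z$ together with the exchangeability argument that lets both coordinates be controlled at once.
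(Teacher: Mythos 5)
Your proposal follows essentially the same route as the paper's own (itself sketched) proof: the paper likewise reduces the comparison to bounding $\sum_j (z_j/n)^2$ over leaf profiles with fixed sum and fixed maximum, notes that the concentrated profile is the maximizer, and defers the remaining case-by-case coupling verification to bookkeeping. Your version is somewhat more explicit about why this is the right quantity --- it controls the split of the fixed probability mass $3(c/n)(L/n)^2$ between the $\Delta c = -1$ and $\Delta c = +2$ center transitions, the other transition probabilities being profile-independent --- while the joint treatment of the max-leaf coordinate (and the tie-breaking convention for three tokens on distinct leaves) is left at the same level of sketch in both.
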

\begin{proof}
  The proof for this is almost identical for the prior case. The only difference is that $Z'_t$ is not a markov chain. In order to compare to $Y'_t$, simply use $Z_t$ and write expressions for the probabilities that we get transitions in $Z'_t$ (in terms of $Z_t$). We will be able to bound certain expressions through the following optimization problem
  \begin{equation*}
  \begin{aligned}
    & \underset{z_2,\ldots,z_{k+1}}{\text{maximize}}
    & & \sum_{2 \leq j \leq k+1}\left(\frac{z_j}{n}\right)^2 \\
    & \text{subject to}
    & & \sum_{2 \leq j \leq k+1}z_j = \sum_{2 \leq j \leq k+1}Z_{tj},\\
    &&& 0 \leq z_j \leq Z'_{t2}, \; j = 2, \ldots, k+1,\\
    &&& z_2 \geq z_3 \geq \dots \geq z_{k+1}.
  \end{aligned}
  \end{equation*}
  which is solved at
  \begin{align*}
    z_j = \begin{cases}
      Z'_{t2} &\text{ for $j \leq \lfloor\sum_{i=2}^{k+1}Z_{ti}/Z'_{t2}\rfloor$}\\
      \sum_{i=2}^{k+1}Z_{ti}/Z'_{t2} \bmod Z'_{t2} &\text{ for $j = \lfloor\sum_{i=2}^{k+1}Z_{ti}/Z'_{t2}\rfloor + 1$}\\
      0 &\text{ otherwise}
    \end{cases}
  \end{align*}
  Once this is done, breaking it up into multiple cases will enable one to verify that a coupling of these different cases satisfies the desired property (that $Y'_{(j+1)1} \leq Z'_{(j+1)1}$ and $Y'_{(j+1)2} \geq Z'_{(j+1)2}$).
\end{proof}

\begin{lemma}\label{lem:crtd-rtd-coupling-full}
  Consider a CRTD dynamic $X_t$, a RTD dynamic $Z_t$, and their corresponding truncated representations $X'_t$, $Z'_t$. Suppose that $X'_{01} \leq Z'_{01}$ and $X'_{02} \geq Z'_{02}$. Then a coupling exists such that $X'_{t1} \leq Z'_{t1}$ and $X'_{t2} \geq Z'_{t2}$ for all $t$ and for every history of the markov chain.
\end{lemma}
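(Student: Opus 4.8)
The plan is to prove the statement by induction on $t$, using the two one-step coupling lemmas (Lemma \ref{lem:crtd-coupling} and Lemma \ref{lem:crtd-rtd-coupling}) to advance the desired domination from time $t$ to time $t+1$. The base case $t=0$ is exactly the hypothesis $X'_{01} \leq Z'_{01}$ and $X'_{02} \geq Z'_{02}$. For the inductive step, I would suppose that the coupling has already been constructed up to time $t$ so that $X'_{t1} \leq Z'_{t1}$ and $X'_{t2} \geq Z'_{t2}$ hold pathwise, and then produce a one-step extension preserving both inequalities.

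The crux is that Lemma \ref{lem:crtd-rtd-coupling} only compares a CRTD chain and an RTD chain when their truncated representations \emph{agree exactly}, whereas at time $t$ I only have the two inequalities above. To bridge this gap I would introduce an auxiliary CRTD chain $Y_t$ whose truncated representation is set to match that of $Z_t$, i.e. $Y'_{t1} = Z'_{t1}$ and $Y'_{t2} = Z'_{t2}$. Because a CRTD state keeps its leaves concentrated, the full state of $Y_t$ is uniquely reconstructed from the pair $(Z'_{t1}, Z'_{t2})$ (fill as many leaves as possible to the value $Z'_{t2}$, placing any remainder on one further leaf), and this reconstruction is always realizable, since $Z_t$ itself spreads $n - Z'_{t1}$ leaf tokens across $k$ leaves with maximum $Z'_{t2}$.

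With $Y_t$ in hand, the inductive hypothesis gives $X'_{t1} \leq Z'_{t1} = Y'_{t1}$ and $X'_{t2} \geq Z'_{t2} = Y'_{t2}$, so Lemma \ref{lem:crtd-coupling} supplies a coupling of the steps of $X_t$ and $Y_t$ preserving $X'_{(t+1)1} \leq Y'_{(t+1)1}$ and $X'_{(t+1)2} \geq Y'_{(t+1)2}$; simultaneously, since $Y'_t = Z'_t$ exactly, Lemma \ref{lem:crtd-rtd-coupling} supplies a coupling of the steps of $Y_t$ and $Z_t$ with $Y'_{(t+1)1} \leq Z'_{(t+1)1}$ and $Y'_{(t+1)2} \geq Z'_{(t+1)2}$. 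These two couplings share the same $Y_{t+1}$-marginal, namely the CRTD transition out of $Y_t$, so the standard gluing lemma for couplings lets me combine them into a single coupling of the triple $(X_{t+1}, Y_{t+1}, Z_{t+1})$; marginalizing out $Y_{t+1}$ yields a coupling of $(X_{t+1}, Z_{t+1})$, and chaining the inequalities through $Y'_{t+1}$ gives $X'_{(t+1)1} \leq Z'_{(t+1)1}$ and $X'_{(t+1)2} \geq Z'_{(t+1)2}$, closing the induction.

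Because this construction is carried out step by step, with the auxiliary chain $Y_t$ rebuilt from the current value of $Z_t$ at each step, the coupling may be defined adaptively and the domination is maintained pathwise for every history, as required. The main obstacle I anticipate is the gluing step: one must check that the two one-step couplings can be consistently combined through the common intermediate chain (which follows from the gluing lemma once the shared $Y_{t+1}$-marginal is verified) and that the reconstructed concentrated state $Y_t$ is always a legitimate CRTD configuration, so that Lemma \ref{lem:crtd-coupling} and Lemma \ref{lem:crtd-rtd-coupling} genuinely apply at every step.
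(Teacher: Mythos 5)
Your proposal is correct and follows essentially the same route as the paper: introduce an auxiliary CRTD chain $Y_t$ whose truncated representation matches $Z'_t$ at each step, apply Lemma \ref{lem:crtd-coupling} to the pair $(X,Y)$ and Lemma \ref{lem:crtd-rtd-coupling} to the pair $(Y,Z)$, chain the inequalities through $Y'_{t+1}$, and induct. The only cosmetic difference is that the paper simply sets $Y_t = Z_t$ (so the truncated representations agree trivially) rather than reconstructing a concentrated state, and your explicit appeal to the gluing lemma makes the combination of the two one-step couplings more careful than the paper's brief assertion.
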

\begin{proof}
  Consider some time $j$ such that $X'_{j1} \leq Z'_{j1}$ and $X'_{j2} \geq Z'_{j2}$. Define a CRTD dynamic $Y_t$ and its corresponding truncated representation $Y'_t$. Let $Y_j = Z_j$ so that $Y'_{j1} = Z'_{j1}$ and $Y'_{j2} = Z'_{j2}$. We then have $X'_{j1} \leq Y'_{j1} = Z'_{j1}$ and $X'_{j2} \geq Y'_{j2} = Z'_{j2}$. 
  
  By Lemmas \ref{lem:crtd-coupling} and \ref{lem:crtd-rtd-coupling}, it is possible to define a coupling such that $X'_{(j+1)1} \leq Y'_{(j+1)1} \leq Z'_{(j+1)1}$ and $X'_{(j+1)2} \geq Y'_{(j+1)2} \geq Z'_{(j+1)2}$. Therefore, we have proven the inductive hypothesis that $X'_{j1} \leq Z'_{j1}$ and $X'_{j2} \geq Z'_{j2}$ $\implies$ $X'_{(j+1)1} \leq Z'_{(j+1)1}$ and $X'_{(j+1)2} \geq Z'_{(j+1)2}$. Since we have the condition satisfied for $j=0$ based on the initial condition, we can simply use the inductive step to conclude our result for all $t$.
\end{proof}

\section{Formal strategic definitions}

\begin{definition}\label{def:tmr}
  A triadic majority rule (TMR) process is the tuple $M = (V, S, a, v)$, where $V$ is the set of all participants, $S = \{x, y, z\} \subseteq V$ is the set of three participants in the small group, $\{a_t, t \geq 1\}$ is the proposal made during step $t$, which can either be one of the participants in $V$ or $\emptyset$ for a motion to end, and $\{v_t(u), t \geq 1, u \in S\}$ denotes the vote cast by participant $u$ during step $t$, which can either be $1$ or $0$ for accepting or rejecting the proposal respectively.
  Let $w_t$ denote the winner at the beginning of step $t$, $p_t$ denote the participant proposing during step $t$, $v_t^*$ denote the majority vote in step $t$, $T$ denote the step at which the round ends, and $\hat{w}$ denote the winner of the majority rule process. Then,
  \begin{align*}
    w_1 &= x\\
    p_1 &= y\\
    v_t^* &= \text{majority}(v_t(x), v_t(y), v_t(z))\\
    w_{t+1} &= \begin{cases}
      a_t \text{ if $a_t \neq\emptyset$ and $v_t^* = 1$}\\
      w_t \text{ otherwise}
    \end{cases}\\
    p_{t+1} &= \begin{cases}
      u \text{ if $v_t(u) \neq v_t^*$}\\
      p_t \text{ otherwise}
    \end{cases}\\
    T &= \min\{ t \mid a_t = \emptyset \text{ and } v_t^* = 1\}\\
    \hat{w} &= w_T
  \end{align*}
\end{definition}

\begin{definition}\label{def:preferred-points}
  During step $t$ of a TMR process, define the ``preferred points'' for a participant $u$ to be the set $P_u^t$ of points which are on a shortest path from $u$ to the current winner $w_t$, but not equal to $w_t$, i.e. $P_u^t = I_{uw_t} \setminus \{w_t\}$.
\end{definition}
 
\begin{definition}\label{def:bargaining-points}
  During step $t$ of a TMR process, define the ``bargaining points'' for the participant $u$ to be the set $B_u^t$ of points which are preferred points for both $u$ and at least one of the other participants $u'$ or $u''$, i.e. $B_u^t = (P_u^t \cap P_{u'}^t) \cup (P_u^t \cap P_{u''}^t)$.
\end{definition}
 
\begin{definition}
  During step $t$ of a TMR process, define the ``best bargaining point'' $b_u^{*t}$ for the participant $u$ to be the closest bargaining point to $u$, i.e. $b_u^{*t} = \arg\min_{b \in B_u^t} d(b, u)$.
\end{definition}


\begin{definition}\label{def:tr-formal}
  Define truthful bargaining to be the strategy in which agent $x$ proposes and votes in the following way:
  \begin{align*}
    a_t &= \begin{cases}
      b_x^{*t} \text{ if $B_x^t \neq \emptyset$}\\
      \emptyset \text{ otherwise}
    \end{cases}\\
    v_t(x) &= \begin{cases}
      1 \text{ if $a_t = \emptyset$ and $B_x^t = \emptyset$}\\
      1 \text{ if $a_t \neq \emptyset$ and $d(x, a_t) < d(x, w_t)$}\\
      0 \text{ otherwise }
    \end{cases}
  \end{align*}
 \end{definition}


\section{Lemmas for Theorem 8.3}

\begin{lemma}\label{lem:truthful-round}
  Consider a round of TMR in which the three participants $x, y, z$ follow truthful bargaining. Then the winner of the round $\hat{w}$ will be the generalized median of $x, y, z$.
\end{lemma}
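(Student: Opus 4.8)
The plan is to reduce the whole dynamic to a clean combinatorial characterization of when a participant has a bargaining point (Definition \ref{def:bargaining-points}), and then to show that a round terminates precisely when the current winner equals $m = m(x,y,z)$. The key characterization I would prove first is: under truthful bargaining (Definition \ref{def:tr-formal}), a participant $u$ has a bargaining point with witness $u'$ if and only if the current winner $w$ does not lie on a shortest path between $u$ and $u'$, i.e. $w \notin I_{uu'}$. The ``only if'' direction is purely metric: if $w \in I_{uu'}$ and $p \in I_{uw}\cap I_{u'w}$, then $d(u,u') \le d(u,p)+d(p,u') = d(u,u') - 2d(p,w)$, forcing $p = w$, so $I_{uw}\cap I_{u'w} = \{w\}$ and no bargaining point exists. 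For the ``if'' direction I use the median structure: when $w \notin I_{uu'}$ the median $m(u,u',w)$ is distinct from $w$ and lies on both $I_{uw}$ and $I_{u'w}$, hence is a bargaining point for $u$.

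Next I would establish that \emph{no} participant has any bargaining point if and only if $w = m$. If $w = m$, the median property (Definition \ref{def:median-graph}) gives $w \in I_{uu'}$ for every pair, so by the characterization nobody has a bargaining point. Conversely, if two of the three participants, say $x$ and $y$, both lack bargaining points, then $w \in I_{xy}\cap I_{xz}$ (from $x$) and $w \in I_{xy}\cap I_{yz}$ (from $y$), so $w$ lies on a shortest path between all three pairs; the uniqueness clause of Definition \ref{def:median-graph} forces $w = m$. Contrapositively, whenever $w \ne m$ at most one participant lacks a bargaining point, so at least two have one.

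I would then read off the termination behavior directly. When $w = m$, the proposer has no bargaining point and therefore motions to end, and every participant accepts (all lack bargaining points), so the round ends with $\hat w = m$. When $w \ne m$, termination is impossible: if the proposer has a bargaining point he proposes it rather than motioning to end, while if he is the unique participant without one, the other two both have bargaining points and hence reject his motion to end, so it fails. Moreover a bargaining point $b$ for $u$ with witness $u'$ satisfies $b \in I_{uw}\cap I_{u'w}$ with $b \ne w$, so it is strictly closer to both $u$ and $u'$ than $w$ is; both therefore vote for it and the proposal is accepted. Thus the winner always moves as long as $w \ne m$, and the round can only terminate at $m$.

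Finally, to guarantee the winner actually reaches $m$, I would use the integer potential $\Phi(w) = d(x,w)+d(y,w)+d(z,w)$: an accepted proposal $b$ decreases the distances of the two accepting participants by $d(w,b)\ge 1$ each and increases the third's by at most $d(w,b)$ (triangle inequality), so $\Phi$ strictly decreases at every winner change. Since $\Phi$ is a bounded nonnegative integer and, by the previous paragraph, the winner keeps changing whenever $w \ne m$, the process reaches $w = m$ in finitely many steps and terminates there with $\hat w = m$. I expect the main obstacle to be exactly this termination step: one must verify that the proposer role rotates appropriately (via the update rule of Definition \ref{def:tmr}, under which the dissenting voter on an accepted proposal becomes the next proposer, together with arbitrary tie-breaking) so that a bargaining-capable participant eventually proposes and the dynamic cannot stall on a repeated failed motion to end; once rotation is handled, the potential argument closes the proof.
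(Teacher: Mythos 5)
Your supporting lemmas are all correct and essentially reprove the paper's Lemmas \ref{lem:bargaining-points}, \ref{lem:interval-intersection} and \ref{lem:no-bargaining}: the characterization ``$u$ has a bargaining point witnessed by $u'$ iff $w \notin I_{uu'}$'' and the equivalence ``nobody has a bargaining point iff $w = m(x,y,z)$'' are both sound, as is the computation showing $\Phi$ strictly decreases on every accepted proposal. But the termination step you flag as ``the main obstacle'' is a genuine gap, and the rotation you hope for does not exist. Suppose $w \neq m$ and the proposer $p$ happens to be the unique participant with no bargaining point. Under truthful bargaining, $p$ motions to end and votes to accept his own motion, while the other two reject; the majority outcome is rejection, so the lone dissenter from the majority vote is $p$ himself, and by the update rule of Definition \ref{def:tmr} (namely $p_{t+1} = u$ if $v_t(u) \neq v_t^*$) the next proposer is again $p$. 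Nothing changes, your claim that ``the winner always moves as long as $w \neq m$'' fails, and the dynamic stalls forever in this state. Your argument therefore does not close unless you show this configuration is unreachable.

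It is unreachable, but showing that amounts to the paper's (much shorter) proof: the paper simply traces the dynamic from the prescribed initial condition $w_1 = x$, $p_1 = y$. There, $y$'s best bargaining point is the closer to $y$ of $m(y,x,w_1) = x$ and $m(y,z,w_1) = m(x,y,z)$; the latter lies on $I_{yx}$, so the first proposal is already the global median. Participants $y$ and $z$ accept it, $x$ becomes the proposer, and at $w_2 = m(x,y,z)$ nobody has bargaining points, so the motion to end passes and $T = 2$. In other words, the problematic ``proposer with no bargaining point while $w \neq m$'' state never occurs because the winner reaches $m$ after the very first accepted proposal. I would recommend replacing your general invariant-plus-potential argument with this two-step trace (or, if you want to keep your structure, prepending a proof that from $(w_1, p_1) = (x, y)$ the proposer always has a bargaining point until $w = m$); once the trace is done, the potential-function machinery is unnecessary.
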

\begin{proof}
As defined in Definition \ref{def:tmr}, $w_1 = x, p_1 = y$. Since $y$ is following truthful bargaining, he proposes his best bargaining point, which is the closer of $m(y, x, w_1)$ and $m(y, z, w_1)$ (Lemma \ref{lem:best-bargaining}). Then since $m(y, x, w_1) = m(y, x, x) = x$, and since $m(y, z, w_1) = m(y, z, x)$ lies on a shortest path from $y$ to $x$ (Theorem \ref{thm:median-graph-characterization}), we have $a_1 = m(x, y, z)$. We now have the votes $v_1(x) = 0,\quad v_1(z) = 1$ since $x$ clearly prefers himself to any other point, and $z$ prefers $m(x, y, z)$ since $m(x, y, z)$ must lie on a shortest path from $z$ to $w_1 = x$ (Theorem \ref{thm:median-graph-characterization}). This implies $v_1^* = 1$. Therefore,
  \begin{align*}
    w_2 = m(x, y, z), \quad p_2 = x
  \end{align*}
  It can be shown that none of the participants have any bargaining points at this step (Lemma \ref{lem:all-no-bargaining}), and thus, $x$ motions to end, and all participants vote to accept, i.e.
  \begin{align*}
    a_2 = \emptyset,\quad v_2(y) = 1,\quad v_2(z) = 1
  \end{align*}
  Therefore, $T = 2$ and $\hat{w} = w_T = m(x, y, z)$.
\end{proof}

\begin{lemma}\label{lem:best-bargaining}
  At step $t$ of a TMR round, the best bargaining point of $y$, if it exists, is the closer of $m(y, x, w_t)$ and $m(y, z, w_t)$.
\end{lemma}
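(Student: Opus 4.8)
The plan is to handle the two families that make up the bargaining set separately: I will show that $m(y,x,w_t)$ is the point closest to $y$ among the bargaining points $y$ shares with $x$, and $m(y,z,w_t)$ is closest among those shared with $z$. Since $B_y^t = (P_y^t \cap P_x^t) \cup (P_y^t \cap P_z^t)$ by Definition \ref{def:bargaining-points}, the best bargaining point of $y$ is then simply the closer of these two family-minimizers, which is what the statement claims.

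First I would check that $m = m(y,x,w_t)$ is genuinely a bargaining point whenever $m \neq w_t$. By Theorem \ref{thm:median-graph-characterization}, $m$ is the unique point lying simultaneously on a shortest path between every pair of $y, x, w_t$; in particular $m \in I_{yw_t}$ and $m \in I_{xw_t}$, so $m \in P_y^t \cap P_x^t$ (Definition \ref{def:preferred-points}) and hence $m \in B_y^t$. The symmetric argument gives $m(y,z,w_t) \in P_y^t \cap P_z^t \subseteq B_y^t$.

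The key step is to show that $m = m(y,x,w_t)$ minimizes $d(y,\cdot)$ over $P_y^t \cap P_x^t = (I_{yw_t} \cap I_{xw_t}) \setminus \{w_t\}$, and for this I would identify $m$ as the gate of $y$ in the convex set $I_{xw_t}$. The interval $I_{xw_t}$ is convex (Lemma \ref{lem:interval-convex}), hence gated (Lemma \ref{lem:gated-sets}); let $g$ denote the gate of $y$. By the gate property, a shortest path from $y$ to every point of $I_{xw_t}$ passes through $g$, in particular to $x$ and to $w_t$, so $g \in I_{yx}$ and $g \in I_{yw_t}$; combined with $g \in I_{xw_t}$, the point $g$ lies on a shortest path between each pair of $y, x, w_t$, so by Definition \ref{def:median-graph} we must have $g = m$. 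The gate property then gives $d(y,b) = d(y,m) + d(m,b)$ for every $b \in I_{xw_t}$; restricting to $b \in I_{yw_t} \cap I_{xw_t}$ yields $d(y,b) \geq d(y,m)$ with equality at $b = m$. Thus $m(y,x,w_t)$ is the closest point to $y$ in $P_y^t \cap P_x^t$, and the identical argument shows $m(y,z,w_t)$ is closest in $P_y^t \cap P_z^t$. Combining over the union gives the conclusion.

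I expect the main obstacle to be the gate identification above, namely pinning down that the gate of $y$ in $I_{xw_t}$ is exactly the three-way median, which is where Lemmas \ref{lem:gated-sets} and \ref{lem:interval-convex} and the median-graph axiom are all needed. A minor point to dispatch is the degenerate case $m(y,x,w_t) = w_t$: here $w_t$ lies on a shortest path between $y$ and $x$, and a short triangle-inequality argument shows $I_{yw_t} \cap I_{xw_t} = \{w_t\}$, so $P_y^t \cap P_x^t$ is empty and simply contributes no candidate to the comparison, consistent with the ``if it exists'' qualifier in the statement.
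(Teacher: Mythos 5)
Your proof is correct, but it establishes the key minimization step by a different mechanism than the paper. Both arguments make the same (forced) decomposition of $B_y^t$ into the two families $P_y^t \cap P_x^t$ and $P_y^t \cap P_z^t$ and then show that $m(y,x,w_t)$ (resp.\ $m(y,z,w_t)$) minimizes $d(y,\cdot)$ over its family. The paper does this by first computing the intersection explicitly --- Lemma \ref{lem:bargaining-points} together with Lemma \ref{lem:interval-intersection} gives $I_{yw_t}\cap I_{xw_t} = I_{w_t m(w_t,y,x)}$, the latter resting on the median-graph axiom of Lemma \ref{lem:median-graph-axioms} --- and then invoking the triangle-inequality Lemma \ref{lem:median-separates} to see that every point of $I_{w_t m(w_t,y,x)}$ has $m(w_t,y,x)$ between it and $y$. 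You instead identify $m(y,x,w_t)$ as the gate of $y$ in the convex interval $I_{xw_t}$ (Lemmas \ref{lem:interval-convex} and \ref{lem:gated-sets}), which immediately yields the additive decomposition $d(y,b) = d(y,m) + d(m,b)$ over \emph{all} of $I_{xw_t}$, a strictly stronger statement than needed, and sidesteps both Lemma \ref{lem:interval-intersection} and Lemma \ref{lem:median-separates}. Your route is arguably more self-contained for this one lemma; the paper's choice is partly an economy of reuse, since Lemma \ref{lem:interval-intersection} and Lemma \ref{lem:median-separates} are needed again for Lemma \ref{lem:no-bargaining} and Lemma \ref{lem:deviate-round}. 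Your handling of the degenerate case $m(y,x,w_t) = w_t$ (empty family, consistent with the ``if it exists'' qualifier) is also fine, and is a point the paper's proof leaves implicit.
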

\begin{proof}
We know that $B_y^t = (I_{w_tm(w_t, y, x)} \cup I_{w_tm(w_t, y, z)}) \setminus \{w_t\}$ (Lemma \ref{lem:bargaining-points}). But we also know that for any $u \in I_{w_tm(w_t, y, x)}$, $m(w_t, y, x) \in I_{yu}$ (Lemma \ref{lem:median-separates}). Therefore, $m(w_t, y, x)$ is the closest of all points in $I_{w_tm(w_t, y, x)}$ to $y$. One can argue similarly that $m(w_t, y, z)$ is the closest of all points in $I_{w_tm(w_t, y, z)}$ to $y$, and our result follows.
\end{proof}

\begin{lemma}\label{lem:bargaining-points}
  At step $t$ of a TMR round, the bargaining points for $y$ are exactly those points which lie on a shortest path from $m(w_t, y, x)$ to $w_t$ or a shortest path from $m(w_t, y, z)$ to $w_t$, but not including $w_t$, i.e. $B_y^t = (I_{w_tm(w_t, y, x)} \cup I_{w_tm(w_t, y, z)}) \setminus \{w_t\}$.
\end{lemma}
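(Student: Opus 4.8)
The plan is to unwind the two definitions involved and reduce the whole claim to a single clean identity about intervals in median graphs. By Definition \ref{def:bargaining-points} we have $B_y^t = (P_y^t \cap P_x^t) \cup (P_y^t \cap P_z^t)$, and by Definition \ref{def:preferred-points} each preferred set is $P_u^t = I_{uw_t} \setminus \{w_t\}$. Since deleting the single point $w_t$ distributes over both intersection and union, it suffices to prove the symmetric statement $I_{yw_t} \cap I_{xw_t} = I_{w_t m(w_t, y, x)}$ together with its analogue with $z$ in place of $x$; the lemma then follows by taking the union of the two and removing $w_t$.

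For the easy inclusion $I_{w_t m} \subseteq I_{yw_t} \cap I_{xw_t}$, writing $m = m(w_t, y, x)$, I would invoke Theorem \ref{thm:median-graph-characterization}, which says $m$ lies on a shortest path between every pair of $w_t, x, y$; in particular $w_t, m \in I_{w_t x}$ and $w_t, m \in I_{w_t y}$. Since intervals are convex (Lemma \ref{lem:interval-convex}), the entire interval $I_{w_t m}$ is contained in both $I_{w_t x}$ and $I_{w_t y}$, giving the inclusion.

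The reverse inclusion $I_{yw_t} \cap I_{xw_t} \subseteq I_{w_t m}$ is the crux. Here I would use the algebraic characterization that in a median graph $p \in I_{ab}$ if and only if $m(a, p, b) = p$ (true because $p \in I_{ab}$ means exactly $d(a,p)+d(p,b) = d(a,b)$, i.e. $p$ is the unique minimizer of the sum of distances to $a, p, b$). So given $p \in I_{w_t x} \cap I_{w_t y}$ we have $m(w_t, p, x) = p$ and $m(w_t, p, y) = p$. I would then expand $m(w_t, p, m(w_t, x, y))$ using the median identity of Lemma \ref{lem:median-graph-axioms}, obtaining $m(m(w_t, p, w_t), m(w_t, p, x), y)$. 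Using $m(w_t, p, w_t) = w_t$ and $m(w_t, p, x) = p$, this collapses to $m(w_t, p, y)$, which equals $p$ by the second hypothesis. Hence $m(w_t, p, m) = p$, i.e. $p \in I_{w_t m}$, completing the identity.

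I expect the reverse inclusion, and specifically the bookkeeping of the median-algebra substitution, to be the main obstacle; the rest is definitional unwinding plus convexity. The one additional point to handle carefully is the deletion of $w_t$: one should verify that removing $w_t$ commutes appropriately with the intersections and the final union so that $B_y^t$ comes out as exactly $(I_{w_t m(w_t, y, x)} \cup I_{w_t m(w_t, y, z)}) \setminus \{w_t\}$ rather than differing by the single point $w_t$.
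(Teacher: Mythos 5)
Your proposal is correct and follows essentially the same route as the paper: the paper also reduces the claim, after unwinding Definitions \ref{def:preferred-points} and \ref{def:bargaining-points}, to the identity $I_{xw_t} \cap I_{yw_t} = I_{w_t m(w_t,y,x)}$, which it isolates as Lemma \ref{lem:interval-intersection} and proves by the very same application of the median identity (Lemma \ref{lem:median-graph-axioms}) together with the characterization $p \in I_{ab} \iff m(a,p,b)=p$. The only cosmetic difference is in the easy inclusion, where the paper cites Lemma \ref{lem:median-separates} while you use Theorem \ref{thm:median-graph-characterization} plus convexity of intervals; both are valid.
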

\begin{proof}
By Definitions \ref{def:preferred-points} and \ref{def:bargaining-points}, $B_y^t = (P_y^t \cap P_x^t) \cup (P_y^t \cap P_{z}^t) = ((I_{yw_t} \cap I_{xw_t}) \cup (I_{yw_t} \cap I_{zw_t})) \setminus \{w_t\}$. Then our result follows from Lemma \ref{lem:interval-intersection} since $I_{w_tm(w_t, y, x)} = I_{xw_t} \cap I_{yw_t}$ and $I_{w_tm(w_t, y, z)} = I_{yw_t} \cap I_{zw_t}$.
\end{proof}

\begin{lemma}\label{lem:interval-intersection}
  For nodes $x, y, z$ in a median graph, $I_{xy} \cap I_{xz} = I_{xm(x,y,z)}$.
\end{lemma}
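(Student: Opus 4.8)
The plan is to prove the two inclusions separately, writing $m = m(x,y,z)$ for brevity.

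For the inclusion $I_{xm} \subseteq I_{xy} \cap I_{xz}$, I would appeal to the convexity of intervals (Lemma \ref{lem:interval-convex}). By Theorem \ref{thm:median-graph-characterization} the median $m$ lies on a shortest path between $x$ and $y$, so both endpoints $x$ and $m$ belong to $I_{xy}$. Since $I_{xy}$ is convex, every shortest path between $x$ and $m$ stays inside it, i.e. $I_{xm} \subseteq I_{xy}$; the identical argument with $z$ in place of $y$ gives $I_{xm} \subseteq I_{xz}$, and intersecting yields the desired containment. (Alternatively, this direction is a one-line triangle-inequality computation using $d(x,y) = d(x,m) + d(m,y)$.)

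The reverse inclusion $I_{xy} \cap I_{xz} \subseteq I_{xm}$ is the substantive step, and I would establish it through the win-set (cut) characterization of intervals in Lemma \ref{lem:winsets-5}. Fix $w \in I_{xy} \cap I_{xz}$; to conclude $w \in I_{xm}$ it suffices, by Lemma \ref{lem:winsets-5}, to show that for every edge $e = (u,v)$ with $x, m \in W_{uv}$ we also have $w \in W_{uv}$. So suppose $x, m \in W_{uv}$. The key observation is that the median always lands on the majority side of a cut: by Lemma \ref{lem:median-majority}, $m$ belongs to whichever of the two win sets $W_{uv}, W_{vu}$ contains at least two of $x, y, z$, and since these win sets partition the nodes (Lemma \ref{lem:winsets-1}), $m \in W_{uv}$ forces $W_{uv}$ to be that majority side. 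Hence $W_{uv}$ contains at least two of $x, y, z$; combined with the assumption $x \in W_{uv}$, this means $W_{uv}$ contains $x$ together with at least one of $y$ or $z$ --- say $x, y \in W_{uv}$ without loss of generality. Since $w \in I_{xy}$, the forward direction of Lemma \ref{lem:winsets-5} gives $w \in W_{uv}$, which is exactly what we needed.

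The main obstacle is the reverse inclusion, and in particular pinning down the behaviour of the median relative to an arbitrary cut; everything hinges on the majority property of Lemma \ref{lem:median-majority}, which lets me reduce each cut to the already-known membership $w \in I_{xy}$ or $w \in I_{xz}$. Once the majority argument is in hand, quantifying over all edges via Lemma \ref{lem:winsets-5} closes the proof without any further computation.
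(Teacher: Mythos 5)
Your proof is correct, but it takes a genuinely different route from the paper's. For the easy inclusion $I_{xm} \subseteq I_{xy} \cap I_{xz}$, the paper invokes Lemma \ref{lem:median-separates} (any point on a geodesic from $x$ to $m$ is on a geodesic from $x$ to $y$), whereas you use the convexity of intervals (Lemma \ref{lem:interval-convex}) together with $m \in I_{xy}$; both are sound and essentially one line. The real divergence is in the reverse inclusion: the paper takes $v \in I_{xy} \cap I_{xz}$, notes $m(v,x,y) = m(v,x,z) = v$, and applies the median-algebra identity of Lemma \ref{lem:median-graph-axioms} to compute $m(v,x,m(x,y,z)) = m(m(v,x,y), m(v,x,z), x) = m(v,v,x) = v$, which directly gives $v \in I_{xm}$. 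You instead argue cut by cut via Lemma \ref{lem:winsets-5}: for any edge whose win set $W_{uv}$ contains both $x$ and $m$, the majority property of the median (Lemma \ref{lem:median-majority}) forces $W_{uv}$ to contain a second element of $\{x,y,z\}$, and then $w \in I_{xy}$ or $w \in I_{xz}$ places $w$ in $W_{uv}$ as well; quantifying over all edges yields $w \in I_{xm}$. Your argument is the same cut-based technique the paper itself uses to prove Lemma \ref{lem:median-preserves-domin}, so it is fully supported by the stated machinery; it avoids the median-algebra identity entirely at the cost of being somewhat longer than the paper's two-line computation. Each step you cite (Lemmas \ref{lem:winsets-1}, \ref{lem:winsets-5}, \ref{lem:median-majority}, \ref{lem:interval-convex}, and Theorem \ref{thm:median-graph-characterization} for $m \in I_{xy} \cap I_{xz}$) is used correctly, so there is no gap.
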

\begin{proof}
First, consider any $u \in I_{xm(x, y, z)}$. Then we can apply Lemma \ref{lem:median-separates} to get $u \in I_{xy}$ and $u \in I_{xz}$. Now consider any $v \in I_{xy} \cap I_{xz}$. Then we know that $m(v, x, y) = m(v, x, z) = v$ and applying Lemma \ref{lem:median-graph-axioms}, gets
\begin{align*}
  m(v, x, m(x, y, z)) &= m(m(v, x, y), m(v, x, z), x)\\
  &= m(v, v, x) = v
\end{align*}
Therefore, $v \in I_{xm(x, y, z)}$ and we get our result.
\end{proof}

\begin{lemma}\label{lem:median-separates}
  Consider $x, y, z \in V$, where $V$ is a median graph. Suppose for $w \in V$, $w$ is on a shortest path from $y$ to $m(x, y, z)$. Then, $m(x, y, z)$ must be on a shortest path from $x$ to $w$ and $w$ must be on a shortest path from $x$ to $y$.
\end{lemma}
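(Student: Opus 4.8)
The plan is to work entirely with the win-set (split) characterization of intervals supplied by Lemma \ref{lem:winsets-5}, which says that a node $p$ lies in $I_{ab}$ if and only if, for every edge $e = (u,v)$, $p$ lies on the same side of the partition $\{W_{uv}, W_{vu}\}$ as $a$ and $b$ whenever $a$ and $b$ themselves lie on the same side. Writing $m = m(x,y,z)$, the only fact I need about the median is that $m \in I_{xy}$, which is immediate from the definition of the median in a median graph (Definition \ref{def:median-graph}), since the median lies on a shortest path between each pair. Everything else will follow by tracking how $x$, $y$, $m$, and $w$ sit relative to each split, using that the win sets $W_{uv}$ and $W_{vu}$ are convex and partition $V$ (Lemma \ref{lem:winsets-1}).

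First I would establish $w \in I_{xy}$. Fix any edge $e = (u,v)$ and suppose $x$ and $y$ lie on the same side, say $x, y \in W_{uv}$. Since $m \in I_{xy}$, Lemma \ref{lem:winsets-5} forces $m \in W_{uv}$, so now $y$ and $m$ lie together in $W_{uv}$. Because $w \in I_{ym}$ by hypothesis, a second application of Lemma \ref{lem:winsets-5} forces $w \in W_{uv}$. Thus $w$ agrees with $x$ and $y$ on every split on which $x$ and $y$ agree, and the reverse direction of Lemma \ref{lem:winsets-5} yields $w \in I_{xy}$.

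Next I would establish $m \in I_{xw}$, arguing by contradiction. If $m \notin I_{xw}$, then Lemma \ref{lem:winsets-5} produces an edge $e = (u,v)$ on which $x$ and $w$ agree, say $x, w \in W_{uv}$, while $m$ sits on the opposite side, $m \in W_{vu}$. I would then split on the location of $y$, which is forced to exactly one side by Lemma \ref{lem:winsets-1}. If $y \in W_{uv}$, then $x, y \in W_{uv}$ and $m \in I_{xy}$ would force $m \in W_{uv}$, contradicting $m \in W_{vu}$. If instead $y \in W_{vu}$, then $y, m \in W_{vu}$ and $w \in I_{ym}$ would force $w \in W_{vu}$, contradicting $w \in W_{uv}$. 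Either way we reach a contradiction, so no such separating edge exists and $m \in I_{xw}$.

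The conceptual obstacle is recognizing that this is really a statement about splits rather than about distances: once one commits to reasoning through Lemma \ref{lem:winsets-5}, each half collapses to a two-line case check with no distance arithmetic. I would also be careful to invoke only Lemmas \ref{lem:winsets-5} and \ref{lem:winsets-1} together with the definition of the median, since the interval-intersection identity of Lemma \ref{lem:interval-intersection} is itself proved using the present lemma, so appealing to it here would be circular.
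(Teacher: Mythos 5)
Your proof is correct, but it takes a genuinely different route from the paper's. The paper proves this lemma by pure distance arithmetic: starting from the triangle inequality $d(x,w) \leq d(x,m) + d(m,w)$ and the chain $d(x,w) \geq d(x,y) - d(y,w) = d(x,m) + d(m,y) - d(y,w) = d(x,m) + d(m,w)$ (using $m \in I_{xy}$ and $w \in I_{ym}$), it forces equality throughout, which simultaneously yields $m \in I_{xw}$ and $w \in I_{xy}$. That argument is valid in an arbitrary metric space once the two interval memberships are granted; the only median-graph input is $m(x,y,z) \in I_{xy}$. Your argument instead routes everything through the split characterization of intervals (Lemma \ref{lem:winsets-5}) together with the convex-partition property of win sets (Lemma \ref{lem:winsets-1}); both halves then reduce to a two-case check on which side of each split $y$ lies. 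This is a clean and correct alternative, and your observation that Lemma \ref{lem:interval-intersection} cannot be invoked without circularity is well taken. Two small remarks: your approach leans on median-graph machinery that the paper's proof does not actually need, so it is slightly less general; and the fact that $m(x,y,z) \in I_{xy}$ is not quite ``immediate from Definition \ref{def:median-graph}'' --- that definition concerns the unique triple-intersection point, and identifying it with the generalized median $m(x,y,z)$ of Definition \ref{def:median-of-three} is the content of Theorem \ref{thm:median-graph-characterization}, which is the citation the paper itself uses at this step.
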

\begin{proof}
By triangle inequality, $d(x, w) \leq d(x, m(x, y, z)) + d(m(x, y, z), w)$. But we also have
\begin{align}
  d(x, w) &\geq d(x, y) - d(y, w)\label{eqn:4-1}\\
          &= d(x, m(x, y, z)) + d(m(x, y, z), y) - d(y, w)\label{eqn:4-2}\\
          &= d(x, m(x, y, z)) + d(w, m(x, y, z))\label{eqn:4-3}
\end{align}
where (\ref{eqn:4-1}) follows from the triangle inequality, (\ref{eqn:4-2}) from $m(x,y,z) \in I_{xy}$ (Theorem \ref{thm:median-graph-characterization}), and (\ref{eqn:4-3}) from $w \in I_{ym(x,y,z)}$ (lemma statement). Therefore, we must have equality which means $m(x,y,z) \in I_{xw}$. A similar argument shows us that $w \in I_{xy}$.
\end{proof}

\begin{lemma}\label{lem:all-no-bargaining}
  Suppose that, at step $t$ of a TMR round, $w_t = m(x, y, z)$. Then no participants have any bargaining points.
\end{lemma}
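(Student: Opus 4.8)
The plan is to invoke Lemma~\ref{lem:bargaining-points} to reduce the statement to a one-line computation with the median operation. Fixing the participant $y$ (the cases of $x$ and $z$ being symmetric), that lemma gives $B_y^t = (I_{w_t m(w_t, y, x)} \cup I_{w_t m(w_t, y, z)}) \setminus \{w_t\}$. Hence it suffices to show that both $m(w_t, y, x)$ and $m(w_t, y, z)$ equal $w_t$ itself: in that case each interval $I_{w_t w_t}$ collapses to the single point $\{w_t\}$, so the union is $\{w_t\}$ and removing $w_t$ leaves the empty set.

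The crux is therefore the identity $m(w_t, y, x) = w_t$ under the hypothesis $w_t = m(x,y,z)$. First I would use Theorem~\ref{thm:median-graph-characterization} to record that the generalized median lies on a shortest path between every pair, so in particular $w_t \in I_{xy}$, $w_t \in I_{yz}$, and $w_t \in I_{xz}$. Next I would invoke the elementary median-graph fact that any point $p$ on a geodesic $I_{ab}$ is its own median with the endpoints, i.e. $m(p, a, b) = p$; this is immediate from Definition~\ref{def:median-graph}, since $p \in I_{pa} \cap I_{pb} \cap I_{ab}$ trivially and this triple intersection is the singleton $\{m(p,a,b)\}$. Applying it with $p = w_t$ gives $m(w_t, x, y) = w_t$ and $m(w_t, y, z) = w_t$.

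Substituting these back yields $B_y^t = (\{w_t\} \cup \{w_t\}) \setminus \{w_t\} = \emptyset$, so $y$ has no bargaining points. Since both the conclusion and the hypothesis $w_t = m(x,y,z)$ are symmetric in $x, y, z$, the same computation (now also using $w_t \in I_{xz}$) shows $B_x^t = B_z^t = \emptyset$, which completes the argument.

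I expect the only delicate point to be establishing $m(p,a,b) = p$ for $p \in I_{ab}$ cleanly from the axioms rather than by ad hoc distance manipulation; everything else is bookkeeping with the sets handed to us by Lemma~\ref{lem:bargaining-points}. If a more self-contained derivation is preferred, I would instead apply Lemma~\ref{lem:interval-intersection} in the form $I_{w_t x} \cap I_{w_t y} = I_{w_t m(w_t, x, y)}$ and observe that $w_t \in I_{xy}$ forces $I_{w_t x} \cap I_{w_t y} = \{w_t\}$, reaching the same conclusion without separately proving the geodesic identity.
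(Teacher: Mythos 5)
Your proof is correct and follows essentially the same route as the paper: the paper derives this as an immediate corollary of Lemma~\ref{lem:no-bargaining} (whose proof is exactly the chain $B_y^t=\emptyset \iff m(w_t,y,x)=m(w_t,y,z)=w_t \iff w_t\in I_{yx}\cap I_{yz} \iff w_t\in I_{ym(x,y,z)}$ via Lemmas~\ref{lem:bargaining-points} and~\ref{lem:interval-intersection}), and you have simply inlined that reasoning, with the geodesic identity $m(p,a,b)=p$ for $p\in I_{ab}$ playing the role of the middle equivalence. The symmetry remark covering $x$ and $z$ and your noted alternative via Lemma~\ref{lem:interval-intersection} match the paper's treatment, so no gap remains.
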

\begin{proof}
This follows directly from Lemma \ref{lem:no-bargaining}.
\end{proof}

\begin{lemma}\label{lem:no-bargaining}
Participant $y$ does not have any bargaining points at step $t$ of a TMR round if and only if $w_t \in I_{ym(x, y, z)}$.
\end{lemma}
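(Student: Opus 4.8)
The plan is to reduce the emptiness of $y$'s bargaining set to a pair of median identities, and then to recognize the geometric content of those identities using the interval lemmas already established. First I would invoke Lemma \ref{lem:bargaining-points}, which supplies the explicit description
\[
  B_y^t = \left(I_{w_t m(w_t, y, x)} \cup I_{w_t m(w_t, y, z)}\right) \setminus \{w_t\}.
\]
Because $B_y^t$ is a union of two sets, each obtained by deleting $w_t$, it is empty if and only if each of $I_{w_t m(w_t, y, x)} \setminus \{w_t\}$ and $I_{w_t m(w_t, y, z)} \setminus \{w_t\}$ is empty. The interval $I_{w_t m(w_t, y, x)}$ always contains the endpoint $w_t$, and it contains a second, distinct node precisely when $m(w_t, y, x) \neq w_t$; hence deleting $w_t$ leaves the empty set exactly when $m(w_t, y, x) = w_t$, and symmetrically for the $z$-term. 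Thus $y$ has no bargaining points if and only if $m(w_t, y, x) = w_t$ and $m(w_t, y, z) = w_t$.

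The second step is to translate each median identity into an interval membership. I would use the elementary fact that, for any three nodes, the generalized median coincides with one of its arguments exactly when that argument lies on a shortest path between the other two; concretely $m(w_t, y, x) = w_t \iff w_t \in I_{yx}$ and $m(w_t, y, z) = w_t \iff w_t \in I_{yz}$. The forward direction follows from Theorem \ref{thm:median-graph-characterization}, which places the median on a shortest path between each pair of the three points, so $m(w_t,y,x)=w_t$ forces $w_t \in I_{yx}$. The reverse direction follows from Definition \ref{def:median-graph}: if $w_t \in I_{yx}$, then together with the trivial memberships $w_t \in I_{w_t y}$ and $w_t \in I_{w_t x}$, the node $w_t$ lies in the triple intersection $I_{w_t y} \cap I_{w_t x} \cap I_{yx}$ that uniquely determines the median, so $w_t = m(w_t, y, x)$. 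Combining the two equivalences gives that $y$ has no bargaining points if and only if $w_t \in I_{yx} \cap I_{yz}$.

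Finally, I would close the argument with Lemma \ref{lem:interval-intersection}, applied with $y$ playing the role of its first argument: this yields $I_{yx} \cap I_{yz} = I_{y m(y,x,z)} = I_{y m(x,y,z)}$, using symmetry of the median in its arguments. Substituting this identity shows that $y$ has no bargaining points if and only if $w_t \in I_{y m(x,y,z)}$, which is exactly the claim. I do not anticipate a substantive obstacle here; the only points requiring care are the bookkeeping in the first step (that deleting $w_t$ collapses each interval to $\emptyset$ precisely at the condition $m(\cdot) = w_t$) and correctly matching the argument ordering when invoking Lemma \ref{lem:interval-intersection}.
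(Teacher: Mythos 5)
Your proposal is correct and follows essentially the same chain of equivalences as the paper's proof: reduce via Lemma \ref{lem:bargaining-points} to $m(w_t,y,x)=m(w_t,y,z)=w_t$, translate that to $w_t\in I_{yx}\cap I_{yz}$, and finish with Lemma \ref{lem:interval-intersection}. You simply spell out the intermediate steps (the interval-collapse bookkeeping and the equivalence $m(w_t,y,x)=w_t \iff w_t\in I_{yx}$) that the paper asserts without comment.
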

\begin{proof}
\begin{align}
  B_y^t = \emptyset &\iff (I_{w_tm(w_t, y, x)} \cup I_{w_tm(w_t, y, z)}) = \{w_t\}\label{eqn:4-4}\\
&\iff m(w_t, y, x) = m(w_t, y, z) = w_t\notag\\
&\iff w_t \in I_{yx} \text{ and } w_t \in I_{yz}\notag\\
&\iff w_t \in I_{ym(x, y, z)}\label{eqn:4-5}
\end{align}
where (\ref{eqn:4-4}) is from Lemma \ref{lem:bargaining-points} and (\ref{eqn:4-5}) is from Lemma \ref{lem:interval-intersection}.
%
\end{proof}

\begin{lemma}\label{lem:deviate-round}
  Consider any subgame of a triadic decision-making process $L = (V, f, g)$ where participants $V$ form a median graph and $g_t$ is determined by a majority rule process between the participants $S_t$. Consider any round $s$ of TMR within the subgame, including the round in which the initial node lies, and suppose that all participants follow the truthful bargaining strategy except for one deviating participant $x \in S_s$. Then either the round of TMR never ends or the generalized median of $S_s$ lies on a shortest path between $x$ and the winner of the $s$-round $\hat{w} = g_s$.
\end{lemma}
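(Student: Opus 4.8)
The plan is to reduce the statement to a single structural fact about medians (Lemma \ref{lem:median-separates}) by analyzing the one event that must occur for the round to terminate: a successful motion to end. Throughout, write $x$ for the deviating participant and $y, z$ for the two participants who follow truthful bargaining, and set $m = m(x,y,z)$, the generalized median of $S_s$. I would argue the contrapositive: assuming the round does end, with winner $\hat{w} = g_s$, I will establish $m \in I_{x\hat{w}}$; the remaining possibility is exactly the ``round never ends'' alternative.

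First I would unpack what termination requires. By Definition \ref{def:tmr}, the round ends at the first step $T$ with $a_T = \emptyset$ and $v_T^* = 1$; that is, a motion to end must receive at least two of the three votes, and $\hat{w} = w_T$ is the current winner at that moment. Since only $x$ deviates, any two-vote majority must contain at least one of the truthful players; fix one such accepting truthful player and call it $y$ without loss of generality.

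Next I would invoke the definition of truthful bargaining to characterize $y$'s acceptance. By Definition \ref{def:tr-formal}, a truthful player votes to accept a motion to end (the case $a_T = \emptyset$) only when it has no bargaining points, i.e. $B_y^T = \emptyset$. By Lemma \ref{lem:no-bargaining}, $B_y^T = \emptyset$ is equivalent to $w_T \in I_{y\,m(x,y,z)}$, so the accepting truthful player $y$ certifies that $\hat{w} = w_T$ lies on a shortest path from $y$ to $m$. The structural step then closes the argument: applying Lemma \ref{lem:median-separates} with the roles $(x,y,z)$ taken by the deviator $x$, the accepting player $y$, and the remaining player $z$, the hypothesis $\hat{w} \in I_{ym}$ (``$\hat{w}$ is on a shortest path from $y$ to $m(x,y,z)$'') yields precisely that $m$ lies on a shortest path from $x$ to $\hat{w}$, i.e. $m \in I_{x\hat{w}}$, which is the desired conclusion.

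I expect the main subtlety to be bookkeeping rather than mathematical depth. One must confirm that truthful bargaining's vote on an ending motion depends only on the current winner (through the bargaining-point set), so the argument is insensitive both to the deviator's past play and to the fact that the subgame may begin in the middle of round $s$; this is what makes the ``including the round in which the initial node lies'' clause harmless. One must also correctly match the accepting honest player to the role of ``$y$'' in Lemma \ref{lem:median-separates}, noting that it suffices to find a \emph{single} truthful accepter even when both truthful players accept. The heavy lifting has already been front-loaded into Lemma \ref{lem:no-bargaining} and Lemma \ref{lem:median-separates}; once those are available, the single-round deviation analysis is a short, essentially case-free argument driven entirely by the termination condition.
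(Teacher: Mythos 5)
Your proposal is correct and follows essentially the same route as the paper's proof: termination forces at least one truthful participant to accept the motion to end, which by Definition \ref{def:tr-formal} means that participant has no bargaining points, so Lemma \ref{lem:no-bargaining} gives $\hat{w} \in I_{ym(x,y,z)}$ and Lemma \ref{lem:median-separates} then yields $m(x,y,z) \in I_{x\hat{w}}$. The only cosmetic difference is in handling the subgame-start caveat (the paper notes the initial node cannot sit mid-vote because votes are simultaneous, while you observe the accept decision depends only on the current winner), and both observations are adequate.
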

\begin{proof}
Let $S_s = \{x, y, z\}$. We know that the initial node of a subgame cannot be in the middle of a vote since all participants vote simultaneously (as captured formally by the information sets). 
Suppose that the TMR round eventually ends. Then we know that two of the three participants must have accepted a motion to end (Definition \ref{def:tmr}). This implies that one of $y, z$ must have accepted a motion to end. Since they are both following truthful bargaining, this means that at least one of them, $y$ without loss of generality, has no bargaining points (Definition \ref{def:tr-formal}). But the only way for $y$ not to have any bargaining points is if $\hat{w}$ is on a shortest path from $y$ to $m(x, y, z)$ (Lemma \ref{lem:no-bargaining}). But from here, it follows easily that $m(x, y, z)$ must be on a shortest path from $x$ to $\hat{w}$ (Lemma \ref{lem:median-separates}).
\end{proof}

\begin{lemma}\label{lem:deviate-complete}
  Consider any subgame of a triadic decision-making process $L = (V, f, g)$ where participants $V$ form a median graph and $g_t$ is determined by a majority rule process between the participants $S_t$. Suppose that all participants follow the truthful bargaining strategy except for one deviating participant $x$. Then the utility received by $x$ is stochastically dominated by the utility received if $x$ did not deviate.
\end{lemma}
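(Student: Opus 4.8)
The plan is to reduce the global comparison to a family of one-dimensional comparisons, one per edge of $E^*$, and then combine them through a monotone coupling. Recall from Lemma~\ref{lem:winsets-4} that $d(x, w) = \sum_{e \in E^*} \mathbbm{1}\{w \text{ and } x \text{ lie in opposite win sets of } e\}$, so the final utility $-d(x, \hat{x})$ decomposes additively over edges. For each edge $e = (u,v)$ with $x \in W_{uv}$, call $W_{vu}$ the \emph{bad side} of $e$ for $x$ and track $X_{e,t}$, the number of tokens lying on the bad side after round $t$. By Lemma~\ref{lem:winsets-1} the win sets partition the nodes and are convex, so $X_{e,t}$ is well defined, and the terminal distance equals $\sum_{e} \mathbbm{1}\{X_{e,T} = |W_{vu}|\}$; it therefore suffices to show that the terminal bad-side indicators are jointly stochastically larger when $x$ deviates.

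The first step is a per-round, per-edge monotonicity claim. Fix a round with triad $S = \{x,y,z\}$ and an edge $e$ with $x \in W_{uv}$. If $x$ follows truthful bargaining the round winner is $m(x,y,z)$, which by Lemma~\ref{lem:median-majority} lies on the bad side exactly when at least two of the three tokens do. If instead $x$ deviates, then by Lemma~\ref{lem:deviate-round} either the round never terminates --- in which case the whole game yields $x$ utility $-\infty$, trivially dominated by any finite outcome --- or the generalized median $m(x,y,z)$ lies on a shortest path from $x$ to the deviation winner $w'$. In the latter case the deviation can only move the combined token from the good side to the bad side, never the reverse: since $m(x,y,z) \in I_{xw'}$ and a shortest path carries $d(x,w')$ separating edges (Lemma~\ref{lem:winsets-4}), the path $x \to w'$ crosses $e$ at most once, so if $m(x,y,z)$ already lies on the bad side then its single crossing occurred on the segment $x \to m(x,y,z)$ and $w'$ is bad as well. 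Hence $\mathbbm{1}\{w' \in W_{vu}\} \ge \mathbbm{1}\{m(x,y,z) \in W_{vu}\}$ for every edge simultaneously: a deviating round weakly pushes each edge's count toward its bad side.

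The second step lifts this to the whole process by a coupling. Under truthful play Lemma~\ref{lem:median-majority} makes each $X_{e,t}$ evolve as the autonomous urn chain of Lemma~\ref{lem:urn-function}, since its increment depends only on how many of the three sampled tokens fall on the bad side, hence only on $X_{e,t}$. I would couple the truthful process $A$ and the deviating process $B$ by drawing the same token indices and the same internal majority-rule randomness in both, and maintain the invariant $X^B_{e,t} \ge X^A_{e,t}$ for every edge $e$. Step~1 supplies the drift inequality at each round in which $x$ participates, while for rounds not involving $x$ both processes apply the median rule, whose monotonicity in this order is the content of the coupling lemma discussed below. At absorption every token sits at a single node in each process, so the invariant forces $\hat{x}_B$ to lie on the bad side of every edge on whose bad side $\hat{x}_A$ lies; thus $d(x, \hat{x}_B) \ge d(x, \hat{x}_A)$ pointwise under the coupling, which is precisely the claimed stochastic dominance of $x$'s utility.

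The hard part will be verifying that a single coupled round preserves the multi-edge domination invariant \emph{simultaneously} across all edges, since the edges are not independent: one triad selection and one winner update act on every edge at once, and domination concerns total counts rather than individual token moves. The delicate point is to construct the coupling of the two differently-positioned configurations so that the net effect of removing three tokens and re-depositing them at the round winner respects the order on each edge; here I would lean on the convexity of win sets and the median identities (Lemmas~\ref{lem:winsets-1} and~\ref{lem:median-graph-axioms}) to show the median move is monotone in this order, and on Step~1 to absorb the additional push contributed by $x$'s deviation. If a fully joint coupling proves too rigid, I would fall back to the per-edge stochastic dominance of $X^B_{e,T}$ over $X^A_{e,T}$, which already yields $\mathbb{E}[d(x,\hat{x}_B)] \ge \mathbb{E}[d(x,\hat{x}_A)]$ by additivity over $E^*$ and suffices for the subgame-perfect equilibrium conclusion of Theorem~\ref{thm:strategic}.
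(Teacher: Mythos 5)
Your overall architecture mirrors the paper's: couple the truthful and deviating processes by drawing the same token indices, use Lemma~\ref{lem:deviate-round} to control rounds in which $x$ deviates, and use the median's majority behavior on win sets for the remaining rounds. But there is a genuine gap exactly where you flag ``the hard part,'' and your proposal does not close it. The invariant you propose to maintain --- the per-edge aggregate counts $X^B_{e,t} \ge X^A_{e,t}$ --- is too weak to survive a single coupled round. Under the index coupling, the round-$(t{+}1)$ increment of $X_{e,\cdot}$ in each process depends on whether the \emph{specific sampled tokens} $i,j,k$ lie on the bad side of $e$ in that process. Knowing only that the totals are ordered does not order the sampled triples: token $1$ could be bad in $B$ but good in $A$ while token $2$ is the reverse, the counts tie, and the sampled triad can then push the two edge-counts in opposite directions. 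So the induction does not go through at the count level, and your fallback (per-edge stochastic dominance of the terminal counts) suffers from the same problem --- and even if it held, it would only give $\mathbb{E}[d(x,\hat{x}_B)] \ge \mathbb{E}[d(x,\hat{x}_A)]$ rather than the stochastic dominance the lemma asserts.

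The missing idea is to strengthen the invariant from aggregate edge-counts to a \emph{per-token} condition: the paper maintains $\hat{y}_i^t \in I_{x\,y_i^t}$ for every token $i$, i.e., each token's position in the truthful process lies on a shortest path from $x$ to that same token's position in the deviating process. Via Lemma~\ref{lem:winsets-5} this implies your count inequality edge by edge, but unlike the count inequality it is preserved by the coupled dynamics: when $x$ is absent from the triad both winners are generalized medians and Lemma~\ref{lem:median-preserves-domin} (proved precisely by the win-set decomposition you invoke) shows the median map is monotone with respect to this per-token order; when $x$ appears once, Lemma~\ref{lem:deviate-round} gives $m \in I_{xw'}$ and composing with Lemma~\ref{lem:median-preserves-domin} again yields $g(\hat{S}_{t+1}) \in I_{x w'}$; when $x$ appears twice the truthful winner is $x$ itself. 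This is the content of Lemma~\ref{lem:inductive-domination}, and at termination it gives $\hat{y}_i^T \in I_{x y_i^T}$, hence $u_x(\hat{y}_i^T) \ge u_x(y_i^T)$ pointwise under the coupling. Your edge decomposition is the right tool, but it belongs inside the proof of the monotonicity lemma, not as the state variable of the coupling.
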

\begin{proof}
Fix any subgame. Let $s$ denote the current TMR round that this subgame is a part of. Recall that $y_i^t$ denotes the participant holding the $i$-th token after the $t$-th round. Since $x$ does not follow the truthful bargaining strategy, we will refer to the resulting sequence of $y_i^t$ as the ``deviating process''. 

Now consider another instance of the same subgame in which all participants including $x$ follow the truthful bargaining strategy. Let $\hat{y}_i^t$ denote the participant holding the $i$-th token after the $t$-th round in this instance. We will refer to the resulting sequence of $\hat{y}_i^t$ as the ``truthful process''.

Since both instances start at the same subgame, the tokens are distributed identically, and we can choose the indices such that $\hat{y}_i^s = y_i^s$ for all $i$. Now consider the evolution of these two processes but, crucially, couple them so that the tokens chosen in each round $I_t$ are identical in both processes. In other words, if $y_i^t, y_j^t, y_k^t$ are chosen for the $t+1$-th round in the deviating process, then $\hat{y}_i^t, \hat{y}_j^t, \hat{y}_k^t$ are chosen for the $t+1$-th round in the truthful process.

Clearly, at round $s$, $\hat{y}_i^s \in I_{xy_i^s}$ for all $i$ since $\hat{y}_i^s = y_i^s$. We then give an inductive argument: if $\hat{y}_i^t \in I_{xy_i^t}$ for all $i$, then either $\hat{y}_i^{t+1} \in I_{xy_i^{t+1}}$ for all $i$ or the deviating process never ends (Lemma \ref{lem:inductive-domination}). Therefore, either the deviating process never ends or, for each $i$ and $t \geq s$, $\hat{y}_i^{t}$ is always on a shortest path from $x$ to $y_i^t$. Clearly, if the deviating process never ends, the utility received by $x$ in the truthful process is greater. If the deviating process does end, then $\hat{y}_i^T$ is on a shortest path from $x$ to $y_i^T$ and therefore, $u_x(\hat{y}_i^T) \geq u_x(y_i^T)$ for all $i$. It follows trivially that the utility received by $x$ in the truthful process stochastically dominates the utility received by $x$ in the deviating process.
\end{proof}

\begin{lemma}\label{lem:inductive-domination}
  Consider the truthful and deviating processes described in the proof of Lemma \ref{lem:deviate-complete}. Let $\hat{y}_i^t$ and $y_i^t$ denote the tokens of the truthful and deviating processes respectively. Suppose that after the $t$-th round, $\hat{y}_i^t \in I_{xy_i^t}$ for all $i$. Then after the $t+1$-th round, either $\hat{y}_i^{t+1} \in I_{xy_i^{t+1}}$ for all $i$ or the deviating process never ends.
\end{lemma}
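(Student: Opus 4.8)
The plan is to prove the inductive step round-by-round, treating each round as a single application of a monotonicity property of the median operation. Fix the round $t+1$ and recall that the two processes are coupled so that the same three token indices, say $i, j, k$, are selected in both. Every token $\ell \notin \{i,j,k\}$ is untouched in both processes, so $\hat y_\ell^{t+1} = \hat y_\ell^t \in I_{x y_\ell^t} = I_{x y_\ell^{t+1}}$ by the inductive hypothesis; thus it suffices to control the three selected tokens. I would then split into two cases according to whether the deviating participant $x$ is one of the deviating-process holders $y_i^t, y_j^t, y_k^t$.

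The key tool, used in both cases, is the following \emph{monotonicity of the median}: if $a' \in I_{xa}$, $b' \in I_{xb}$, and $c' \in I_{xc}$, then $m(a',b',c') \in I_{x\,m(a,b,c)}$. I would establish this through the split (coordinate) embedding of a median graph. By Lemmas \ref{lem:winsets-1} and \ref{lem:winsets-4}, each edge $e \in E^*$ induces a bipartition $(W_{uv}, W_{vu})$ of $V$, and assigning to every node the side it lies on gives coordinates in which $d(\cdot,\cdot)$ is the Hamming distance. Orienting each coordinate so that $x$ is the all-zero vector, Lemma \ref{lem:winsets-5} says precisely that $a' \in I_{xa}$ iff the support of $a'$ is contained in that of $a$, while $m(a,b,c)$ is the coordinatewise majority (it lies in $I_{ab}\cap I_{ac}\cap I_{bc}$, so on every coordinate where two of $a,b,c$ agree it agrees with them). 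Monotonicity then reduces to the trivial observation that on each coordinate, if at least two of $a',b',c'$ equal $1$ and $a'\le a$, $b'\le b$, $c'\le c$, then at least two of $a,b,c$ equal $1$. The same picture makes transitivity of betweenness-toward-$x$ immediate: $p \in I_{xq}$ and $q \in I_{xr}$ give nested supports and hence $p \in I_{xr}$.

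In \emph{Case 1}, where $x \notin \{y_i^t, y_j^t, y_k^t\}$, the selected triad contains no deviator, so both triads resolve to their generalized medians by Lemma \ref{lem:truthful-round}: all three deviating tokens become $m(y_i^t, y_j^t, y_k^t)$ and all three truthful tokens become $m(\hat y_i^t, \hat y_j^t, \hat y_k^t)$. Applying the inductive hypotheses $\hat y_i^t \in I_{x y_i^t}$, etc., monotonicity yields $m(\hat y_i^t, \hat y_j^t, \hat y_k^t) \in I_{x\,m(y_i^t, y_j^t, y_k^t)}$, which is exactly $\hat y_\ell^{t+1} \in I_{x y_\ell^{t+1}}$ for $\ell \in \{i,j,k\}$.

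In \emph{Case 2}, where $x$ is among $y_i^t, y_j^t, y_k^t$, note first that any token held by $x$ in the deviating process is also held by $x$ in the truthful process, since the inductive hypothesis forces $\hat y_\ell^t \in I_{xx} = \{x\}$ whenever $y_\ell^t = x$; thus the two triads share the position $x$ and differ only on the at most two remaining holders, which I write as $b, c$ in the deviating process and $\hat b, \hat c$ in the truthful process, where $\hat b \in I_{xb}$ and $\hat c \in I_{xc}$. The truthful triad resolves to $m(x, \hat b, \hat c)$ by Lemma \ref{lem:truthful-round}. For the deviating triad I would invoke Lemma \ref{lem:deviate-round}: either its round never terminates, in which case the deviating process never ends and the second disjunct of the claim holds, or its winner $w'$ satisfies $m(x, b, c) \in I_{xw'}$. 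In the latter case, monotonicity gives $m(x, \hat b, \hat c) \in I_{x\,m(x,b,c)}$, and transitivity of betweenness-toward-$x$ then gives $m(x, \hat b, \hat c) \in I_{xw'}$; since all three selected tokens become $m(x,\hat b,\hat c)$ and $w'$ respectively, the hypothesis is restored. The one genuinely substantive step is the monotonicity property, and the split embedding is what turns it from an awkward median-algebra identity (provable, but tediously, from Lemma \ref{lem:median-graph-axioms}) into a one-line majority argument.
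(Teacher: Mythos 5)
Your proposal is correct and follows essentially the same route as the paper: condition on how the deviator $x$ appears in the selected triad, resolve the deviator-free truthful triad with Lemma \ref{lem:truthful-round}, resolve the deviating triad with Lemma \ref{lem:deviate-round}, and transfer the betweenness invariant using the median-monotonicity fact, which is exactly the paper's Lemma \ref{lem:median-preserves-domin} (and your split-embedding derivation of it is the same win-set argument the paper uses to prove that lemma, with the transitivity of betweenness-toward-$x$ made explicit where the paper leaves it implicit). The one structural deviation is that you fold the paper's ``$x$ represented twice'' case into the Lemma \ref{lem:deviate-round} case; this does go through, because the deviating triad's median is then $x$ itself and $x \in I_{xw'}$ for every $w'$, but be aware that the paper's proof of Lemma \ref{lem:deviate-round} tacitly assumes the deviator occupies only one of the three seats (it argues that one of the \emph{other two} participants must accept the motion to end), which is presumably why the paper handles the doubly-represented case separately rather than citing that lemma for it.
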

\begin{proof}
Let $\hat{S}_t$ and $S_t$ denote the small groups of the truthful and deviating processes respectively. Let $I_{t+1} = \{i, j, k\}$, so that $\hat{S}_{t+1} = \{\hat{y}_i^{t}, \hat{y}_j^t, \hat{y}_k^t\}$ and $S_{t+1} = \{y_i^t, y_j^t, y_k^t\}$. We will consider three separate cases.

{\it Case 1: $x$ is represented twice in $S_{t+1}$.} By our assumption that $\hat{y}_i^t \in I_{xy_i^t}$, it must be true that $x$ is also represented twice in $\hat{S}_{t+1}$. Then $g(\hat{S}_{t+1}) = x$, so we have our result trivially regardless of what $g(S_{t+1})$ is.

{\it Case 2: $x$ is not represented in $S_{t+1}$.} Since $x$ is not represented in $S_{t+1}$, then all agents $y_i^t, y_j^t, y_k^t$ follow truthful bargaining. By the lemma assumption, all agents $\hat{y}_i^t, \hat{y}_j^t, \hat{y}_k^t$ follow truthful bargaining. Therefore, $g(S_{t+1}) = \text{generalized-median}(S_{t+1})$ (Lemma \ref{lem:truthful-round}) and $g(\hat{S}_{t+1}) = \text{generalized-median}(\hat{S}_{t+1})$ (Lemma \ref{lem:truthful-round}). Then our result follows from Lemma \ref{lem:median-preserves-domin}.
  
{\it Case 3: $x$ is represented once in $S_{t+1}$.} Let $m$ denote $g(S_{t+1})$ if $x$ follows truthful bargaining. Then if $x$ deviates, either the process never ends, or the winner $m' = g(S_{t+1})$ satisfies $m \in I_{xm'}$ (Lemma \ref{lem:deviate-round}). By the lemma assumption, all agents $\hat{y}_i^t, \hat{y}_j^t, \hat{y}_k^t$ follow truthful bargaining, so that $g(\hat{S}_{t+1}) = \text{generalized-median}(\hat{S}_{t+1})$. Then $g(\hat{S}_{t+1}) \in I_{xm}$ (Lemma \ref{lem:median-preserves-domin}). Therefore, either the deviating process never ends or $g(\hat{S}_{t+1}) \in I_{xm'}$.
\end{proof}

\begin{lemma}\label{lem:median-preserves-domin}
  Consider $x, y, z, x', y', z', u \in V$, where $V$ is a median graph and suppose that $x \in I_{ux'}, y \in I_{uy'}, z \in I_{uz'}$. Then, $m(x, y, z) \in I_{um(x', y', z')}$.
\end{lemma}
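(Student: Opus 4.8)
The plan is to sidestep the median-algebra identities entirely and reduce the statement to a one-line majority fact on each ``cut'' of the graph. For an edge $e=(u_e,v_e)$ the win sets $W_{u_e v_e}$ and $W_{v_e u_e}$ partition $V$ (Lemma~\ref{lem:winsets-1}); I will say a node \emph{agrees with the anchor $u$ at $e$} if it lies in the same win set as $u$. Two translations drive the whole argument. First, betweenness is coordinatewise: by Lemma~\ref{lem:winsets-5}, for any nodes $a,b,w$ we have $w\in I_{ab}$ if and only if, for every edge $e$, $w$ lies in $W_{u_e v_e}$ whenever both $a$ and $b$ do (``$w$ agrees wherever the endpoints agree''). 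Second, the median is coordinatewise majority: by Lemma~\ref{lem:median-majority}, for every edge $e$ the node $m(x,y,z)$ lies in whichever of $W_{u_e v_e}$, $W_{v_e u_e}$ contains at least two of $x,y,z$; equivalently, $m(x,y,z)$ agrees with $u$ at $e$ exactly when at least two of $x,y,z$ do.

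With these in place I would fix a single edge $e$ and read off what the hypotheses and the goal say there. Using the betweenness translation, $x\in I_{ux'}$ says precisely that, at $e$, if $x'$ agrees with $u$ then $x$ agrees with $u$, and likewise $y\in I_{uy'}$ and $z\in I_{uz'}$ give the analogous implications for the pairs $(y,y')$ and $(z,z')$. Applying the median translation to $m(x',y',z')$ and to $m(x,y,z)$, the conclusion $m(x,y,z)\in I_{u\,m(x',y',z')}$ amounts to: at every edge $e$, if at least two of $x',y',z'$ agree with $u$, then at least two of $x,y,z$ agree with $u$. Thus the entire lemma collapses to this Boolean implication, to be verified independently at each edge.

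The Boolean implication is immediate. Suppose at least two of $x',y',z'$ agree with $u$ at $e$; without loss of generality $x'$ and $y'$ do. The hypotheses then force $x$ and $y$ to agree with $u$ at $e$, so at least two of $x,y,z$ agree with $u$, which is exactly what is required. Since this holds at every edge, $m(x,y,z)$ agrees with $u$ at each edge where $m(x',y',z')$ does, and one final appeal to Lemma~\ref{lem:winsets-5} repackages this as $m(x,y,z)\in I_{u\,m(x',y',z')}$, completing the proof.

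The main point is conceptual rather than computational: the real step is recognizing that both the relation ``$\in I_{\cdot\cdot}$'' and the ternary median simultaneously diagonalize over the cuts of the graph (Lemmas~\ref{lem:winsets-5} and~\ref{lem:median-majority}), after which three-argument monotonicity is a trivial majority count. I would note for contrast that a purely algebraic route also exists — reduce to monotonicity in one argument by chaining $m(x,y,z)\preceq_u m(x',y,z)\preceq_u m(x',y',z)\preceq_u m(x',y',z')$ along the ``between-$u$'' relation (whose transitivity follows from the triangle inequality) and then expand $m(u,m(p,r,s),m(q,r,s))$ via the median identity of Lemma~\ref{lem:median-graph-axioms}, substituting the hypothesis $m(u,p,q)=p$ to collapse terms — but that expansion needs a fiddly case analysis, so I expect the cut-based argument above to be both shorter and far less error-prone.
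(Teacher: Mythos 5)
Your proof is correct and follows essentially the same route as the paper's: fix a cut $e$, use Lemma~\ref{lem:winsets-5} to read betweenness as ``agrees with $u$ wherever the endpoints do,'' use Lemma~\ref{lem:median-majority} to read the median as coordinatewise majority, and finish with the two-out-of-three count. The only difference is presentational — you make the edgewise translation more explicit than the paper does.
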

\begin{proof}
 Fix any edge $e \in V$ and consider the partitions induced by its win sets (see Lemma \ref{lem:winsets-1}). By Lemma \ref{lem:median-majority}, we know that $m(x', y', z')$ is in the same win set as $u$ if and only if at least two of $x', y', z'$ are in the same win set as $u$. By Lemma \ref{lem:winsets-5}, we know that this must imply that at least two of $x, y, z$ are in the same win set as $u$, which means that $m(x, y, z)$ must also be in the same win set as $u$ (Lemma \ref{lem:median-majority}). 
  Therefore, for every edge, whenever $m(x', y', z')$ is in the same win set as $u$, $m(x, y, z)$ is also in the same win set. By Lemma \ref{lem:winsets-5}, this implies that $m(x, y, z)$ lies on a shortest path from $u$ to $m(x', y', z')$, which concludes our proof.
\end{proof}


\end{document}